\newtheorem{remark}{Remark}
\newtheorem{challenge}{Challenge}
\newcommand{\siml}{\stackrel{\beta\to\infty}{\sim}}
\newcommand{\sims}{\stackrel{\beta\to0}{\sim}}
\newcommand{\dis}{\stackrel{d}{\sim}}
\newcommand{\eqla}{\stackrel{(a)}{=}}
\newcommand{\eqlb}{\stackrel{(b)}{=}}
\newcommand{\eqlc}{\stackrel{(c)}{=}}
\newcommand{\eqld}{\stackrel{(d)}{=}}
\newcommand{\eqle}{\stackrel{(e)}{=}}
\newcommand{\eqlf}{\stackrel{(f)}{=}}
\newtheorem{theorem}{Theorem}
\newtheorem{lemma}{Lemma}
\newtheorem{corollary}{Corollary}
\newtheorem{proposition}{Proposition}
\newcommand{\define}{\stackrel{\Delta}{=}}
\newcommand{\papertitle}{User-Centric Interference Nulling in Downlink Multi-Antenna Heterogeneous Networks}
\begin{document}
\title{\papertitle}
\author{Ying Cui, \hspace{5mm}Yueping Wu, \hspace{5mm}Dongdong Jiang,
\hspace{5mm}Bruno Clerckx\\
\thanks{
Y. Cui and D. Jiang are with the Department of  Electronic Engineering, Shanghai Jiao Tong University, China.
Y.~Wu and B.~Clerckx are with Department of Electrical and Electronic Engineering, Imperial College London, United Kingdom. B.~Clerckx is also with School of Electrical Engineering, Korea University, Korea. This work was presented in part in IEEE ISIT 2015.}
}
\maketitle


\begin{abstract}
In heterogeneous networks (HetNets), strong interference due to spectrum reuse affects each user's  signal-to-interference ratio (SIR), and hence is one limiting factor of network performance.
In this paper, we propose a user-centric interference nulling (IN) scheme in a downlink large-scale HetNet to improve coverage/outage probability by   improving each user's SIR. This IN scheme 
utilizes at most  {\em maximum IN degree of freedom (DoF)} at each macro-BS  to avoid interference to uniformly selected macro (pico) users with signal-to-individual-interference ratio (SIIR) below a {\em macro (pico) IN threshold}, where the maximum IN DoF and the two IN thresholds  are three design parameters. 
Using tools from stochastic geometry, we first obtain a tractable expression of the coverage (equivalently outage) probability. Then, we analyze the asymptotic coverage/outage probability in the low and high SIR threshold regimes. The  analytical results indicate 
that the maximum IN DoF can affect the order gain of the  outage probability in the low SIR threshold regime, but cannot affect the order gain of the   coverage probability in the high SIR threshold regime. 
Moreover, we characterize the optimal maximum IN DoF which optimizes the asymptotic coverage/outage probability. The optimization results reveal that 
the IN scheme can linearly improve the  outage probability  in the low SIR threshold regime, but cannot improve the  coverage  probability in the high SIR threshold regime. 
Finally, numerical results show that the proposed scheme can achieve good gains  in coverage/outage probability  over a maximum ratio beamforming scheme  and a user-centric almost blank subframes (ABS) scheme.
\end{abstract}

\begin{keywords}
Heterogeneous networks,  multiple antennas, inter-tier interference coordination,  stochastic geometry, optimization.
\end{keywords}

\newpage

\section{Introduction}

Heterogenous wireless networks (HetNets), i.e., the deployment of low power small cell  base stations (BSs)  overlaid with conventional large power macro-BSs, provide a powerful approach to meet the massive growth in traffic demands by aggressively reusing existing spectrum assets \cite{zhang11,Ghosh12}. 
However, spectrum reuse in HetNets causes strong interference. This affects the signal-to-interference ratio (SIR) of each user, and hence is one of the limiting factors of network performance. Interference management techniques are thus desirable in HetNets\cite{Leemagazine12}. One such technique is interference cooperation. For example,  in \cite{nigam14,nie14,sakr14}, different interference cooperation strategies are considered and their performances are analyzed for large-scale HetNets under random models  in which the locations of  BSs and users  are spatially distributed as independent homogeneous Poisson point processes (PPPs)\cite{andrews11,andrews12}.
However, in \cite{nigam14,nie14,sakr14}, the cooperation clusters are formed to favor  a typical user located at the origin of the network (referred to as the typical user)  only, and hence, the analytical performance of the typical user is better than the actual network performance (of all the users). In addition, \cite{nigam14,nie14,sakr14} only consider single-antenna BSs. Orthogonalizing the time or frequency resource allocated to macro cells and small cells can also mitigate interference in HetNets.  One such technique is almost blank subframes (ABS) in 3GPP LTE \cite{singh13}. In ABS, the time or frequency resource is partitioned, whereby  offloaded users and the other users are served using different portions of the resource  in HetNets with offloading. The performance of ABS in large-scale HetNets with offloading is analyzed in \cite{singh13} using tools from stochastic geometry. Note that ABS focuses on mitigating the interference of offloaded users, and
\cite{singh13} only considers single-antenna BSs.

Deploying multiple antennas at each BS in HetNets can further improve network performance. With multiple antennas, besides boosting   signals to desired users, more effective interference management techniques can be implemented \cite{Adhikary14,Hosseini13,Kountouris13,NguyenTWC13,wu14,xia13}.
For example,  in \cite{Adhikary14,Hosseini13,Kountouris13,NguyenTWC13}, the authors consider HetNets with a single multi-antenna macro-BS and multiple multi-antenna small-BSs, where the multiple antennas at the macro-BS are used for serving its scheduled users as well as mitigating the interference to  some small cell users using different interference coordination schemes. These schemes are analyzed and shown to improve the network performance.  In particular,  \cite{NguyenTWC13} also considers multiple antennas at each user, and proposes an opportunistic interference alignment scheme to design the transmit and receive beamformers to mitigate interference. Each small BS is assumed to have a different nearest victim small user, and  victim user selection is not considered.   Note that since only one macro-BS is considered in \cite{Adhikary14,Hosseini13,Kountouris13,NguyenTWC13}, the analytical results obtained in \cite{Adhikary14,Hosseini13,Kountouris13,NguyenTWC13} cannot reflect the macro-tier interference, and thus may not offer accurate insights for practical HetNets.  In \cite{wu14,xia13}, large-scale multi-antenna HetNets are considered.
 Specifically, \cite{wu14} considers  offloading, and proposes an interference nulling  (IN) scheme where some degree of freedom at each macro-BS can be used for avoiding its interference to some of its offloaded users.  The rate coverage probability is analyzed and optimized by optimizing the amount of degree of freedom (DoF) for interference nulling. However, the IN scheme proposed in \cite{wu14}  only improves the performance of  scheduled  offloaded users,   and scheduled offloaded users are selected by the corresponding macro-BS for interference nulling with equal probability. Hence,  the IN scheme proposed in \cite{wu14} may not effectively improve the overall  rate coverage probability. In \cite{xia13}, a fixed number of BSs which provide the strongest average received power for the typical user form a cluster, and adopt an interference coordination scheme where the BSs in each cluster mitigate interference to users in this cluster.  The coverage probability is analyzed based on the assumption that the BSs in each cluster are the strongest BSs of all the users in this cluster.

The investigation of interference management techniques  in large-scale single-tier  multi-antenna cellular networks is less involved than that in large-scale multi-antenna HetNets, and hence has been more extensively conducted  (see \cite{Akoum13,huang13,Ying11,li14IC}  and the references therein).
 In \cite{Akoum13,huang13,Ying11}, all the BSs are grouped into disjoint clusters. Coordination \cite{Akoum13,huang13} and cooperation \cite{Ying11} are performed among  the BSs within each cluster to mitigate intra-cluster interference. Specifically, \cite{Akoum13} and \cite{huang13} design  disjoint BS clustering from a transmitter's point of view and fail to consider each user's interference situation. The dynamic clustering proposed in \cite{Ying11} considers all the users' signal and interference situations  to optimize the network performance. However, it requires centralized control and may not be suitable for large networks.
Recently, a novel distributed user-centric IN scheme, which takes account of each user's desired signal strength  and interference level, is proposed and analyzed for  (single-tier) multi-antenna small cell networks  in \cite{li14IC}. However, in \cite{li14IC}, the maximum DoF for IN (i.e., maximum IN DoF) at each BS is not adjustable, and  thus cannot properly utilize   resource in small cell networks. Moreover, directly applying the scheme in \cite{li14IC} to HetNets cannot fully exploit different properties of  macro and pico users in HetNets. 

In this paper, we consider a downlink large-scale two-tier multi-antenna HetNet and propose a user-centric IN scheme to improve  the coverage probability  by improving each user's SIR.   This scheme has three design parameters: the maximum IN DoF $U$, and the IN thresholds for  macro and pico users, respectively.  In this scheme, each scheduled  macro (pico) user first sends an IN request to a macro-BS\footnote{Note that, compared to a pico-BS, a macro-BS usually causes stronger interference due to larger transmit power, and has a better capability of performing spatial cancellation due to  a larger number of transmit antennas. Thus, it is more advisable to perform IN at macro-BSs.} if the power ratio of its desired signal and the interference from the macro-BS, referred to as the signal-to-individual-interference ratio (SIIR), is below the IN threshold for macro (pico) users. Then, each macro-BS utilizes zero-forcing beamforming (ZFBF) precoder to avoid interference to at most $U$ scheduled users which send IN requests to it as well as boost the desired signal to its scheduled user.  In general, the performance analysis and optimization of interference management techniques in large-scale multi-antenna HetNets are very challenging, mainly due to i) the statistical dependence among macro-BSs and pico-BSs \cite{Adhikary14}, ii) the complex distribution of a desired signal using multi-antenna communication schemes, and iii) the complicated interference distribution caused by interference management techniques (e.g., beamforming).  Our main contributions are summarized below. The analytical and numerical  results obtained in this paper provide valuable design insights for practical HetNets.
\begin{itemize}
\item We obtain a tractable expression of the coverage (equivalently outage) probability, by adopting appropriate approximations and utilizing tools from stochastic geometry.
\item We obtain the asymptotic expressions of the coverage/outage probability in the  low and high SIR threshold regimes, using series  expansions of special functions. The  analytical results indicate that 
the maximum IN DoF can affect the order gain of the outage probability in the low SIR threshold regime, but cannot affect the order gain of the  coverage probability in the high SIR threshold regime; the IN thresholds only affect the coefficients of the  coverage/outage probability  in the low and high SIR threshold regimes.
\item  We consider the optimizations of the maximum IN DoF for given IN thresholds in the two asymptotic  regimes, which are challenging  integer programming problems with very complicated objective functions.  By exploiting the structure of each objective function, we characterize the optimal maximum IN DoF.
The optimization results reveal that the IN scheme can linearly improve the  outage probability in the low SIR threshold regime, but cannot improve the  coverage probability in the high SIR threshold regime.
\item We  show that  the  IN scheme can achieve good gains in coverage/outage probability over a maximum ratio beamforming scheme and a user-centric  ABS scheme, using numerical results.
\end{itemize}

The key notations used in the paper are listed in Table \ref{tab:para_system}. 
\begin{table}[t]
\caption{Key notations.}\label{tab:para_system}
\begin{center}
\vspace{-10mm}
\begin{scriptsize}
\begin{tabular}{|c!{\vrule width 1.5pt}c|}
\hline
Notation&Description\rule{0pt}{3mm}\\
\hline
$\Phi_{j}$, $\Phi_{u}$ & PPP of BSs in the $j$th tier, PPP of users\\
\hline
$\lambda_{j}$, $\lambda_{u}$& Density of PPP $\Phi_{j}$, density of PPP $\Phi_{u}$\\
\hline
$P_{j}$, $N_{j}$& Transmit power at each BS in the $j$th tier, number of transmit antennas at each BS in the $j$th tier\\
\hline
$\alpha_{j}$& Path loss exponent in the $j$th tier\\
\hline
$\mathcal{U}_j$ &Set of macro-users ($j=1$), set of pico-users ($j=2$)\\
\hline
$Y_j$&Distance between the typical user and its serving BS in the $j$th tier\\
\hline
$\mathcal{A}_j$ & Association probability of the typical user to $\mathcal{U}_j$\\
\hline
$K_0$ & Number of the potential IN users of an arbitrary macro-BS\\
\hline
$\mathcal{S}$, $\beta$ & SIR coverage probability, SIR threshold\\
\hline
$U$, $T_j$ & Maximum IN DoF, IN threshold for the $j$th tier  in the IN scheme\\
\hline
\end{tabular}
\end{scriptsize}
\vspace{-6mm}
\end{center}
\end{table}


\section{Network Model}
We consider a two-tier HetNet where a macro-cell tier is overlaid with a pico-cell tier, as shown in Fig.\ \ref{fig:model}. The locations of  macro-BSs and pico-BSs are spatially distributed as two independent homogeneous Poisson point processes (PPPs) $\Phi_{1}$ and $\Phi_{2}$ with densities $\lambda_{1}$ and $\lambda_{2}$, respectively. The locations of users are also distributed as an independent homogeneous PPP $\Phi_{u}$ with density $\lambda_{u}$. Without loss of generality, denote the macro-cell tier as the $1$st tier and the pico-cell tier as the $2$nd tier. We focus on the downlink scenario. The macro-BSs and the pico-BSs share the same spectrum concurrently.  Each macro-BS has $N_{1}$ antennas with total transmission power $P_{1}$, each pico-BS has $N_{2}$ antennas with total transmission power $P_{2}$, and each user has a single antenna. Assume $N_{1}>N_{2}$. We consider both large-scale fading and small-scale fading. Specifically, due to large-scale fading, transmitted signals from the $j$th tier with distance $r$ are attenuated by a factor $r^{-\alpha_{j}}$, where $\alpha_{j}>2$ is the path loss exponent of the $j$th tier and $j=1,2$. For small-scale fading, we assume Rayleigh fading channels.

\subsection{User Association}\label{subset:user_asso}
We assume open access \cite{nigam14}. User $i$ (denoted as $u_{i}$) is associated with the BS which provides the maximum \emph{long-term average}  (over small-scale fading) received power among all the macro-BSs and pico-BSs. This associated BS is called the \emph{serving BS} of user $i$. Note that within each tier, the nearest BS to user $i$ provides the strongest long-term average  received power in this tier. User $i$ is thus associated with (the nearest BS in) the $j^{*}_{i}$th tier, if\footnote{In the user association procedure, the first antenna is normally used to transmit signal (using the total transmission power of each BS) for received power determination according to LTE standards.} $j_{i}^{*}=  {\arg\:\max}_{j\in\{1,2\}}P_{j}Z_{i,j}^{-\alpha_{j}}$,
where $Z_{i,j}$ is the distance between user $i$ and its nearest BS in the $j$th tier. We refer to the users associated with the macro-cell tier as the \emph{macro-users}, denoted as $\mathcal{U}_{1}\triangleq\left\{u_{i}|P_{1}Z_{i,1}^{-\alpha_{1}}\ge P_{2}Z_{i,2}^{-\alpha_{2}}\right\}$, and the users associated with the pico-cell tier as the \emph{pico-users}, denoted as $\mathcal{U}_{2}\triangleq\left\{u_{i}|P_{2}Z_{i,2}^{-\alpha_{2}}>P_{1}Z_{i,1}^{-\alpha_{1}}\right\}$. All the users can be partitioned into two disjoint user sets: $\mathcal{U}_{1}$ and $\mathcal{U}_{2}$. After the user association, each BS schedules its associated users according to TDMA, i.e., scheduling  one user in each time slot. Hence, there is no intra-cell interference.


\subsection{Performance Metric}
In this paper, we study the performance of the typical user denoted as $u_{0}$, which is a scheduled user located at the origin \cite{Tanbourgi14}. Since HetNets are interference-limited, we ignore the thermal noise in the analysis of this paper. Note that the analytical results with thermal noise can be obtained in a similar way\cite{Hunter08TWC}. The \emph{coverage probability} of $u_{0}$  is defined as the probability that the SIR of $u_{0}$ is larger than a threshold \cite{nigam14}, i.e., 
\small{\begin{align}\label{eq:CP_def}
\mathcal{S}(\beta)&\define{\rm Pr}\left({\rm SIR}_{0}>\beta\right)
\end{align}}\normalsize where $\beta$ is the SIR threshold.  The \emph{outage probability} of $u_{0}$  is defined as the probability that the SIR of $u_{0}$ is smaller than or equal to a threshold, i.e., $1-\mathcal{S}(\beta)$.  The coverage/outage probability provides the cumulative probability function (c.d.f.) of the random SIR over the entire
network\cite{andrews11}. In Sections~\ref{sec:CP_general}, \ref{sec:CP_low} and \ref{sec:CP_high}, we shall analyze the coverage/outage probability in the general, low and high SIR threshold regimes, separately. 

\section{User-centric Interference Nulling Scheme}
In this section, we first elaborate on a user-centric IN scheme. 
Then, we obtain some distributions related to this scheme.

\begin{figure}[t] \centering
\includegraphics[width=0.35\columnwidth]{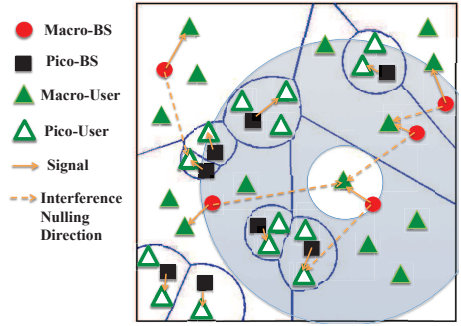}
\caption{\small{System Model ($U=1$).}}\label{fig:model}
\end{figure}

\subsection{Scheme Description}\label{subsec:IN_descp}
First, we refer to an interfering macro-BS which causes the SIIR at  scheduled user $i$ in the $j$th tier ($u_{i}\in\mathcal{U}_{j}$) below threshold $T_{j}\geq1$ as a \emph{potential IN macro-BS} of $u_{i}$, where $j=1,2$.  We refer to $T_{j}$ as the {\em IN threshold} for the $j$th tier. Mathematically, interfering macro-BS $\ell$ is a potential IN macro-BS of scheduled user $u_{i}\in\mathcal{U}_{j}$ if $\frac{P_{j}Z_{i,j}^{-\alpha_{j}}}{P_{1}D_{1,\ell i}^{-\alpha_{1}}}<T_{j}$, where $D_{1,\ell i}$ is the distance from  macro-BS $\ell$ to $u_{i}$. Note that $T_{1}$ and $T_{2}$ are two design parameters of the IN scheme. In each time slot, each scheduled user sends IN requests to all of its potential IN macro-BSs. We refer to the scheduled users which send IN requests to interfering macro-BS $\ell$ as the \emph{potential IN users} of  interfering macro-BS $\ell$ (in this time slot). We introduce  another design parameter $U\in\{0,1,\cdots, N_1-1\}$ of this IN scheme, referred to as the {\em maximum IN DoF}.
Consider a particular time slot. Let $K_{\ell}$ denote the number of the potential IN users of  interfering macro-BS $\ell$.  Note that $T_1=T_2=1$ implies $K_{\ell}=0$.  Consider two cases in the following. i) If $K_{\ell}>0$ and $U>0$, macro-BS $\ell$ makes use of at most $U$   DoF to perform IN to some of its potential IN users.  In particular, if $0<K_{\ell}\le U$, macro-BS $\ell$ can perform IN to all of its $K_{\ell}$ potential IN users using $K_{\ell}$ DoF; if $K_{\ell}>U$, macro-BS $\ell$ randomly selects $U$ out of its $K_{\ell}$ potential IN users according to the uniform distribution, and perform IN to the selected $U$ users using $U$ DoF. Hence, in this case, macro-BS $\ell$ performs IN to $u_{{\rm IN},\ell}\define\min\left(U,K_{\ell}\right)$ potential IN users (referred to as the \emph{IN users} of macro-BS $\ell$) using $u_{{\rm IN},\ell}$ DoF (referred to as the \emph{IN DoF} of macro-BS $\ell$). ii) If $K_{\ell}=0$ or $U=0$, macro-BS $\ell$ does not perform IN. In this case, we let $u_{{\rm IN},\ell}=0$. In both cases, $N_{1}-u_{{\rm IN},\ell}$ DoF at macro-BS $\ell$ is used for boosting the desired signal to its scheduled user.

Now, we introduce the precoding vectors at macro-BSs in the IN scheme. Consider two cases in the following. i) If $K_{\ell}>0$ and $U>0$, macro-BS $\ell$ utilizes the low-complexity ZFBF precoder to serve its scheduled user and simultaneously perform IN to its $u_{{\rm IN},\ell}$ IN users. Specifically,  denote $\mathbf{H}_{1,\ell}=\left[\mathbf{h}_{1,\ell}\; \mathbf{g}_{1,\ell1}\;\ldots\;\mathbf{g}_{1,\ell u_{{\rm IN},\ell}}\right]^{\dagger}$, where $\mathbf{h}_{1,\ell}\dis\mathcal{CN}_{N_{1},1}\left(\mathbf{0}_{N_{1}\times 1},\mathbf{I}_{N_{1}}\right)$ denotes the channel vector between macro-BS $\ell$ and its scheduled user, and $\mathbf{g}_{1,\ell i}\dis \mathcal{CN}_{N_{1},1}\left(\mathbf{0}_{N_{1}\times 1},\mathbf{I}_{N_{1}}\right)$ denotes the channel vector between macro-BS $\ell$ and its $i$th IN user $(i=1,\ldots,u_{{\rm IN},\ell})$. The ZFBF precoding matrix at macro-BS $\ell$ is designed to be  the pseudo-inverse of $\mathbf{H}_{1,\ell}$, i.e.,  $\mathbf{W}_{1,\ell}=\mathbf{H}_{1,\ell}^{\dagger}\left(\mathbf{H}_{1,\ell}\mathbf{H}_{1,\ell}^{\dagger}\right)^{-1}$ and the ZFBF vector at macro-BS $\ell$ is designed to be  $\mathbf{f}_{1,\ell}=\frac{\mathbf{w}_{1,\ell}}{\|\mathbf{w}_{1,\ell}\|}$, where $\mathbf{w}_{1,\ell}$ is the first column of $\mathbf{W}_{1,\ell}$ \cite{WieselTSP08}. ii) If $K_{\ell}=0$ or $U=0$, macro-BS $\ell$ uses the maximal ratio transmission (MRT) precoder to serve its scheduled user, which is a special case of the ZFBF precoder introduced for $K_{\ell}>0$ and $U>0$, and can be readily obtained from it by letting $u_{{\rm IN},\ell}=0$, i.e., $\mathbf{H}_{1,\ell}=\mathbf{h}_{1,\ell}^{\dagger}$. Next, we introduce the precoding vectors at pico-BSs. Each pico-BS utilizes the MRT precoder to serve its scheduled user. Specifically, the beamforming vector at pico-BS $\ell$ is $\mathbf{f}_{2,\ell}=\frac{\mathbf{h}_{2,\ell}}{\left\|\mathbf{h}_{2,\ell}\right\|}$,
where $\mathbf{h}_{2,\ell}\dis\mathcal{CN}_{N_{2},1}\left(\mathbf{0}_{N_{2}\times 1},\mathbf{I}_{N_{2}}\right)$ denotes the channel vector between pico-BS $\ell$ and its scheduled user. Note that the simple beamforming scheme (without interference management) can be included in the IN scheme as a special case by letting $T_{1}=T_{2}=1$ and/or $U=0$.  Note that all the analytical results in this paper hold for $T_{1}=T_{2}=1$ and/or $U=0$.

 Let $\mathbf{h}_{j,00}$ denote the channel vector between $u_{0}\in\mathcal{U}_{j}$ and its serving BS $B_{j,0}$, $D_{j,\ell0}$   denote the distance between $u_{0}$ and BS $\ell$ in the $j$th tier, $Y_{j}$ denote the distance between $u_{0}$ and $B_{j,0}$, and $\mathbf{f}_{j,0}$ denote the beamforming vector at $B_{j,0}$, with $\left|\mathbf{h}_{j,00}^{\dagger}\mathbf{f}_{j,0}\right|^{2}\dis{\rm Gamma}\left(M_{j},1\right)$  (i.e., $\chi_{2M_j}^2$), $M_{1}=N_{1}-u_{{\rm IN},0}$ and $M_{2}=N_{2}$ \cite[Lemma 1]{JindalTCOM11}.   Here, the notation $X \dis Y$ means that $X$ \emph{is distributed as} $Y$.   Let $\mathbf{h}_{j,\ell0}$ denote the channel vector between $u_0$ and BS $\ell$ in the $j$th tier, and  $\mathbf{f}_{j,\ell}$ denote the beamforming vector at BS $\ell$ in the $j$th tier, with $\left|\mathbf{h}_{j,\ell0}^{\dagger}\mathbf{f}_{j,\ell}\right|^{2}\dis{\rm Gamma}(1,1)$ (i.e., $\chi_2^2$)   \cite[Lemma 1]{JindalTCOM11}.  Let $x_{j,\ell}$ denote the symbol sent from BS $\ell$ in the $j$th tier to its scheduled user satisfying ${\rm E}\left[x_{j,\ell}x_{j,\ell}^{*}\right]=P_{j}$.
 Let $\Phi_{j,1C}$  denote the potential IN macro-BSs  of $u_0\in \mathcal U_j$ which do not select it for IN. Let
$\Phi_{j,1O}$ denote the  interfering macro-BSs of $u_0\in \mathcal U_j$ which are not its potential IN macro-BSs.  Let $\Phi_{j,2}$ denote the interfering pico-BSs of $u_0\in \mathcal U_j$.  
 As in \cite{singh13,andrews11}, we assume that all macro-BSs and pico-BSs are   active.
 We now discuss the received signal of  $u_0$.
  \begin{enumerate}
  \item Macro-User: The received signal of the typical user $u_{0}\in \mathcal{U}_{1}$ is
\small{\begin{align}
&y_{1,0}=Y_{1}^{-\frac{\alpha_{1}}{2}}\mathbf{h}_{1,00}^{\dagger}\mathbf{f}_{1,0}x_{1,0}+\sum_{\ell\in\Phi_{1,1C}}D_{1,\ell0}^{-\frac{\alpha_{1}}{2}}\mathbf{h}_{1,\ell0}^{\dagger}\mathbf{f}_{1,\ell}x_{1,\ell}+\sum_{\ell\in\Phi_{1,1O}}D_{1,\ell0}^{-\frac{\alpha_{1}}{2}}\mathbf{h}_{1,\ell0}^{\dagger}\mathbf{f}_{1,\ell}x_{1,\ell}+\sum_{\ell\in\Phi_{1,2}}D_{2,\ell0}^{-\frac{\alpha_{2}}{2}}\mathbf{h}^{\dagger}_{2,\ell0}\mathbf{f}_{2,\ell}x_{2,\ell}.\label{eq:y0_macro}
\end{align}}\normalsize
Note that $\Phi_{1,1C}\cup\Phi_{1,1O}\cup\{B_{1,0}\}\subseteq\Phi_1$ and $\Phi_{1,2}=\Phi_2$.
\item Pico-User:
The received signal of the typical user $u_{0}\in\mathcal{U}_{2}$ is
\small{\begin{align}
&y_{2,0}=Y_{2}^{-\frac{\alpha_{2}}{2}}\mathbf{h}_{2,00}^{\dagger}\mathbf{f}_{2,0}x_{2,0}+\sum_{\ell\in\Phi_{2,1C}}D_{1,\ell0}^{-\frac{\alpha_{1}}{2}}\mathbf{h}_{1,\ell0}^{\dagger}\mathbf{f}_{1,\ell}x_{1,\ell}+\sum_{\ell\in\Phi_{2,1O}}D_{1,\ell0}^{-\frac{\alpha_{1}}{2}}\mathbf{h}_{1,\ell0}^{\dagger}\mathbf{f}_{1,\ell}x_{1,\ell}+\sum_{\ell\in\Phi_{2,2}}D_{2,\ell0}^{-\frac{\alpha_{2}}{2}}\mathbf{h}_{2,\ell0}^{\dagger}\mathbf{f}_{2,\ell}x_{2,\ell}.\label{eq:y0_pico}
\end{align}}\normalsize
Note that $\Phi_{2,1C}\cup\Phi_{2,1O}\subseteq\Phi_1$ and $\Phi_{2,2}\cup\{B_{2,0}\}=\Phi_{2}$.
  \end{enumerate}
\color{black}

We now obtain the SIR of the typical user. Under the above IN scheme, $u_{0}\in \mathcal U_j$ experiences three types of interference: 1) residual aggregated interference $I_{j,1C}$ from its potential IN macro-BSs $\Phi_{j,1C}$ which do not select $u_{0}$ for IN, 2) aggregated interference $I_{j,1O}$ from interfering macro-BSs $\Phi_{j,1O}$ which are not its potential IN macro-BSs, and 3) aggregated interference $I_{j,2}$ from all interfering pico-BSs $\Phi_{j,2}$.  Specifically, the SIR of the typical user $u_{0}\in\mathcal{U}_{j}$  is given by
\small{\begin{align}\label{eq:SIRj0}
{\rm SIR}_{j,0}&=\frac{P_{j}Y_{j}^{-\alpha_{j}}\left|\mathbf{h}_{j,00}^{\dagger}\mathbf{f}_{j,0}\right|^{2}}{P_{1}I_{j,1C}+P_{1}I_{j,1O}+P_{2}I_{j,2}}
\end{align}}\normalsize
where
\small{$$I_{j,1C}=\sum_{\ell\in\Phi_{j,1C}}D_{1,\ell0}^{-\alpha_{1}}\left|\mathbf{h}_{1,\ell0}^{\dagger}\mathbf{f}_{1,\ell}\right|^{2}, I_{j,1O}=\sum_{\ell\in\Phi_{j,1O}}D_{1,\ell0}^{-\alpha_{1}}\left|\mathbf{h}_{1,\ell0}^{\dagger}\mathbf{f}_{1,\ell}\right|^{2},I_{j,2}=\sum_{\ell\in\Phi_{j,2}}D_{2,\ell0}^{-\alpha_{2}}\left|\mathbf{h}_{2,\ell0}^{\dagger}\mathbf{f}_{2,\ell}\right|^{2}.$$}\normalsize

\subsection{Preliminary Results}\label{subsec:prelim}
In this part, we evaluate some distributions related to the IN scheme, which will be used to calculate the coverage probability in (\ref{eq:CP_def}). These distributions are based on approximations, the accuracy of which will be verified  in Section \ref{sec:CP_general}. We first calculate the probability mass function (p.m.f.) of the number of the potential IN users of 
 an arbitrary (chosen uniformly at random) macro-BS, denoted as $K_{0}$. The p.m.f. of $K_{0}$ depends on the point processes formed by the scheduled macro and pico users, which are related to but not PPPs \cite{bai13}. For analytical tractability, we approximate the scheduled macro and pico users as two independent PPPs with densities $\lambda_{1}$ and $\lambda_{2}$, respectively.  Note that approximating the scheduled users as a homogeneous PPP has been considered in existing papers (see e.g., \cite{bai13}). Then, we have the p.m.f. of $K_{0}$ as follows.
\begin{lemma}\label{lem:num_req_pmf}
The p.m.f. of $K_{0}$ is given by
\small{\begin{align}\label{eq:K_pmf}
{\rm Pr}\left(K_{0}=k\right)\approx\frac{\bar{L}(T_1,T_2)^{k}}{k!}\exp\left(-\bar{L}(T_1,T_2)\right), \quad k=0,1,\cdots
\end{align}}\normalsize
where $\bar{L}(T_1,T_2)=\bar{L}_{1}(T_1)+\bar{L}_{2}(T_2)$ with
\small{\begin{align}
\bar{L}_{j}(T_j)=&2\pi\lambda_{j}\int_{0}^{\infty}r\int_{\left(\frac{P_{j}}{P_{1}T_{j}}\right)^{\frac{1}{\alpha_{j}}}r^{\frac{\alpha_{1}}{\alpha_{j}}}}^{\left(\frac{P_{j}}{P_{1}}\right)^{\frac{1}{\alpha_{j}}}r^{\frac{\alpha_{1}}{\alpha_{j}}}}f_{Y_{j}}(y){\rm d}y{\rm d}r\;,\;j=1,2.\label{eqn:def-Lj}
\end{align}}\normalsize
Here,   the p.d.f.s of $Y_{j}$ (the  distance between $u_{0}$ and its serving BS $B_{j,0}$) $f_{Y_{j}}(y)$ ($j=1,2$) are given as follows \cite[Lemma 4]{SinghTWC13}:
\small{\begin{align}
f_{Y_{1}}(y)&=\frac{2\pi\lambda_{1}}{\mathcal{A}_{1}}y\exp\left(-\pi\left(\lambda_{1}y^{2}+\lambda_{2}\left(\frac{P_{2}}{P_{1}}\right)^{\frac{2}{\alpha_{2}}}y^{\frac{2\alpha_{1}}{\alpha_{2}}}\right)\right)\;\label{eq:pdfY1}\\
f_{Y_{2}}(y)&=\frac{2\pi\lambda_{2}}{\mathcal{A}_{2}}y\exp\left(-\pi\left(\lambda_{1}\left(\frac{P_{1}}{P_{2}}\right)^{\frac{2}{\alpha_{1}}}y^{\frac{2\alpha_{2}}{\alpha_{1}}}+\lambda_{2}y^{2}\right)\right)\label{eq:pdfY2}\;
\end{align}}\normalsize
where $\mathcal{A}_{j}\define {\rm Pr}\left(u_{0}\in\mathcal{U}_{j}\right)$ ($j=1,2$) are given by
\small{\begin{align}
&\mathcal{A}_{1}=2\pi\lambda_{1}\int_{0}^{\infty}z\exp\left(-\pi\lambda_{1}z^{2}\right)\exp\left(-\pi\lambda_{2}\left(\frac{P_{2}}{P_{1}}\right)^{\frac{2}{\alpha_{2}}}z^{\frac{2\alpha_{1}}{\alpha_{2}}}\right){\rm d}z\label{eq:A1}\\
&\mathcal{A}_{2}=2\pi\lambda_{2}\int_{0}^{\infty}z\exp\left(-\pi\lambda_{2}z^{2}\right)\exp\left(-\pi\lambda_{1}\left(\frac{P_{1}}{P_{2}}\right)^{\frac{2}{\alpha_{1}}}z^{\frac{2\alpha_{2}}{\alpha_{1}}}\right){\rm d}z.\label{eq:A2}
\end{align}}\normalsize
\end{lemma}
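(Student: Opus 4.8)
The plan is to derive the p.m.f.\ of $K_0$ by thinning the (approximate) PPPs of scheduled users according to the potential-IN condition, and then invoking the standard displacement/marking characterization of PPPs. First I would fix an arbitrary macro-BS at the origin (Slivnyak's theorem, after conditioning on a typical point of $\Phi_1$, or equivalently just adding a point at the origin since the scheduled-macro-user process is approximated as an independent PPP). A scheduled user $u_i\in\mathcal{U}_j$ at distance $D_{1,0i}$ from this macro-BS sends it an IN request exactly when $P_jY_{j,i}^{-\alpha_j} < T_j\,P_1 D_{1,0i}^{-\alpha_1}$, where $Y_{j,i}$ is the distance from $u_i$ to its own serving BS in tier $j$. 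Rearranging, the request is sent iff $D_{1,0i} < (P_1 T_j/P_j)^{1/\alpha_1} Y_{j,i}^{\alpha_j/\alpha_1}$; without the $T_j$ factor the same inequality (with $D_{1,0i}$ strictly greater than the corresponding bound) is just the statement that macro-BS $0$ is \emph{not} the serving BS of $u_i$, which holds for all interfering users. Hence the potential IN users of macro-BS $0$ in tier $j$ are those scheduled tier-$j$ users whose distance to the origin lies in the annulus $\bigl((P_j/(P_1T_j))^{1/\alpha_j}\,(\cdot)\,,\,(P_j/P_1)^{1/\alpha_j}\,(\cdot)\bigr)$ when expressed via $Y_{j,i}$ — matching precisely the inner and outer integration limits in \eqref{eqn:def-Lj}.

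Next I would compute the expected number of such users. Treating the scheduled tier-$j$ users as a homogeneous PPP of density $\lambda_j$ and attaching to each user an independent mark $Y_{j,i}$ distributed with density $f_{Y_j}$ (the serving-distance law from \cite[Lemma 4]{SinghTWC13}, reproduced in \eqref{eq:pdfY1}--\eqref{eq:pdfY2}), the potential IN users of macro-BS $0$ form an independently thinned/marked point process. By the Campbell–Mecke formula, their mean count is
\small{\begin{align}
\bar L_j(T_j) = \lambda_j \int_{\mathbb{R}^2}\!\int_0^\infty \mathds{1}\!\left[\Bigl(\tfrac{P_j}{P_1T_j}\Bigr)^{\frac1{\alpha_j}} \|x\|^{\frac{\alpha_1}{\alpha_j}} \le y \le \Bigl(\tfrac{P_j}{P_1}\Bigr)^{\frac1{\alpha_j}} \|x\|^{\frac{\alpha_1}{\alpha_j}}\right] f_{Y_j}(y)\,{\rm d}y\,{\rm d}x, \nonumber
\end{align}}\normalsize
which becomes \eqref{eqn:def-Lj} after passing to polar coordinates $x\mapsto (r,\theta)$ and integrating out $\theta$ to produce the $2\pi$ factor. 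Since tier-$1$ and tier-$2$ scheduled users are modeled as independent PPPs, the two contributions add: $\bar L(T_1,T_2)=\bar L_1(T_1)+\bar L_2(T_2)$.

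Finally, because independent thinning and independent marking of a PPP yield a PPP, the potential IN users of macro-BS $0$ form a (non-homogeneous) PPP, so its total count $K_0$ is Poisson with the mean $\bar L(T_1,T_2)$ just computed; this gives \eqref{eq:K_pmf}. I would close by noting that the only approximation used is the replacement of the true scheduled-user point processes — which are dependent on the BS processes through the association rule and are not PPPs — by independent homogeneous PPPs, consistent with \cite{bai13}; everything downstream is exact. The main obstacle is not any single computation but justifying that this PPP approximation keeps the mark $Y_{j,i}$ of a user, its position relative to macro-BS $0$, and the thinning indicator jointly in the product form required for Campbell's formula; once one accepts the independence approximation, the annulus bookkeeping and the polar-coordinate reduction are routine, and the accuracy is deferred to the numerical validation in Section~\ref{sec:CP_general}.
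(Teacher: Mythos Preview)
Your proposal is correct and follows essentially the same route as the paper's proof: place a macro-BS at the origin via Slivnyak, approximate the scheduled tier-$j$ users by an independent homogeneous PPP of density $\lambda_j$ with independent serving-distance marks $Y_j\sim f_{Y_j}$, thin according to the annulus condition on $Y_j$ given the distance $r$ to the origin, and conclude that $K_0$ is Poisson with mean $\bar L_1(T_1)+\bar L_2(T_2)$ by the closure of PPPs under independent thinning and superposition. The only cosmetic difference is that you phrase the mean computation via Campbell--Mecke on a marked PPP, whereas the paper computes the location-dependent retention probability $p_{ji,R_{ji}}(r,T_j)=\int f_{Y_j}(y)\,{\rm d}y$ over the same annulus and then integrates $2\pi r\lambda_j\,p_{ji}(r,T_j)$ over $r$; these are the same calculation written two ways.
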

\begin{proof}
See Appendix \ref{proof:pmf_Lbar}.
\end{proof}

Note that $\bar{L}(T_1,T_2)$ represents the average number of IN requests of the scheduled users received by an arbitrary macro-BS, and $\bar{L}_{j}(T_j)$  represents the average number of IN requests of the scheduled users in the $j$th tier received by  an arbitrary macro-BS.  From \eqref{eqn:def-Lj}, we can easily see that $\bar{L}_{j}(T_j)$ and $\bar{L}(T_1,T_2)$ increase with $T_1$ and $T_2$. From \eqref{eq:K_pmf}, we know that $K_0$ approximately follows the Poisson distribution with mean  $\bar{L}(T_1,T_2)$.  


Next, we calculate the p.m.f. of the number of the  IN users  of
 an arbitrary (chosen uniformly at random) macro-BS 
$u_{{\rm IN},0}=\min\left(U,K_{0}\right)$ based on \emph{Lemma \ref{lem:num_req_pmf}}.
\begin{lemma}\label{lem:pmf_uIN0}
The p.m.f. of $u_{{\rm IN},0}$ is given by
\small{\begin{align}
{\rm Pr}\left(u_{{\rm IN},0}=u\right)=
\begin{cases}
&{\rm Pr}\left(K_{0}=u\right)\;,\hspace{12mm} {\rm for}\;0\le u<U \\
&\sum_{k=U}^{\infty}{\rm Pr}\left(K_{0}=k\right)\;,\hspace{2mm}{\rm for}\;u=U
\end{cases}
\;.\notag
\end{align}}\normalsize
\end{lemma}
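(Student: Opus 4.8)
The plan is to exploit that $u_{{\rm IN},0}=\min\left(U,K_{0}\right)$ is a deterministic function of the single random variable $K_{0}$, whose p.m.f. was already established in \emph{Lemma \ref{lem:num_req_pmf}}. Since $U\in\{0,1,\dots,N_{1}-1\}$ is a fixed design parameter and $K_{0}$ takes values in $\{0,1,2,\dots\}$, the variable $u_{{\rm IN},0}$ takes values in $\{0,1,\dots,U\}$, so it suffices to compute ${\rm Pr}\left(u_{{\rm IN},0}=u\right)$ for each $u$ in this range and to argue that the resulting events are disjoint and exhaustive.

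First I would handle the case $0\le u<U$. Here the event $\left\{\min\left(U,K_{0}\right)=u\right\}$ holds if and only if $K_{0}=u$: if $K_{0}=u<U$ then $\min\left(U,K_{0}\right)=u$, and conversely if $\min\left(U,K_{0}\right)=u<U$ then the minimum is not attained at $U$, which forces $K_{0}=u$. Hence ${\rm Pr}\left(u_{{\rm IN},0}=u\right)={\rm Pr}\left(K_{0}=u\right)$ for $0\le u<U$, into which one may substitute the Poisson expression \eqref{eq:K_pmf} if an explicit formula is wanted.

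Next I would treat the boundary value $u=U$. The event $\left\{\min\left(U,K_{0}\right)=U\right\}$ coincides with $\left\{K_{0}\ge U\right\}$, which is the countable disjoint union $\bigcup_{k=U}^{\infty}\left\{K_{0}=k\right\}$; summing the probabilities of these disjoint events gives ${\rm Pr}\left(u_{{\rm IN},0}=U\right)=\sum_{k=U}^{\infty}{\rm Pr}\left(K_{0}=k\right)$. Combining the two cases yields the claimed p.m.f., and as a consistency check the total mass is $\sum_{u=0}^{U-1}{\rm Pr}\left(K_{0}=u\right)+\sum_{k=U}^{\infty}{\rm Pr}\left(K_{0}=k\right)=\sum_{k=0}^{\infty}{\rm Pr}\left(K_{0}=k\right)=1$.

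There is no genuine technical obstacle in this lemma; it is essentially a bookkeeping argument about the law of a truncated random variable. The only points requiring care are the case split precisely at $u=U$ (so that the tail $\{K_{0}\ge U\}$ is not also counted in the $u<U$ branch) and the degenerate situation $U=0$, in which the first branch is vacuous and $u_{{\rm IN},0}=0$ almost surely, in agreement with the no-IN case discussed in Section~\ref{subsec:IN_descp}.
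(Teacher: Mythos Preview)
Your argument is correct and complete; the paper itself states this lemma without proof, treating it as an immediate consequence of the definition $u_{{\rm IN},0}=\min\left(U,K_{0}\right)$. Your case split and consistency check are exactly the routine verification one would supply if a proof were demanded.
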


Now, we calculate  the probability that an arbitrary  (chosen uniformly at random) potential IN macro-BS of $u_{0}$ selects $u_{0}$ for IN, referred to as the {\em IN probability} and denoted as $p_{c}\left(U,T_{1},T_{2}\right)$, based on \emph{Lemma \ref{lem:num_req_pmf}}.

\begin{lemma}\label{lem:INprob}
The IN probability is given by
\small{\begin{align}
p_{c}\left(U,T_{1},T_{2}\right)
\approx&\exp\left(-\bar{L}(T_1,T_2)\right)\left(\sum_{k=0}^{U-1}\frac{\bar{L}(T_1,T_2)^{k}}{k!}+U\sum_{k=U}^{\infty}\frac{\bar{L}(T_1,T_2)^{k}}{(k+1)!}\right)\;.\notag
\end{align}}\normalsize
\end{lemma}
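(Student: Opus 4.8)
<br>

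The plan is to compute the IN probability $p_c(U,T_1,T_2)$ by conditioning on $K_0$, the number of potential IN users of an arbitrary potential IN macro-BS of $u_0$, and then carefully handling the size-biasing that arises because we are looking at this macro-BS \emph{from the perspective of} $u_0$, who is already known to have sent it an IN request.

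First I would set up the conditional selection probability. Given that a potential IN macro-BS $\ell$ has $K_\ell = k$ potential IN users (with $u_0$ being one of them), the macro-BS performs IN to $\min(U,k)$ of them chosen uniformly at random. Hence the conditional probability that $u_0$ is among the selected users is $1$ if $k \le U$ and $U/k$ if $k > U$. So I need $\mathrm{E}\!\left[\min(1, U/K_\ell) \,\middle|\, u_0 \text{ sent a request to } \ell\right]$, where the expectation is over the conditional distribution of $K_\ell$ given that $u_0$ is one of its potential IN users.

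The key step — and the main obstacle — is identifying that conditional distribution. Since $u_0$ is a \emph{typical} user that is known to have $\ell$ as a potential IN macro-BS, the number of \emph{other} potential IN users of $\ell$ is still approximately Poisson with mean $\bar L(T_1,T_2)$ by the approximation in Lemma~\ref{lem:num_req_pmf} and the Poisson/Palm reasoning used there, so $K_\ell$ conditioned on containing $u_0$ is distributed as $1 + K_0$ with $K_0$ having the p.m.f. in \eqref{eq:K_pmf}. (This is the standard size-biasing shift for Poisson-type counts under Palm conditioning on an added point.) Therefore
\small{\begin{align}
p_c(U,T_1,T_2) &\approx \sum_{k=0}^{\infty} \min\!\left(1, \frac{U}{k+1}\right){\rm Pr}(K_0 = k)\notag\\
&= \sum_{k=0}^{U-1}{\rm Pr}(K_0=k) + \sum_{k=U}^{\infty}\frac{U}{k+1}{\rm Pr}(K_0=k).\notag
\end{align}}\normalsize

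Finally I would substitute the Poisson p.m.f. ${\rm Pr}(K_0=k) \approx \frac{\bar L(T_1,T_2)^k}{k!}e^{-\bar L(T_1,T_2)}$ from Lemma~\ref{lem:num_req_pmf}. The first sum becomes $e^{-\bar L}\sum_{k=0}^{U-1}\frac{\bar L^k}{k!}$ directly. For the second sum, the factor $\frac{U}{k+1}\cdot\frac{1}{k!} = \frac{U}{(k+1)!}$ gives $U e^{-\bar L}\sum_{k=U}^{\infty}\frac{\bar L^k}{(k+1)!}$, yielding exactly the claimed expression. The only delicate point deserving a line of justification is the size-biasing argument in the previous paragraph; the rest is routine bookkeeping, so I would state that shift explicitly and note it is consistent with the PPP approximation underlying Lemma~\ref{lem:num_req_pmf}.
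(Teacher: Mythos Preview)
Your proposal is correct and follows essentially the same approach as the paper: both condition on the number of \emph{other} potential IN users of the macro-BS (the paper calls this $K$, you call it $K_\ell - 1$), argue that this count has the same distribution as $K_0$ from Lemma~\ref{lem:num_req_pmf}, write $p_c = \mathrm{E}[\min\{1,U/(K_0+1)\}]$, split the sum at $k=U-1$, and substitute the Poisson p.m.f. The only difference is presentational---you invoke Palm/size-biasing language where the paper appeals directly to the independence of different users' request decisions---but the argument is the same.
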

\begin{proof}
See Appendix \ref{proof:lem_INprob}. 
\end{proof}

Note that different potential IN macro-BSs of $u_{0}$ selects $u_{0}$ for IN dependently (as the numbers of the potential IN users of these macro-BSs are dependent). For analytical tractability, we assume that different potential IN macro-BSs of $u_{0}$ select $u_{0}$ for IN independently.
Using \emph{independent thinning}, $u_0$'s potential IN  macro-BSs which do not select $u_0$ for IN   can be \emph{approximated} by a homogeneous PPP with density $p_{\bar{c}}\left(U,T_{1},T_{2}\right)\lambda_{1}$, where $p_{\bar{c}}\left(U,T_{1},T_{2}\right)\define1-p_{c}\left(U,T_{1},T_{2}\right)$.

\section{Coverage Probability--General SIR Threshold Regime}\label{sec:CP_general}
In this section, we investigate the coverage probability in the general SIR threshold regime.
By total probability theorem and the preliminary results obtained in Section \ref{subsec:prelim} (under some approximations), we have  the following theorem.

\begin{figure*}[!t]
\small{\begin{align}
&\mathcal{S}_{1}(\beta,U,T_{1},T_{2})=\sum_{u=0}^{U}{\rm Pr}\left(u_{{\rm IN},0}=u\right)\int_{0}^{\infty}\sum_{n=0}^{N_{1}-u-1}\frac{1}{n!}\sum_{\left(n_{a}\right)_{a=1}^{3}\in\mathcal{N}_{n}}\binom{n}{n_{1},n_{2},n_{3}}\mathcal{\tilde{L}}^{(n_{1})}_{I_{1,1C}}\left(U, \beta y^{\alpha_{1}},y,T^{\frac{1}{\alpha_{1}}}y\right)\notag\\
&\hspace{6cm}\times\mathcal{\tilde{L}}^{(n_{2})}_{I_{1,1O}}\left(\beta y^{\alpha_{1}},T_{1}^{\frac{1}{\alpha_{1}}}y\right)\mathcal{\tilde{L}}^{(n_{3})}_{I_{1,2}}\left(\beta\frac{P_{2}}{P_{1}}y^{\alpha_{1}},\left(\frac{P_{2}}{P_{1}}\right)^{\frac{1}{\alpha_{2}}}y^{\frac{\alpha_{1}}{\alpha_{2}}}\right)f_{Y_{1}}(y){\rm d}y\label{eq:CP1}\\
&\mathcal{S}_{2}(\beta,U,T_{1},T_{2})=\int_{0}^{\infty}\sum_{n=0}^{N_{2}-1}\frac{1}{n!}\sum_{(n_{a})_{a=1}^{3}\in\mathcal{N}_{n}}\binom{n}{n_{1},n_{2},n_{3}}\mathcal{\tilde{L}}^{(n_{1})}_{I_{2,1C}}\left(U,\beta\frac{P_{1}}{P_{2}}y^{\alpha_{2}},\left(\frac{P_{1}}{P_{2}}\right)^{\frac{1}{\alpha_{1}}}y^{\frac{\alpha_{2}}{\alpha_{1}}},\left(\frac{P_{1}T_{2}}{P_{2}}\right)^{\frac{1}{\alpha_{1}}}y^{\frac{\alpha_{2}}{\alpha_{1}}}\right)\notag\\
&\hspace{6cm}\times\mathcal{\tilde{L}}^{(n_{2})}_{I_{2,1O}}\left(\beta\frac{P_{1}}{P_{2}}y^{\alpha_{2}},\left(\frac{P_{1}T_{2}}{P_{2}}\right)^{\frac{1}{\alpha_{1}}}y^{\frac{\alpha_{2}}{\alpha_{1}}}\right)\mathcal{\tilde{L}}^{(n_{3})}_{I_{2,2}}\left(\beta y^{\alpha_{2}},y\right)f_{Y_{2}}(y){\rm d}y\label{eq:CP2}\\
&\tilde{\mathcal{L}}_{I_{j,1C}}^{(n)}(U,s,r_{j,1C},r_{j,1O})=\exp\Bigg(-\left(B^{'}\left(\frac{2}{\alpha_{1}},1-\frac{2}{\alpha_{1}},\frac{1}{1+sr_{j,1C}^{-\alpha_{1}}}\right)-B^{'}\left(\frac{2}{\alpha_{1}},1-\frac{2}{\alpha_{1}},\frac{1}{1+sr_{j,1O}^{-\alpha_{1}}}\right)\right)\notag\\
&\hspace{3cm}\times\frac{2\pi}{\alpha_{1}}p_{\bar{c}}\left(U,T_{1},T_{2}\right)\lambda_{1}s^{\frac{2}{\alpha_{1}}}\Bigg)\sum_{(m_{a})_{a=1}^{n}\in\mathcal{M}_{n}}\frac{n!}{\prod_{a=1}^{n}m_{a}!}\prod_{a=1}^{n}\Bigg(\frac{2\pi}{\alpha_1}p_{\bar{c}}\left(U,T_{1},T_{2}\right)\lambda_{1}s^{\frac{2}{\alpha_{1}}}\notag\\
&\hspace{3cm}\times\left(B^{'}\left(1+\frac{2}{\alpha_{1}},a-\frac{2}{\alpha_{1}},\frac{1}{1+sr_{j,1C}^{-\alpha_{1}}}\right)-B^{'}\left(1+\frac{2}{\alpha_{1}},a-\frac{2}{\alpha_{1}},\frac{1}{1+sr_{j,1O}^{-\alpha_{1}}}\right)\right)\Bigg)^{m_{a}}\label{eq:LTdiff_I1in}\\
&\mathcal{\tilde{L}}^{(n)}_{I_{j,k}}(s,r_{j,k})=\exp\left(-\frac{2\pi\lambda_{J(k)}}{\alpha_{J(k)}}s^{\frac{2}{\alpha_{J(k)}}}B^{'}\left(\frac{2}{\alpha_{J(k)}},1-\frac{2}{\alpha_{J(k)}},\frac{1}{1+\frac{s}{r_{j,k}^{\alpha_{J(k)}}}}\right)\right)\sum_{(m_{a})_{a=1}^{n}\in\mathcal{M}_{n}}\frac{n!}{\prod_{a=1}^{n}m_{a}!}\notag\\
&\hspace{22mm}\times \prod_{a=1}^{n}\Bigg(\frac{2\pi\lambda_{J(k)}}{\alpha_{J(k)}}s^{\frac{2}{\alpha_{J(k)}}}B^{'}\bigg(1+\frac{2}{\alpha_{J(k)}},a-\frac{2}{\alpha_{J(k)}},\frac{1}{1+\frac{s}{r_{j,k}^{\alpha_{J(k)}}}}\bigg) \Bigg)^{m_{a}}\label{eq:LTdiff_1O2}
\end{align}}
\normalsize \hrulefill
\end{figure*}

\begin{theorem}[Coverage Probability]\label{thm:overall_CP}
Under design parameters $U$, $T_{1}$ and $T_{2}$, we have:
1) coverage probability of a macro-user: $\mathcal{S}_{1}\left(\beta,U,T_{1},T_{2}\right)\define{\rm Pr}\left({\rm SIR}_{0}>\beta|u_{0}\in\mathcal{U}_{1}\right)$, given in (\ref{eq:CP1});
2) coverage probability of a pico-user: $\mathcal{S}_{2}\left(\beta,U,T_{1},T_{2}\right)\define{\rm Pr}\left({\rm SIR}_{0}>\beta|u_{0}\in\mathcal{U}_{2}\right)$, given in (\ref{eq:CP2});
3) overall coverage probability $\mathcal{S}\left(\beta,U,T_{1},T_{2}\right)=\mathcal{A}_{1}\mathcal{S}_{1}(\beta,U,T_{1},T_{2})+\mathcal{A}_{2}\mathcal{S}_{2}(\beta,U,T_{1},T_{2})$, where $\mathcal{A}_{j}\define {\rm Pr}\left(u_{0}\in\mathcal{U}_{j}\right)$ ($j=1,2$) are given in (\ref{eq:A1}) and (\ref{eq:A2}).
Here, $\mathcal{\tilde{L}}^{(n)}_{I_{j,1C}}\left(U,s,r_{j,1C},r_{j,1O}\right)$ and $\mathcal{\tilde{L}}^{(n)}_{I_{j,k}}(s,r_{j,k})$ ($k\in\{1O,2\}$) are given in (\ref{eq:LTdiff_I1in}) and (\ref{eq:LTdiff_1O2}) (with $J(1O)=1$ and $J(2)=2$), respectively.
Moreover, $B^{'}(a,b,z)\define\int_{z}^{1}u^{a-1}(1-u)^{b-1}{\rm d}u$ ($0<z<1$)  denotes the complementary incomplete beta function, $\mathcal{N}_{n}\define\{(n_{a})_{a=1}^{3}|n_{a}\in\mathbb{N}^{0},\sum_{a=1}^{3}n_{a}=n\}$, and $\mathcal{M}_{n}\define\{(m_{a})_{a=1}^{n}|m_{a}\in\mathbb{N}^{0},\sum_{a=1}^{3}a\cdot m_{a}=n\}$, where $\mathbb{N}^{0}$ denotes the set of nonnegative integers.
\end{theorem}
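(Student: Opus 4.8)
The plan is to compute $\mathcal{S}_j(\beta,U,T_1,T_2) = {\rm Pr}({\rm SIR}_{j,0}>\beta \mid u_0 \in \mathcal{U}_j)$ by conditioning on the IN DoF $u_{{\rm IN},0}$ of the typical user's serving macro-BS (for $j=1$) and on the serving distance $Y_j$, then evaluating the conditional coverage probability as a Laplace-transform expression in the three interference terms $I_{j,1C}$, $I_{j,1O}$, $I_{j,2}$. First I would write, for $u_0 \in \mathcal{U}_j$ with $u_{{\rm IN},0}=u$ and $Y_j=y$, that the desired signal power $P_j y^{-\alpha_j}|\mathbf{h}_{j,00}^\dagger \mathbf{f}_{j,0}|^2$ is $P_j y^{-\alpha_j}$ times a ${\rm Gamma}(M_j,1)$ variable with $M_1 = N_1-u$, $M_2 = N_2$ (the distribution stated in the excerpt, from \cite[Lemma 1]{JindalTCOM11}). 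For a Gamma$(M,1)$ numerator $G$ independent of the total interference $I$, the standard identity ${\rm Pr}(G > \beta I) = \sum_{n=0}^{M-1}\frac{(-s)^n}{n!}\frac{{\rm d}^n}{{\rm d}s^n}\mathcal{L}_I(s)\big|_{s = \beta y^{\alpha_j}/P_j \cdot (\text{scaling})}$ converts coverage into a sum of derivatives of the Laplace transform of the aggregate interference, evaluated at the appropriate $s$. Since $I = P_1 I_{j,1C} + P_1 I_{j,1O} + P_2 I_{j,2}$ is a sum of three independent pieces (conditioned on $Y_j$, using the PPP approximations of Section~\ref{subsec:prelim}), its Laplace transform factors, and the $n$-th derivative of a product expands by the general Leibniz rule — this produces the multinomial sum over $\mathcal{N}_n$ and the factors $\tilde{\mathcal{L}}_{I_{j,1C}}^{(n_1)}$, $\tilde{\mathcal{L}}_{I_{j,1O}}^{(n_2)}$, $\tilde{\mathcal{L}}_{I_{j,2}}^{(n_3)}$ in \eqref{eq:CP1}--\eqref{eq:CP2}. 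Finally, averaging over $u_{{\rm IN},0}$ via Lemma~\ref{lem:pmf_uIN0} and over $Y_j$ via \eqref{eq:pdfY1}--\eqref{eq:pdfY2}, and combining the two tiers with weights $\mathcal{A}_j$ from \eqref{eq:A1}--\eqref{eq:A2}, gives statements 1)--3).

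The core computation is the Laplace transform of each interference component and its derivatives. For $I_{j,1O}$ (interferers forming a PPP of density $\lambda_1$ outside a guard region determined by the potential-IN condition $T_j$, i.e.\ at distances $\ge r_{j,1O}$) and $I_{j,2}$ (PPP of density $\lambda_2$, radius $\ge r_{j,2}=y$ in the appropriate scaling), each interferer contributes an independent ${\rm Gamma}(1,1)$ fading gain, so by the PPP probability generating functional $\mathcal{L}_{I_{j,k}}(s) = \exp(-2\pi\lambda_{J(k)}\int_{r_{j,k}}^\infty (1 - \frac{1}{1+s t^{-\alpha_{J(k)}}}) t \, {\rm d}t)$; the substitution $u = (1+s t^{-\alpha})^{-1}$ turns the exponent into a complementary incomplete beta function $B'(\tfrac{2}{\alpha},1-\tfrac{2}{\alpha},\cdot)$, matching the exponential prefactor in \eqref{eq:LTdiff_1O2}. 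For the derivatives, I would use Faà di Bruno / the exponential-formula structure: since $\mathcal{L}_{I_{j,k}}(s) = \exp(g(s))$, its $n$-th derivative is $\exp(g(s))$ times a Bell-polynomial in the derivatives $g^{(a)}(s)$; each $g^{(a)}(s)$ is again a beta-function-type integral, now with shifted parameters $B'(1+\tfrac{2}{\alpha},a-\tfrac{2}{\alpha},\cdot)$ and a power $s^{2/\alpha}$ pulled out by scaling — this yields the sum over $\mathcal{M}_n$ with the combinatorial weight $n!/\prod_a m_a!$ in \eqref{eq:LTdiff_1O2}. The term $I_{j,1C}$ is handled identically except that the potential-IN macro-BSs that do \emph{not} select $u_0$ form a PPP of \emph{thinned} density $p_{\bar c}(U,T_1,T_2)\lambda_1$ (the independent-thinning approximation of Section~\ref{subsec:prelim}), and they live in an annulus $r_{j,1C}\le t < r_{j,1O}$, which is why \eqref{eq:LTdiff_I1in} has a \emph{difference} of two beta functions and carries the factor $p_{\bar c}$.

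The main obstacle is bookkeeping rather than a single deep step: correctly identifying the scaling exponents and the guard/annulus radii $r_{j,1C}, r_{j,1O}, r_{j,k}$ for each tier and each interference type, so that the generic Laplace-transform formula specializes to exactly the arguments appearing in \eqref{eq:CP1}--\eqref{eq:LTdiff_1O2}. In particular one must carefully track (i) how the serving-distance conditioning on $Y_j=y$ restricts the nearest interfering BS in each tier to lie outside a radius determined by the association rule $P_1 D_{1,\ell 0}^{-\alpha_1}\le P_j y^{-\alpha_j}$ (giving factors like $(P_j/P_1)^{1/\alpha_1} y^{\alpha_j/\alpha_1}$), and (ii) how the potential-IN definition $\tfrac{P_j Y_j^{-\alpha_j}}{P_1 D_{1,\ell 0}^{-\alpha_1}} < T_j$ further splits the macro interferers into the $\Phi_{j,1C}$ annulus and the $\Phi_{j,1O}$ exterior, contributing the extra $T_j^{1/\alpha_1}$ factors. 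A secondary subtlety is justifying the independence needed to factor $\mathcal{L}_I(s)$: $\Phi_{j,1C}$ and $\Phi_{j,1O}$ are disjoint subsets of the same PPP $\Phi_1$ and hence genuinely independent by the restriction property, while the independence between the numerator Gamma variable and the interference, and the Poisson approximation for $u_{{\rm IN},0}$, are the approximations already flagged in Section~\ref{subsec:prelim} — so here I would simply invoke them and defer the derivation details to the appendix.
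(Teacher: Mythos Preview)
Your proposal is correct and follows essentially the same route as the paper's proof in Appendix~\ref{proof:thm1}: condition on $Y_j$ (and on $u_{{\rm IN},0}$ for $j=1$), use the ${\rm Gamma}(M_j,1)$ CCDF to turn coverage into a finite sum of derivatives of the interference Laplace transform, factor the transform over the three independent interference pieces, apply Fa\`a di Bruno to each factor $\exp(g(s))$, and reduce the resulting integrals to complementary incomplete beta functions via the substitution $w=(1+st^{-\alpha})^{-1}$. The only cosmetic difference is that the paper expands $\big(\tfrac{P_1}{P_j}I_{j,1C}+\tfrac{P_1}{P_j}I_{j,1O}+\tfrac{P_2}{P_j}I_{j,2}\big)^n$ directly by the multinomial theorem, whereas you phrase the same step as the general Leibniz rule applied to the product $\mathcal{L}_{I_{j,1C}}\mathcal{L}_{I_{j,1O}}\mathcal{L}_{I_{j,2}}$; the two are identical term-by-term and both produce the sum over $\mathcal{N}_n$.
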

\begin{proof}
See Appendix \ref{proof:thm1}.
\end{proof}

Fig.~\ref{fig:CP_OP} plots the coverage probability versus the IN DoF $U$ and the SIR threshold $\beta$. We see from Fig.~\ref{fig:CP_OP} that the ``Analytical" curves, which are plotted using $\mathcal{S}\left(\beta,U,T_{1},T_{2}\right)$ in \emph{Theorem \ref{thm:overall_CP}}, are reasonably close to the ``Monte Carlo" curves, although \emph{Theorem \ref{thm:overall_CP}} is obtained based on some approximations (cf. Section \ref{subsec:prelim}). 
Please note that  the approximation error shown in Fig.~\ref{fig:CP_OP}  is less than 0.022. 
Later, we shall consider the optimization of $U$ for given $T_1$ and $T_2$.\footnote{The coverage probability can  be further improved by jointly adjusting $T_{1}$ and $T_{2}$. We shall consider the optimization of $T_1$ and $T_2$ in the future work.}

\begin{figure}[t]
\centering
\subfigure[Maximum IN DoF at $\beta=10$ dB]{
\includegraphics[height=0.28\columnwidth,width=0.38\columnwidth]
{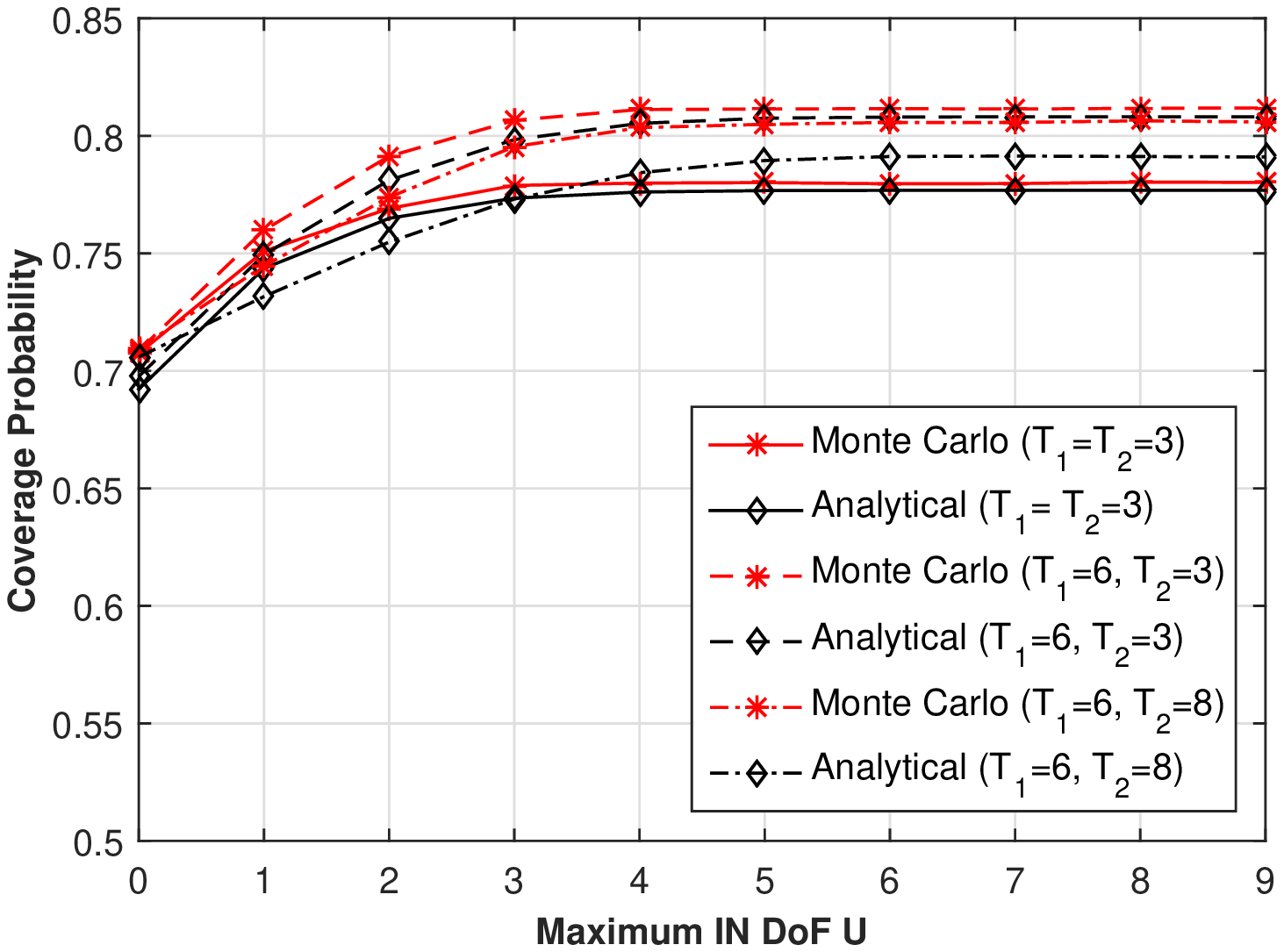}
\label{fig:U_T}
}
\subfigure[SIR threshold at $U=5$]{
\includegraphics[height=0.28\columnwidth,width=0.38\columnwidth]
{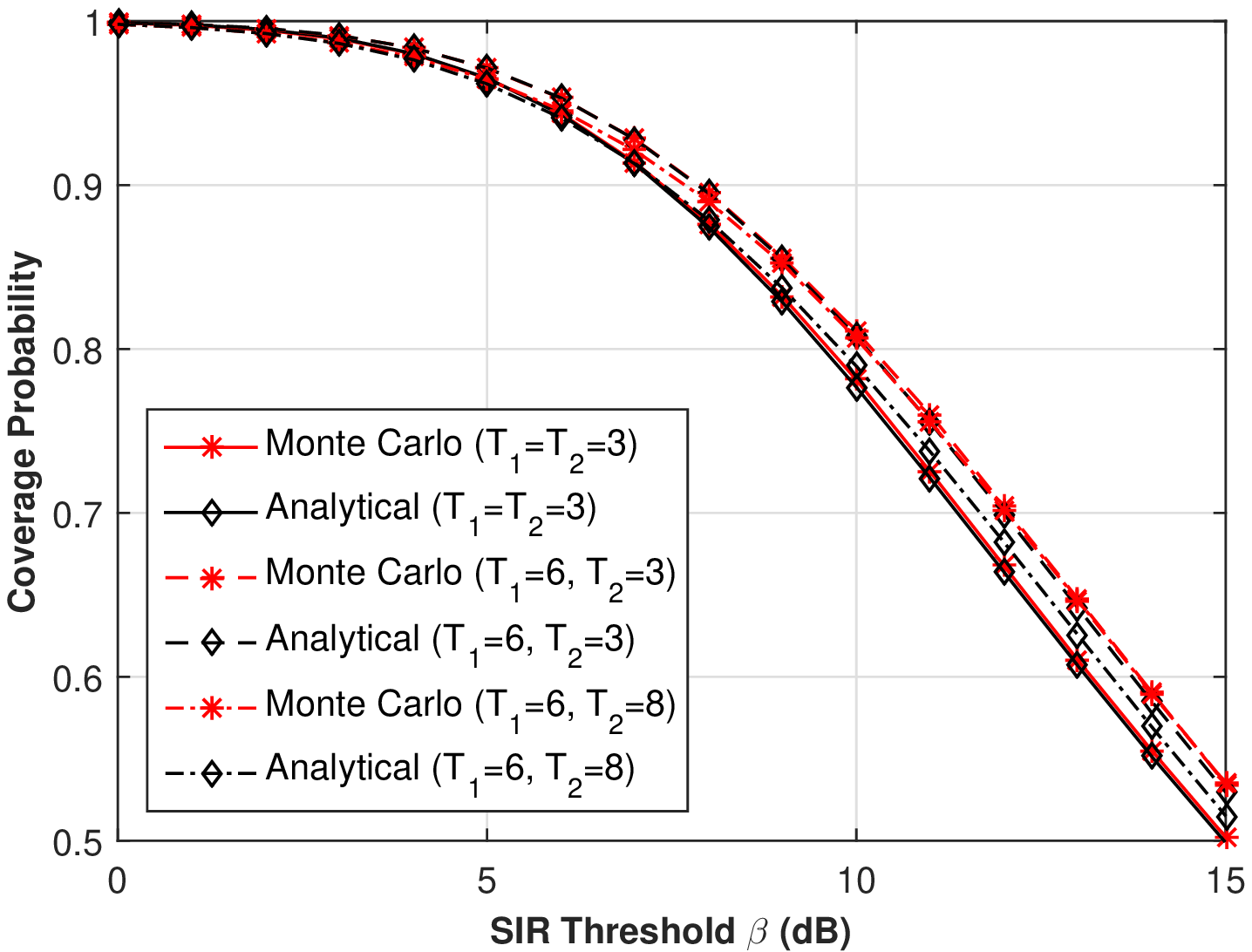}
\label{fig:verification_vs_beta}
}
\caption{\small{Coverage probability versus maximum IN DoF  and SIR threshold. $N_{1}=10$, $N_{2}=8$, $\alpha_{1}=4.5$, $\alpha_{2}=4.7$, $\frac{P_{1}}{P_{2}}=15$ dB, $\lambda_{1}=0.0005$ nodes/m$^{2}$, and $\lambda_{2}=0.001$ nodes/m$^{2}$.  For the Monte Carlo results, we use a two dimensional square of area $240^2$ m$^2$ to simulate the large-scale HetNet  and $10^6$   random realizations to obtain the average coverage probability. The computation time using Monte-Carlo simulations is about  $250$  times of that using the analytical results, demonstrating that the analytical results are more tractable than Monte-Carlo simulations.  
}}\label{fig:CP_OP}
\vspace{-5mm}
\end{figure}


\section{Asymptotic Outage Probability Analysis--Low SIR Threshold Regime}\label{sec:CP_low}
In this section, we   analyze and optimize  the complement of the coverage probability, i.e., the outage probability of the IN scheme in the low   SIR threshold regime, i.e., $\beta\to0$. The asymptotic analysis and optimization   offer important design insights for practical HetNets.


\subsection{Asymptotic Outage Probability Analysis}
In this part, we analyze the asymptotic outage probability ${\rm Pr}\left({\rm SIR}_{0}<\beta\right)$ of the IN scheme when $\beta\to0$. First,  as in \cite{zhang14}, we define the \emph{order gain} of the outage probability (in interference-limited systems), i.e., the exponent of the outage
probability as the SIR threshold decreases to 0:\footnote{Note that this definition is analogous to the standard diversity order gain of error
probability in noise-limited systems, i.e., the exponent of error probability as the
(mean) signal-to-noise ratio (SNR)  increases to infinity\cite{TIT03ZhengTse}\cite{zhang14}.}
\small{\begin{align}
d\define \lim_{\beta\to0}\frac{\log {\rm Pr}\left({\rm SIR}_{0}<\beta\right)}{\log\beta}.
\end{align}}\normalsize
Then, we define the \emph{coefficient} of the asymptotic outage probability: $\lim_{\beta\to0}\frac{{\rm Pr}\left({\rm SIR_{0}}<\beta\right)}{\beta^{d}}$.  Leveraging the order gain and the coefficient of the outage probability, we shall characterize the key behavior of the complex outage probability in the low SIR threshold regime.

Recently, a tractable approach has been proposed in \cite{haenggi14} to characterize the order gain for a class of communication schemes in wireless networks which satisfy certain conditions. However, this approach does not provide tractable analytical expressions for the  coefficient of the asymptotic outage probability for most of the schemes using multiple antennas in this class. By utilizing series expansion of some special functions and dominated convergence theorem, we characterize both the order gain and the coefficient of the asymptotic outage probability of the IN scheme in multi-antenna HetNets, which are presented as follows.
\begin{theorem}[Asymptotic Outage Probability]\label{them:CPj_lowbeta}
Under design parameters $U$, $T_{1}$ and $T_{2}$, when $\beta\to0$, we have:\footnote{$f(\beta)\sims g(\beta)$ means $\lim_{\beta\to0}\frac{f(\beta)}{g(\beta)}=1$.}
1) outage probability of a macro-user: $1-\mathcal{S}_{1}(\beta,U,T_{1},T_{2})\sims b_{1}\left(U,T_{1},T_{2}\right)\beta^{N_{1}-U}$;
2) outage probability of a pico-user: $1-\mathcal{S}_{2}(\beta,U,T_{1},T_{2})\sims b_{2}\left(U,T_{1},T_{2}\right)\beta^{N_{2}}$;
3) overall outage probability: $1-\mathcal{S}\left(\beta,U,T_{1},T_{2}\right)\sims b\left(U,T_{1},T_{2}\right)\beta^{\min\{N_{1}-U,N_{2}\}}$, where
\small{\begin{align}
b\left(U,T_{1},T_{2}\right)=
\begin{cases}
\mathcal A_2b_{2}\left(U,T_{1},T_{2}\right),  & U<N_{1}-N_{2}\\
\mathcal A_1 b_{1}\left(U,T_{1},T_{2}\right)+ \mathcal A_2 b_{2}\left(U,T_{1},T_{2}\right), &U=N_{1}-N_{2}\\
\mathcal A_1 b_{1}\left(U,T_{1},T_{2}\right),  & U>N_{1}-N_{2}
\end{cases}
.\notag
\end{align}}\normalsize
Here, $b_{j}\left(U,T_{1},T_{2}\right)$ is given in (\ref{eq:coeff_smallbeta}) with $U_{j}=U$ and $\mathcal{P}_{j}={\rm Pr}\left(u_{{\rm IN},0}=U\right)$ if $j=1$ and $T_{1},T_{2}>1$; $U_{j}=0$ and $\mathcal{P}_{j}=1$, otherwise. Moreover, $b_{2}\left(U,T_{1},T_{2}\right)$ decreases with $U$.
\begin{figure*}[!t]
\small{\begin{align}\label{eq:coeff_smallbeta}
&b_{j}\left(U,T_{1},T_{2}\right)\notag\\
=&\sum_{(n_{a})_{a=1}^{3}\in\mathcal{N}_{N_{j}-U_{j}}}\sum_{(m_{a})_{a=1}^{n_{1}}\in\mathcal{M}_{n_{1}}}\sum_{(p_{a})_{a=1}^{n_{2}}\in\mathcal{M}_{n_{2}}}\sum_{(q_{a})_{a=1}^{n_{3}}\in\mathcal{M}_{n_{3}}}\left(\int_{0}^{\infty}y^{\frac{2\alpha_{j}}{\alpha_{1}}\left(\sum_{a=1}^{n_{1}}m_{a}+\sum_{a=1}^{n_{2}}p_{a}\right)+\frac{2\alpha_{j}}{\alpha_{2}}\sum_{a=1}^{n_{3}}q_{a}}f_{Y_{j}}(y){\rm d}y\right)\notag\\
&\hspace{-2mm}\times \left(\prod_{a=1}^{n_{2}}\left(\left(\frac{1}{T_{j}}\right)^{a-\frac{2}{\alpha_{1}}}\right)^{p_{a}}\right)\left(\prod_{a=1}^{n_{1}}\frac{1}{m_{a}!}\left(\frac{\frac{2}{\alpha_{1}}\pi \lambda_{1}}{a-\frac{2}{\alpha_{1}}}\left(\frac{P_{1}}{P_{j}}\right)^{\frac{2}{\alpha_{1}}}\right)^{m_{a}}\right)\left(\prod_{a=1}^{n_{2}}\frac{1}{p_{a}!}\left(\frac{\frac{2}{\alpha_{1}}\pi\lambda_{1}}{a-\frac{2}{\alpha_{1}}}\left(\frac{P_{1}}{P_{j}}\right)^{\frac{2}{\alpha_{1}}}\right)^{p_{a}}\right)\notag\\
&\hspace{-2mm}\times \left(\prod_{a=1}^{n_{3}}\frac{1}{q_{a}!}\left(\frac{\frac{2}{\alpha_{2}}\pi\lambda_{2}}{a-\frac{2}{\alpha_{2}}}\left(\frac{P_{2}}{P_{j}}\right)^{\frac{2}{\alpha_{2}}}\right)^{q_{a}}\right)\left(\prod_{a=1}^{n_{1}}\left(p_{\bar{c}}\left(U,T_{1},T_{2}\right)\left(1-\left(\frac{1}{T_{j}}\right)^{a-\frac{2}{\alpha_{1}}}\right)\right)^{m_{a}}\right)\mathcal{P}_{j}
\end{align}}\normalsize
\normalsize \hrulefill
\end{figure*}
\end{theorem}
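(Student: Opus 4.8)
The plan is to work from the expressions for $\mathcal S_1,\mathcal S_2$ in Theorem~\ref{thm:overall_CP} together with their probabilistic origin. Conditioned on $u_0\in\mathcal U_j$, on the serving distance $Y_j=y$, and (for $j=1$) on $u_{{\rm IN},0}=u$, the inner sum $\sum_{n=0}^{M-1}(\cdots)$ in \eqref{eq:CP1}--\eqref{eq:CP2} --- with $M=N_1-u$ for a macro-user and $M=N_2$ for a pico-user --- is, by construction, ${\rm Pr}({\rm SIR}_{j,0}>\beta\mid u_0\in\mathcal U_j,Y_j=y,u_{{\rm IN},0}=u)={\rm Pr}(G_M>\beta\,y^{\alpha_j}I\mid\cdots)$, where $G_M\dis{\rm Gamma}(M,1)$ is the conditional desired power and $I$ is the aggregate interference normalized so that the SIR test takes this form (e.g.\ $I=I_{1,1C}+I_{1,1O}+\tfrac{P_2}{P_1}I_{1,2}$ for $j=1$). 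Hence $1-(\text{inner sum})={\rm Pr}(G_M\le\beta y^{\alpha_j}I\mid\cdots)={\rm E}[\gamma(M,\beta y^{\alpha_j}I)/\Gamma(M)\mid\cdots]$, and everything is governed by the two elementary facts $\gamma(M,t)/\Gamma(M)=\tfrac{t^M}{M!}+O(t^{M+1})$ as $t\downarrow0$, and $\gamma(M,t)/\Gamma(M)=\tfrac1{\Gamma(M)}\int_0^t x^{M-1}e^{-x}{\rm d}x\le\tfrac{t^M}{M!}$ for all $t\ge0$. (Inside Theorem~\ref{thm:overall_CP} these two facts are realized by the series expansion $B'(a,b,z)\sim(1-z)^b/b$ of the complementary incomplete beta function as $z=1/(1+sr^{-\alpha})\uparrow1$ with $s=\beta y^{\alpha_j}\downarrow0$ --- precisely the expansion the statement alludes to.)

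The bound gives, for every fixed $y$, $\beta^{-M}\big(1-\sum_{n=0}^{M-1}(\cdots)\big)\le\tfrac{y^{\alpha_j M}}{M!}{\rm E}[I^M\mid u_0\in\mathcal U_j,Y_j=y]$, a dominating function independent of $\beta$, while the expansion gives pointwise convergence of the left side to the right side as $\beta\downarrow0$. To conclude by dominated convergence I must verify $\int_0^\infty\tfrac{y^{\alpha_j M}}{M!}{\rm E}[I^M\mid\cdots]f_{Y_j}(y)\,{\rm d}y<\infty$. Since $\alpha_1,\alpha_2>2$, every per-BS power moment $2\pi\rho\int_{\mathcal R}r^{1-\alpha a}{\rm d}r$ over the relevant exclusion region $\mathcal R$ (an annulus for $\Phi_{j,1C}$, half-lines for $\Phi_{j,1O}$ and $\Phi_{j,2}$) is finite, so all moments ${\rm E}[I^M\mid y]$ are finite; as $y\downarrow0$ the most singular contribution to ${\rm E}[I^M\mid y]$ behaves like $y^{-(\alpha_j M-c)}$ for some $c>0$, which the $y^{\alpha_j M}$ prefactor turns into $O(y^c)$, and with $f_{Y_j}(y)=O(y)$ (from \eqref{eq:pdfY1}--\eqref{eq:pdfY2}) the integrand is $O(y^{1+c})$ near $0$; as $y\to\infty$ the Gaussian-type tail of $f_{Y_j}$ dominates the at-most-polynomial growth of the rest. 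Dominated convergence then lets me pass $\lim_{\beta\to0}\beta^{-M}$ through the integral over $y$ and through the finite sums over $u,n,(n_a)$, yielding $1-(\text{inner integral at }u)\sims\big(\tfrac1{M!}\int_0^\infty y^{\alpha_j M}{\rm E}[I^M\mid y]f_{Y_j}(y)\,{\rm d}y\big)\beta^{M}$.

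Next I would make ${\rm E}[I^M\mid u_0\in\mathcal U_j,Y_j=y]$ explicit. Under the PPP approximations of Section~\ref{subsec:prelim} the three interference fields are conditionally independent given $Y_j=y$ --- $\Phi_{j,1C}$ with intensity $p_{\bar c}(U,T_1,T_2)\lambda_1$ on an annulus whose endpoint ratio is $T_j^{1/\alpha_1}$, $\Phi_{j,1O}$ with intensity $\lambda_1$ on its complement, $\Phi_{j,2}$ with intensity $\lambda_2$ --- so the multinomial theorem gives ${\rm E}[I^M\mid y]=\sum_{(n_a)\in\mathcal N_M}\binom{M}{n_1,n_2,n_3}\prod_a{\rm E}[(\cdots)^{n_a}\mid y]$, and the moment--cumulant (set-partition) formula for Poisson shot noise writes each factor as $n!\sum_{(m_a)\in\mathcal M_n}\prod_a\tfrac1{m_a!}\big(2\pi\rho\int_{\mathcal R}r^{1-\alpha a}{\rm d}r\big)^{m_a}$ (with partition-type indices $(m_a),(p_a),(q_a)$ for the three factors), where $\int_{\mathcal R}r^{1-\alpha a}{\rm d}r=\tfrac1{\alpha a-2}\times(\text{endpoint powers})$. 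Substituting the endpoints --- which for the macro annulus produce the factors $(1/T_j)^{a-2/\alpha_1}$, $1-(1/T_j)^{a-2/\alpha_1}$ and $p_{\bar c}(U,T_1,T_2)$, and for the (power-scaled) half-lines produce $(P_1/P_j)^{2/\alpha_1}$ and $(P_2/P_j)^{2/\alpha_2}$ --- and then collecting the $y$-powers, which merge with the $y^{\alpha_j M}$ prefactor so that the net exponent becomes $\tfrac{2\alpha_j}{\alpha_1}(\sum_a m_a+\sum_a p_a)+\tfrac{2\alpha_j}{\alpha_2}\sum_a q_a$ while $\tfrac1{M!}\binom{M}{n_1,n_2,n_3}n_1!n_2!n_3!=1$, reproduces \eqref{eq:coeff_smallbeta} verbatim with $M=N_j-U_j$. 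This step is laborious but is bookkeeping, not ideas.

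Finally, the assembly. For a macro-user $1-\mathcal S_1=\sum_{u=0}^U{\rm Pr}(u_{{\rm IN},0}=u)\,[1-(\text{inner integral at }u)]\sims\sum_{u=0}^U{\rm Pr}(u_{{\rm IN},0}=u)\,c_u\,\beta^{N_1-u}$ (with $c_u$ the coefficient of the previous paragraph at $M=N_1-u$); as $\beta\downarrow0$ this is dominated by the term with smallest exponent $N_1-u$ and nonzero prefactor --- $u=U$ when ${\rm Pr}(u_{{\rm IN},0}=U)>0$, which under the stated hypotheses selects $(U_1,\mathcal P_1)=(U,{\rm Pr}(u_{{\rm IN},0}=U))$, and $u=0$ otherwise, giving $(U_1,\mathcal P_1)=(0,1)$. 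For a pico-user the serving pico-BS never performs IN, so there is no $u$-sum and $M=N_2$, giving $(U_2,\mathcal P_2)=(0,1)$. The overall statement follows from $1-\mathcal S=\mathcal A_1(1-\mathcal S_1)+\mathcal A_2(1-\mathcal S_2)\sims\mathcal A_1 b_1\beta^{N_1-U}+\mathcal A_2 b_2\beta^{N_2}$ by comparing the exponents $N_1-U$ and $N_2$ in the cases $U<N_1-N_2$, $U=N_1-N_2$, $U>N_1-N_2$. Monotonicity of $b_2$ in $U$ is read off \eqref{eq:coeff_smallbeta} (with $U_2=0,\mathcal P_2=1$): there $U$ enters only through the nonnegative factors $\big(p_{\bar c}(U,T_1,T_2)(1-(1/T_2)^{a-2/\alpha_1})\big)^{m_a}$, every other factor being nonnegative and $U$-independent, while differencing the formula of Lemma~\ref{lem:INprob} gives $p_c(U+1,T_1,T_2)-p_c(U,T_1,T_2)=e^{-\bar{L}}\big(\tfrac{\bar{L}^{U}}{(U+1)!}+\sum_{k\ge U+1}\tfrac{\bar{L}^{k}}{(k+1)!}\big)>0$ whenever $\bar{L}=\bar{L}(T_1,T_2)>0$, so $p_{\bar c}$ strictly decreases in $U$ and hence so does $b_2$. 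The one genuine obstacle in all of this is the dominated-convergence step --- producing the $\beta$-free dominating function and checking its integrability near $y=0$, where ${\rm E}[I^M\mid y]$ diverges and one must see that the $y^{\alpha_j M}$ prefactor exactly compensates.
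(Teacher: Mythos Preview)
Your proposal is correct and follows essentially the same route as the paper: both start from the gamma-CDF series representation of the conditional outage, extract the leading $\beta^{M}$ term, invoke dominated convergence to exchange $\lim_{\beta\to0}$ with the $y$-integral (the paper states this step without the explicit dominating function you construct), and then assemble over $u$ and $j$ by comparing exponents; the monotonicity of $b_2$ is proved identically via $p_{\bar c}$ decreasing in $U$. The only cosmetic difference is that the paper obtains the coefficient by expanding each $\tilde{\mathcal L}^{(n_a)}$ via the asymptotics $B'(a,b,z)=(1-z)^b/b+o((1-z)^b)$ and substituting into the series form of Theorem~\ref{thm:overall_CP}, whereas you compute ${\rm E}[I^{M}\mid y]$ directly from the Poisson moment--cumulant formula; since the $n$th Laplace-transform derivative at $s=0$ is $(-1)^n{\rm E}[I^n]$, the two computations are the same and produce \eqref{eq:coeff_smallbeta} identically.
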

\begin{proof}
See Appendix \ref{proof:thm2}.
\end{proof}

\begin{figure}[t]
\centering
\subfigure[$U=6$]{
\includegraphics[height=0.28\columnwidth,width=0.38\columnwidth]
{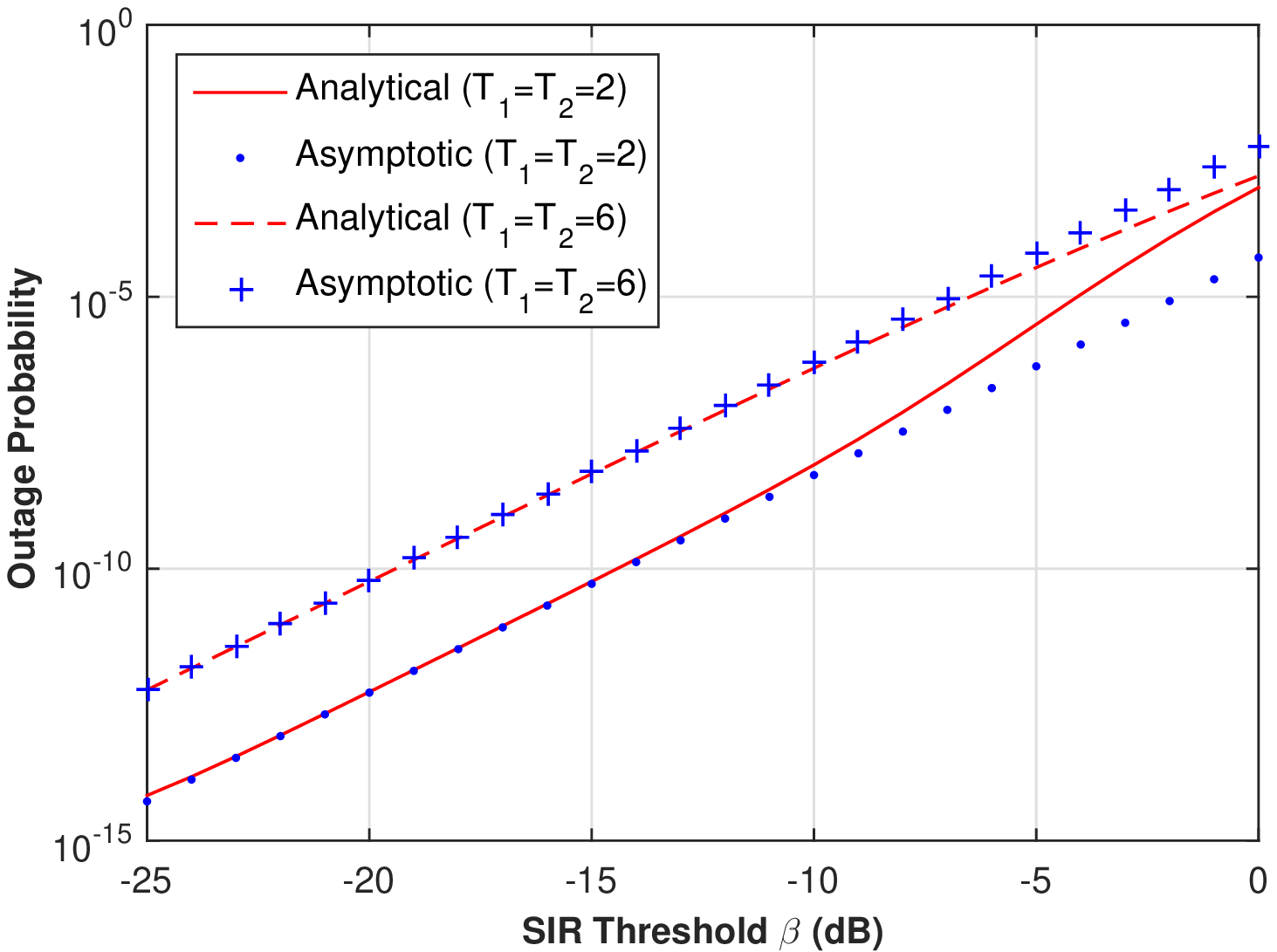}
\label{fig:lowSIR U6}
}
\subfigure[$U=7$]{
\includegraphics[height=0.28\columnwidth,width=0.38\columnwidth]
{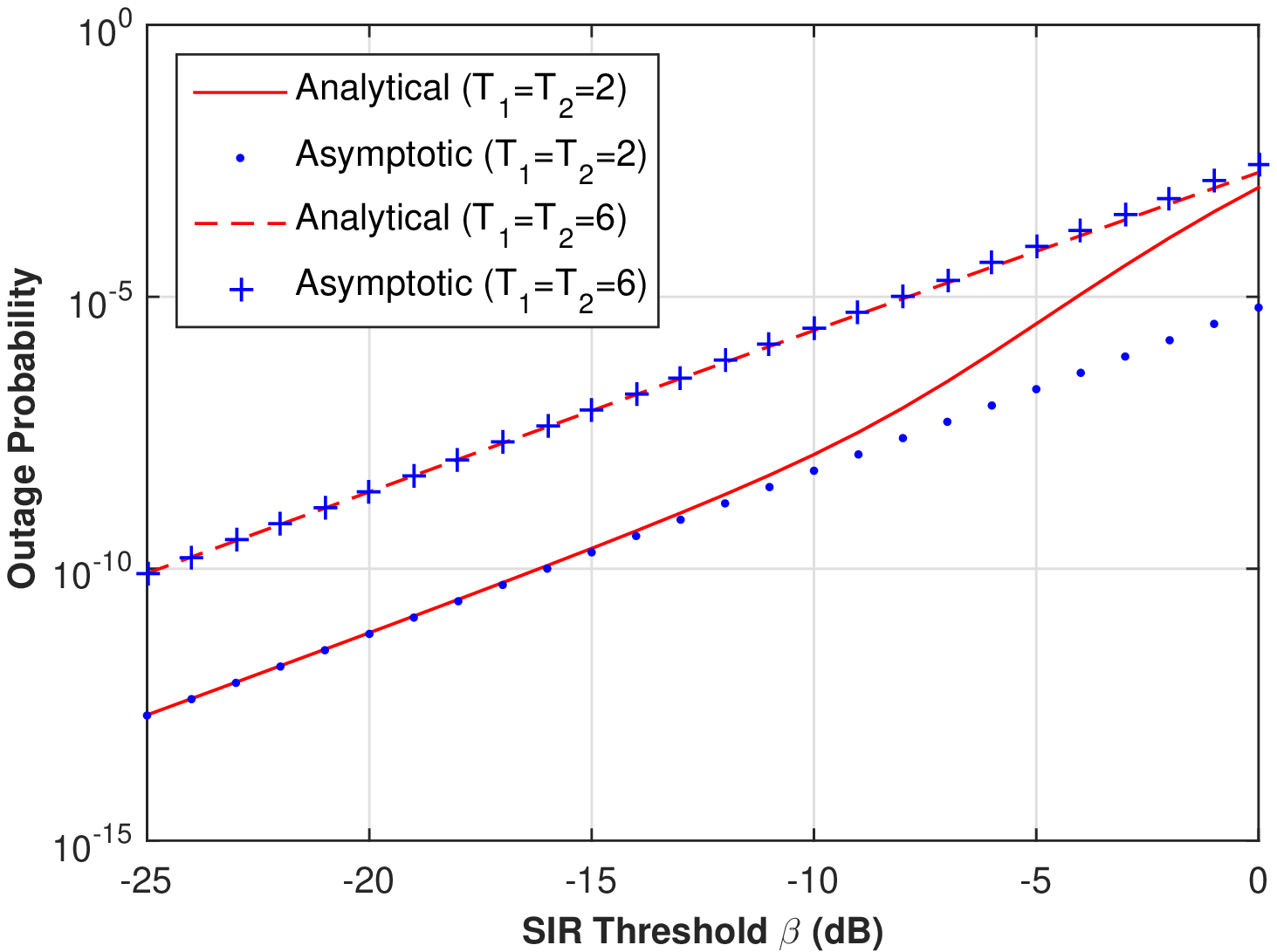}
\label{fig:lowSIR U7}
}
\caption{\small{Outage probability versus SIR threshold in the low SIR threshold regime.  $N_{1}=10$, $N_{2}=8$, $\alpha_{1}=4.5$, $\alpha_{2}=4.7$, $\frac{P_{1}}{P_{2}}=15$ dB, $\lambda_{1}=0.0005$ nodes/m$^{2}$, and $\lambda_{2}=0.001$ nodes/m$^{2}$.}}\label{fig:lowSIR}
\vspace{-5mm}
\end{figure}

From \emph{Theorem \ref{them:CPj_lowbeta}}, we clearly see that the maximum IN DoF $U$ and the IN thresholds $(T_1,T_2)$ affect the asymptotic behavior of the outage probability in dramatically different ways. Specifically,  $U$ can affect the order gain, while  $(T_1,T_2)$ can only affect the coefficient. In addition, we see that $U$ affects the order gain of the asymptotic outage probability through affecting the order gain of the asymptotic macro-user outage probability. On the other hand, in this paper, IN is only performed at macro-BSs, and $U$ is the upper bound on the actual DoF for IN in the ZFBF precoder (which is \emph{random} due to the randomness of the network topology). Therefore, the result of the order gain in \emph{Theorem \ref{them:CPj_lowbeta}} extends the existing order gain result in \emph{single-tier} cellular networks where the DoF for IN in the ZFBF precoder is \emph{deterministic} \cite{huang13}.

 Fig.~\ref{fig:lowSIR} plots the outage probability versus the  SIR threshold in  the low SIR threshold regime.
  We see from  Fig.~\ref{fig:lowSIR} that when the SIR threshold is small, the ``Analytical" curves, which are plotted using  \emph{Theorem \ref{thm:overall_CP}}, are reasonably close to the ``Asymptotic" curves, which are plotted using \emph{Theorem \ref{them:CPj_lowbeta}}.    In addition, from  Fig.~\ref{fig:lowSIR}, we clearly see that the outage probability curves with the same $U$ have the same slope (indicating the same order gain), and  there is a shift between two outage probability curves with the same $U$ but different $(T_{1},T_2)$ (indicating different coefficients). Therefore, Fig.~\ref{fig:lowSIR} verifies \emph{Theorem \ref{them:CPj_lowbeta}}, and shows that the asymptotic outage probability in  the low SIR threshold regime provides a reasonable approximation for the outage probability when the SIR threshold is below -5 dB. 

\subsection{Asymptotic Outage Probability Optimization}
From \emph{Theorem \ref{them:CPj_lowbeta}}, we know that $U$ has a larger impact on the asymptotic outage probability than the IN thresholds. In this part, we characterize the optimal maximum IN DoF $U^{*}(\beta,T_{1},T_{2})$ which minimizes the asymptotic outage probability  given in \emph{Theorem \ref{them:CPj_lowbeta}} (maximizes the asymptotic coverage probability) for given thresholds $T_1$ and $T_2$, i.e.,
\small{\begin{align}\label{eq:optU_asymOP}
U^{*}(\beta,T_{1},T_{2})\define\arg\:\min_{U\in\{0,1,\ldots,N_{1}-1\}}b\left(U,T_{1},T_{2}\right)\beta^{\min\{N_{1}-U,N_{2}\}}\;.
\end{align}}\normalsize

\begin{lemma}[Optimality Property of $U^{*}(\beta,T_{1},T_{2})$]\label{lem:opt_U}
$\exists\bar{\beta}>0$ such that for all  $\beta<\bar{\beta}$, we have\footnote{\emph{Lemma \ref{lem:opt_U}}  is similar to  \emph{Theorem 3} of our previous work \cite{wu14}. The reason is that the two interference management schemes in this paper and \cite{wu14} are both based on IN. One difference is that the proposed  scheme in this paper aims to improve the performance of all users with low SIIR, while the scheme in  \cite{wu14} only improves the performance of offloaded users.\label{ft:relation}}
\small{\begin{align}U^{*}(\beta, T_{1},T_{2})=
\begin{cases}
&\hspace{-2mm}N_{1}-N_{2}-1,\;{\rm if}\:  \mathcal A_2 b_{2}\left(N_{1}-N_{2}-1,T_{1},T_{2}\right)<\mathcal A_1 b_{1}\left(N_{1}-N_{2},T_{1},T_{2}\right)+ \mathcal A_2 b_{2}\left(N_{1}-N_{2},T_{1},T_{2}\right)\\
&\hspace{-2mm}N_{1}-N_{2},\;\hspace{6mm}{\rm otherwise}
\end{cases}
.\notag
\end{align}}\normalsize
\end{lemma}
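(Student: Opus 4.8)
The plan is to treat the quantity minimized in \eqref{eq:optU_asymOP}, namely $\phi(U)\define b\left(U,T_{1},T_{2}\right)\beta^{\min\{N_{1}-U,N_{2}\}}$ over $U\in\{0,1,\ldots,N_{1}-1\}$, by separating the role of the exponent from that of the coefficient: for small $\beta$ a larger exponent always beats a smaller one, so I would first use the exponent to discard all but a handful of candidate values of $U$, and then compare the $\beta$-independent coefficients among the survivors.

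First I would note that $e(U)\define\min\{N_{1}-U,N_{2}\}$ equals $N_{2}$ for every $U\in\{0,\ldots,N_{1}-N_{2}\}$ and equals $N_{1}-U$, which is strictly less than $N_{2}$, for every $U\in\{N_{1}-N_{2}+1,\ldots,N_{1}-1\}$; here $N_{1}-N_{2}\geq 1$ and $N_{1}-N_{2}-1\geq 0$ because $N_{1}>N_{2}$. All the coefficients $b\left(U,T_{1},T_{2}\right)$, $b_{j}\left(U,T_{1},T_{2}\right)$, $\mathcal A_{1}$, $\mathcal A_{2}$ are finite and strictly positive: positivity is immediate from the series \eqref{eq:coeff_smallbeta}, and finiteness follows from the Gaussian-type tails of $f_{Y_{1}},f_{Y_{2}}$ in \eqref{eq:pdfY1}--\eqref{eq:pdfY2}, which make every moment integral $\int_{0}^{\infty}y^{\gamma}f_{Y_{j}}(y)\,{\rm d}y$ appearing in \eqref{eq:coeff_smallbeta} finite. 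Because the index set is finite, I can then set
\[
\bar\beta\define\min_{N_{1}-N_{2}<U\leq N_{1}-1}\left(\frac{b\left(U,T_{1},T_{2}\right)}{b\left(N_{1}-N_{2},T_{1},T_{2}\right)}\right)^{\frac{1}{U-(N_{1}-N_{2})}}>0
\]
(with the convention that the minimum over the empty set, which occurs only when $N_{2}=1$, is $+\infty$), and for every $\beta<\bar\beta$ and every $U>N_{1}-N_{2}$ one checks directly that $\phi(U)>\phi(N_{1}-N_{2})$. Hence for all $\beta<\bar\beta$ the minimizer in \eqref{eq:optU_asymOP} lies in $\{0,1,\ldots,N_{1}-N_{2}\}$.

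On this restricted set the exponent is identically $N_{2}$, so $\phi(U)=b\left(U,T_{1},T_{2}\right)\beta^{N_{2}}$ and the problem reduces to minimizing the coefficient $b\left(U,T_{1},T_{2}\right)$, which no longer depends on $\beta$. For $U\in\{0,\ldots,N_{1}-N_{2}-1\}$ the definition of $b$ in \emph{Theorem \ref{them:CPj_lowbeta}} gives $b\left(U,T_{1},T_{2}\right)=\mathcal A_{2}b_{2}\left(U,T_{1},T_{2}\right)$, and since $b_{2}\left(U,T_{1},T_{2}\right)$ decreases with $U$ (the monotonicity asserted in \emph{Theorem \ref{them:CPj_lowbeta}}), this is minimized over that range at $U=N_{1}-N_{2}-1$, with value $\mathcal A_{2}b_{2}\left(N_{1}-N_{2}-1,T_{1},T_{2}\right)$. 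It then only remains to compare this value with $b\left(N_{1}-N_{2},T_{1},T_{2}\right)=\mathcal A_{1}b_{1}\left(N_{1}-N_{2},T_{1},T_{2}\right)+\mathcal A_{2}b_{2}\left(N_{1}-N_{2},T_{1},T_{2}\right)$: if the former is strictly smaller the minimizer is $U^{*}=N_{1}-N_{2}-1$, and otherwise, using $\mathcal A_{2}b_{2}\left(U,T_{1},T_{2}\right)\geq\mathcal A_{2}b_{2}\left(N_{1}-N_{2}-1,T_{1},T_{2}\right)\geq b\left(N_{1}-N_{2},T_{1},T_{2}\right)$ for every $U\leq N_{1}-N_{2}-1$, the minimizer is $U^{*}=N_{1}-N_{2}$. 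This is exactly the stated dichotomy.

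The only real work is the bookkeeping in the second paragraph: one must argue uniformly over the finitely many $U>N_{1}-N_{2}$ that, once $\beta$ is small enough, the larger exponent $N_{2}$ of $\phi(N_{1}-N_{2})$ overrides any fixed ratio of coefficients, which is where the explicit positive-and-finite form of $b\left(U,T_{1},T_{2}\right)$ from \eqref{eq:coeff_smallbeta} is used. Everything else is a finite comparison resting on the monotonicity of $b_{2}$ in $U$ from \emph{Theorem \ref{them:CPj_lowbeta}}. The argument parallels that for \emph{Theorem 3} of \cite{wu14}, the only change being the form of the coefficients.
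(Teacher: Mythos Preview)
Your argument is correct and follows essentially the same approach as the paper's proof: first observe that the exponent $\min\{N_{1}-U,N_{2}\}$ is maximized (equal to $N_{2}$) precisely on $\{0,\ldots,N_{1}-N_{2}\}$, then restrict to this set and minimize the $\beta$-independent coefficient using the monotonicity of $b_{2}$ in $U$ from \emph{Theorem~\ref{them:CPj_lowbeta}}, and finally compare the two surviving candidates $U=N_{1}-N_{2}-1$ and $U=N_{1}-N_{2}$. The only difference is that you make explicit the threshold $\bar\beta$ and the positivity/finiteness of the coefficients, whereas the paper leaves these implicit.
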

\begin{proof}
See Appendix \ref{proof:lem_optU}.
\end{proof}

\begin{figure}[t]
\centering
\subfigure[$U^*=1$]{
\includegraphics[height=0.28\columnwidth,width=0.38\columnwidth]
{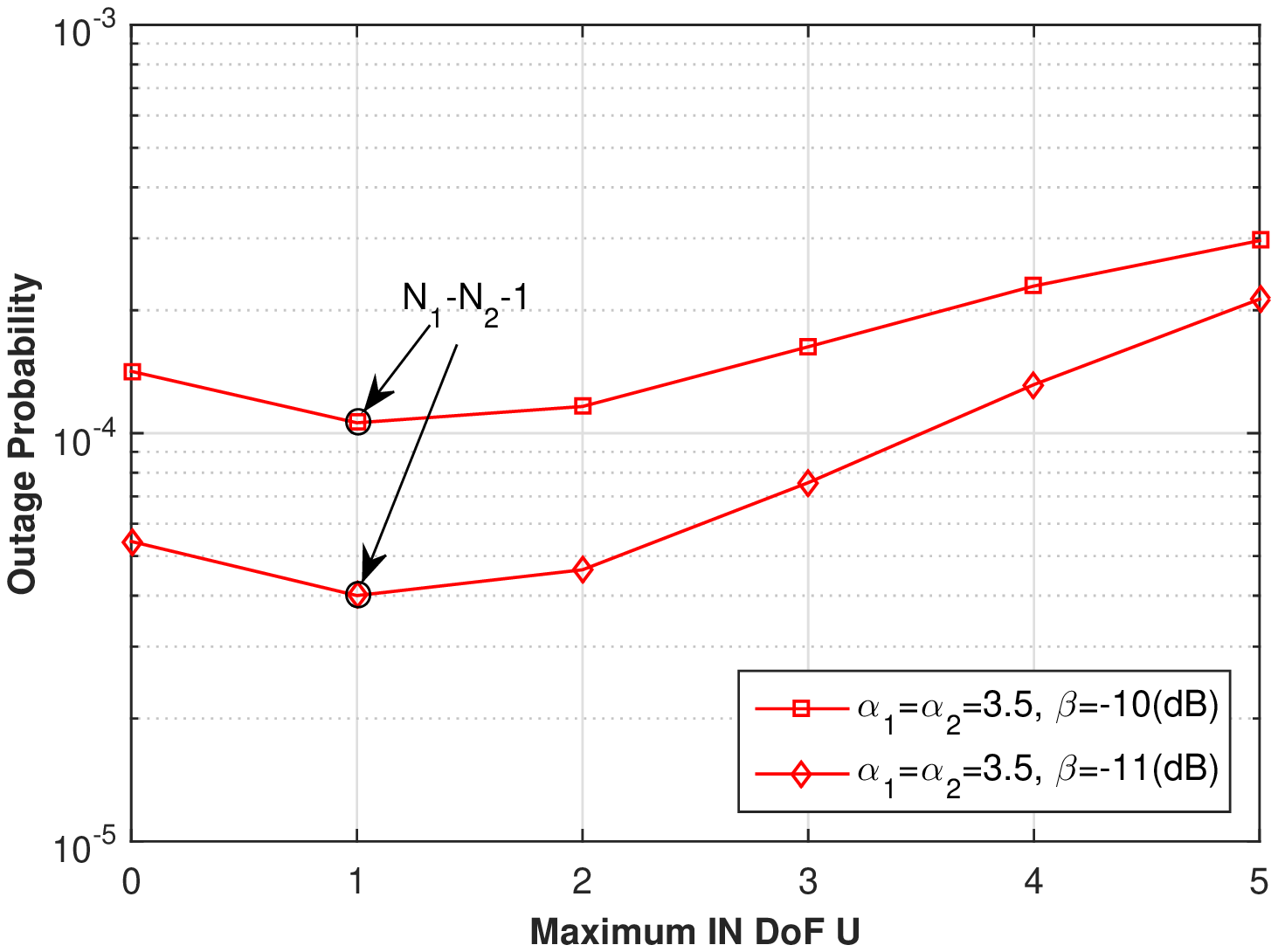}
\label{fig:lowSIR opt U1}
}
\subfigure[$U^*=2$]{
\includegraphics[height=0.29\columnwidth,width=0.38\columnwidth]
{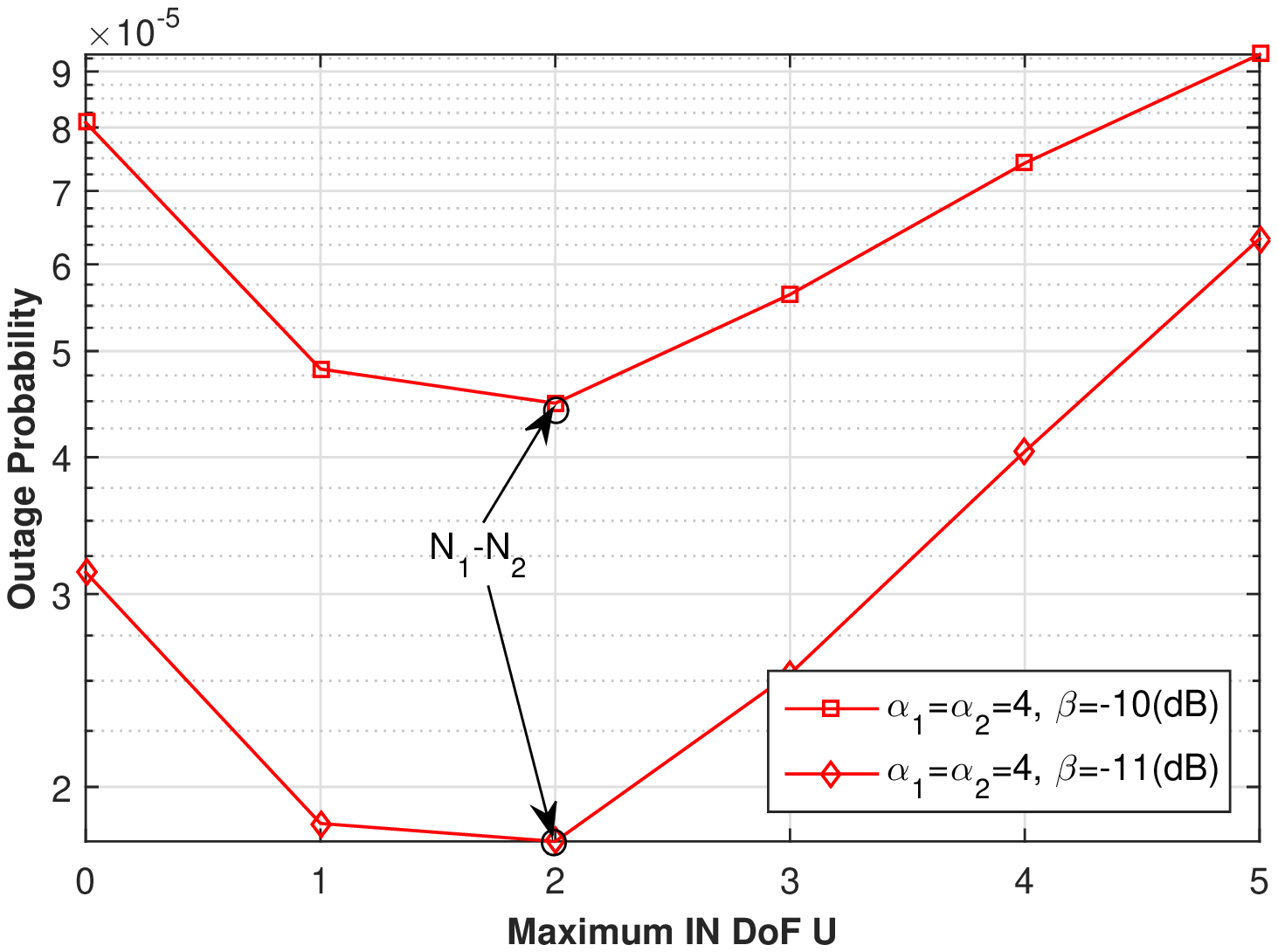}
\label{fig:lowSIR opt U2}
}
\caption{\small{Outage probability versus maximum IN DoF in the low SIR threshold regime.  $N_{1}=6$, $N_{2}=4$, $T_1=T_2=1.8$ $\frac{P_{1}}{P_{2}}=15$ dB, $\lambda_{1}=0.0005$ nodes/m$^{2}$, and $\lambda_{2}=0.001$ nodes/m$^{2}$.}}\label{fig:lowSIR opt}
\vspace{-5mm}
\end{figure}

\begin{figure}[t] \centering
\includegraphics[width=0.4\columnwidth]{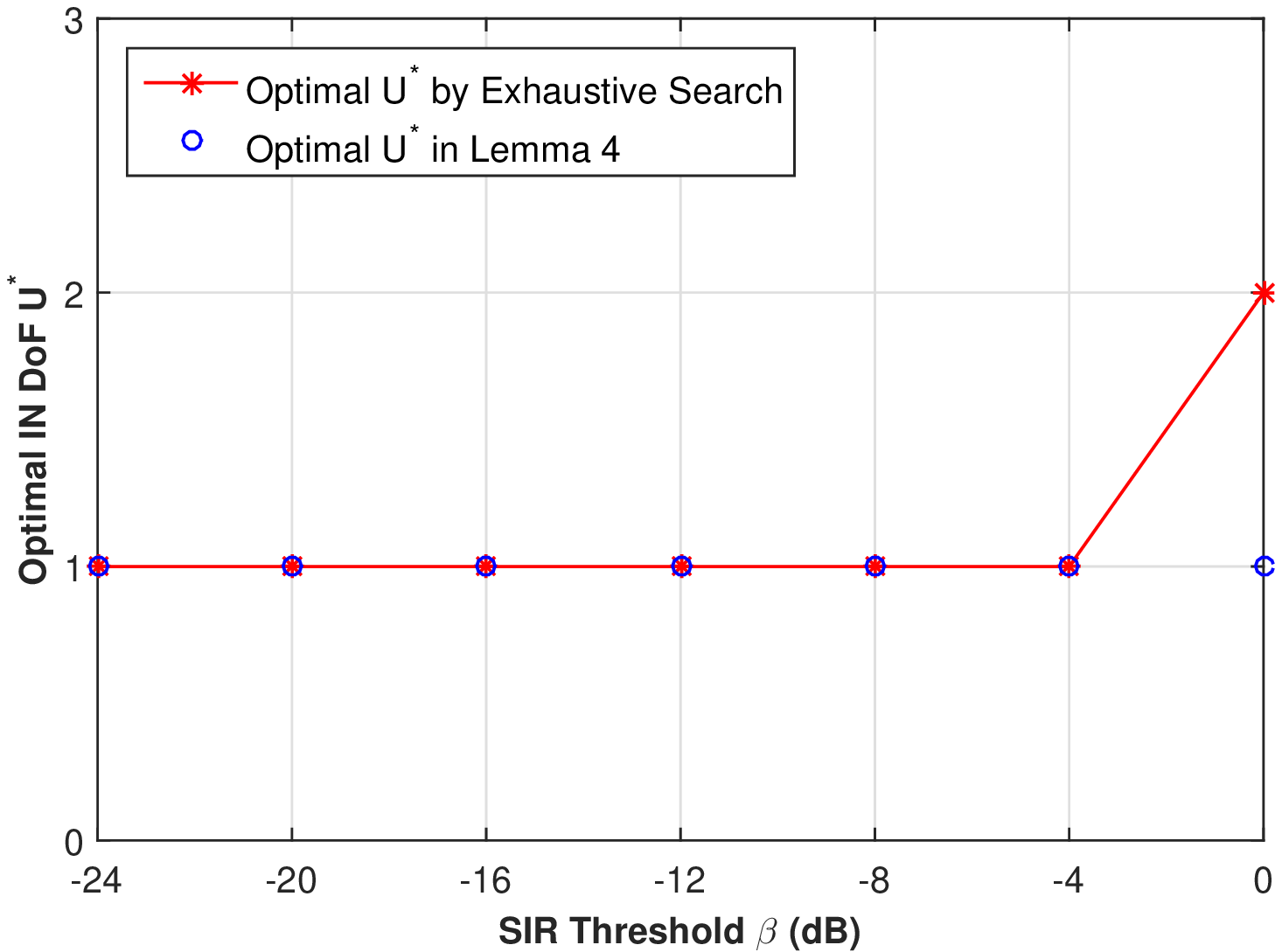}
\caption{\small{Optimal maximum IN DoF  versus SIR threshold  in the low SIR threshold regime. $N_{1}=6$, $N_{2}=4$, $T_1=T_2=2$, $\frac{P_{1}}{P_{2}}=15$ dB, $\alpha_{1}=\alpha_{2}=4$, $\lambda_{1}=0.0005$ nodes/m$^{2}$, and $\lambda_{2}=0.001$ nodes/m$^{2}$.}}\label{application_low_SIR}
\vspace{-5mm}
\end{figure}

\emph{Lemma \ref{lem:opt_U}} indicates that in the low threshold  regime,  the IN scheme achieves  the optimal asymptotic outage probability when reserving  $N_2$ or $N_2+1$ DoF at each macro-BS to boost the desired signal to its scheduled user, which is comparable to the $N_2$ DoF used at each pico-BS  to boost the desired signal to its scheduled user. The reason is that in the low threshold regime, the network performance is mainly limited by the worst users. Balancing the DoF for boosting signals to all the users effectively improves the performance of the worst users.

Fig.~\ref{fig:lowSIR opt} plots the outage probability versus the maximum IN DoF in the  small SIR threshold regime. From Fig.~\ref{fig:lowSIR opt}, we can see that $U^{*}(\beta, T_{1},T_{2})=N_{1}-N_{2}-1$ or $N_{1}-N_{2}$ at small $\beta$. This verifies \emph{Lemma \ref{lem:opt_U}}. Fig.~\ref{application_low_SIR} shows that the asymptotically optimal solution in \emph{Lemma \ref{lem:opt_U}} for the low SIR threshold regime is optimal when the SIR threshold is below -4 dB, as it is the same as the optimal solution optimizing the coverage probability  in \emph{Theorem \ref{thm:overall_CP}} for the general SIR threshold regime. Therefore, the asymptotically optimal solution  in \emph{Lemma \ref{lem:opt_U}} provides good guidance on choosing effective maximum IN DoF when the SIR threshold is relatively small.

\section{Asymptotic Coverage Probability Analysis--High SIR Threshold Regime}\label{sec:CP_high}
In this section, we   analyze and optimize  the coverage probability of the IN scheme in the high   SIR threshold regime, i.e., $\beta\to\infty$.  The asymptotic analysis and optimization   offer important design insights for practical HetNets. 


\subsection{Asymptotic Coverage Probability Analysis}


In this part, we analyze the asymptotic coverage probability of the IN scheme when $\beta\to\infty$.
First, we define the \emph{order gain} of the coverage probability  (in interference-limited systems), i.e.,
the exponent of coverage probability as the SIR threshold increases to infinity:
\begin{align}
d_{c}\define \lim_{\beta\to\infty}\frac{{\rm Pr}\left({\rm SIR}_{0}>\beta\right)}{\log\beta}\;.
\end{align}
Then, we define the \emph{coefficient} of the asymptotic coverage probability: $\lim_{\beta\to\infty}\frac{{\rm Pr}\left({\rm SIR}_{0}>\beta\right)}{\beta^{d_c}}$.  Similarly, leveraging the order gain and the coefficient of the coverage probability, we shall characterize the key behavior of the complex coverage probability in the high SIR threshold regime. 
In the following, we analyze the asymptotic coverage probability in  two scenarios, i.e., $\alpha_1\neq \alpha_2$ and $\alpha_1=\alpha_2$.

When $\alpha_1\neq \alpha_2$, it turns out to be difficult to obtain the expression of the asymptotic coverage probability. Thus, we  derive lower and upper bounds on the asymptotic coverage probability, which are given in the following theorem.
\begin{theorem}[Asymptotic Coverage Probability When $\alpha_{1}\neq\alpha_{2}$]\label{them:CP_highbeta_bound}
Under design parameters $U$, $T_{1}$ and $T_{2}$, when $\alpha_1\neq\alpha_2$ and $\beta\to\infty$, we have:\footnote{$f(\beta)\siml g(\beta)$ means $\lim_{\beta\to\infty}\frac{f(\beta)}{g(\beta)}=1$.} 1) coverage probability of a macro-user: $\mathcal{S}_1\left(\beta,U,T_{1},T_{2}\right)$ $\siml\mathcal{\tilde{S}}_{1}\left(\beta,U,T_{1},T_{2}\right)$, where $\xi_{1}\beta^{-\frac{2}{\alpha_{1}}\frac{\alpha_{\max}}{\alpha_{\min}}}<\mathcal{\tilde{S}}_{1}\left(\beta,U,T_{1},T_{2}\right)<\eta_{1}\left(U,T_1,T_2\right)\beta^{-\frac{2}{\alpha_{1}}}$;
2) coverage probability of a pico-user: $\mathcal{S}_{2}\left(\beta,U,T_{1},T_{2}\right)\siml\mathcal{\tilde{S}}_{2}\left(\beta,U,T_{1},T_{2}\right)$, where $\xi_{2}\beta^{-\frac{2}{\alpha_{2}}\frac{\alpha_{\max}}{\alpha_{\min}}}<\mathcal{\tilde{S}}_{2}\left(\beta,U,T_{1},T_{2}\right)$ $<\eta_{2}\beta^{-\frac{2}{\alpha_{2}}}$;
3) overall coverage probability: $\mathcal{S}\left(\beta,U,T_{1},T_{2}\right)\siml\mathcal{\tilde{S}}\left(\beta,U,T_{1},T_{2}\right)$, where $c^{\rm lb}\beta^{-\frac{2}{\alpha_{\min}}}<\mathcal{\tilde{S}}\left(\beta,U,T_{1},T_{2}\right)<c^{\rm ub}\left(U,T_{1},T_{2}\right)\beta^{-\frac{2}{\alpha_{\max}}}$.
Here, $\alpha_{\min}=\min\left\{\alpha_{1},\alpha_{2}\right\}$, $\alpha_{\max}=\max\left\{\alpha_{1},\alpha_{2}\right\}$, $B(a,b)\triangleq\int_{0}^{1}t^{a-1}(1-t)^{b-1}{\rm d}t$ is the beta function,
$\eta_{1}\left(U,T_{1},T_{2}\right)$ and $\eta_{2}$ are given in (\ref{eq:c1_arb}) and (\ref{eq:c2_arb}), respectively,  $\xi_{j}$ ($j=1,2$) are given in (\ref{eq:c1_arb_lb}) and (\ref{eq:c2_arb_lb}), respectively,
and
\small{\begin{align}
c^{\rm ub}\left(U,T_{1},T_{2}\right)=
\begin{cases}
\eta_{1}\left(U,T_{1},T_{2}\right), & \alpha_1>\alpha_2\\
\eta_{2}, & \alpha_1<\alpha_2
\end{cases},\
c^{\rm lb}=
\begin{cases}
\xi_{1}, & \alpha_1>\alpha_2\\
\xi_{2}, & \alpha_1<\alpha_2
\end{cases}.\nonumber
\end{align}}\normalsize
\begin{figure*}[!t]
\small{\begin{align}
&\eta_{1}\left(U,T_1,T_2\right)=\frac{\pi\lambda_{1}}{\mathcal{A}_{1}}\sum_{u=0}^{U}{\rm Pr}\left(u_{{\rm IN},0}=u\right)\sum_{n=0}^{N_{1}-u-1}\frac{1}{n!}\sum_{n_{2}=0}^{n}\binom{n}{n_{2}}\sum_{(p_{a})_{a=1}^{n_{2}}\in\mathcal{M}_{n_{2}}}\sum_{(q_{a})_{a=1}^{n-n_{2}}\in\mathcal{M}_{n-n_{2}}}\frac{n_{2}!}{\prod_{a=1}^{n_{2}}p_{a}!}\notag\\
&\hspace{10mm}\times \prod_{a=1}^{n_{2}}\left(\frac{2\pi}{\alpha_{1}}\lambda_{1}{ B}\left(1+\frac{2}{\alpha_{1}},a-\frac{2}{\alpha_{1}}\right)\right)^{p_{a}}\frac{\left(n-n_{2}\right)!}{\prod_{a=1}^{n-n_{2}}q_{a}!}\prod_{a=1}^{n-n_{2}}\left(\frac{2\pi}{\alpha_{2}}\lambda_{2}\left(\frac{P_{2}}{P_{1}}\right)^{\frac{2}{\alpha_{2}}}{ B}\left(1+\frac{2}{\alpha_{2}},a-\frac{2}{\alpha_{2}}\right)\right)^{q_{a}}\notag\\
&\hspace{10mm}\times\left(\frac{2\pi\lambda_{1}}{\alpha_{1}}{ B}\left(\frac{2}{\alpha_{1}},1-\frac{2}{\alpha_{1}}\right)\right)^{-\sum_{a=1}^{n_{2}}p_{a}-\frac{\alpha_{1}}{\alpha_{2}}\sum_{a=1}^{n-n_{2}}q_{a}-1}\Gamma\left(\sum_{a=1}^{n_{2}}p_{a}+\frac{\alpha_{1}}{\alpha_{2}}\sum_{a=1}^{n-n_{2}}q_{a}+1\right)
\label{eq:c1_arb}\\
&\eta_{2}=\frac{\pi\lambda_{2}}{\mathcal{A}_{2}}\sum_{n=0}^{N_{2}-1}\frac{1}{n!}\sum_{n_{2}=0}^{n}\binom{n}{n_{2}}\sum_{(p_{a})_{a=1}^{n_{2}}\in\mathcal{M}_{n_{2}}}\sum_{(q_{a})_{a=1}^{n-n_{2}}\in\mathcal{M}_{n-n_{2}}}\frac{n_{2}!}{\prod_{a=1}^{n_{2}}p_{a}!}\frac{\left(n-n_{2}\right)!}{\prod_{a=1}^{n-n_{2}}q_{a}!}\notag\\
&\hspace{10mm}\times \prod_{a=1}^{n_{2}}\left(\frac{2\pi}{\alpha_{1}}\lambda_{1}\left(\frac{P_{1}}{P_{2}}\right)^{\frac{2}{\alpha_{1}}}{ B}\left(1+\frac{2}{\alpha_{1}},a-\frac{2}{\alpha_{1}}\right)\right)^{p_{a}}\prod_{a=1}^{n-n_{2}}\left(\frac{2\pi}{\alpha_{2}}\lambda_{2}{ B}\left(1+\frac{2}{\alpha_{2}},a-\frac{2}{\alpha_{2}}\right)\right)^{q_{a}}\notag\\
&\hspace{10mm}\times\left(\frac{2\pi\lambda_{1}}{\alpha_{2}}{ B}\left(\frac{2}{\alpha_{2}},1-\frac{2}{\alpha_{2}}\right)\right)^{-\frac{\alpha_{2}}{\alpha_{1}}\sum_{a=1}^{n_{2}}p_{a}-\sum_{a=1}^{n-n_{2}}q_{a}-1}\Gamma\left(\frac{\alpha_{2}}{\alpha_{1}}\sum_{a=1}^{n_{2}}p_{a}+\sum_{a=1}^{n-n_{2}}q_{a}+1\right)
\label{eq:c2_arb}\\
&\xi_{1}=\frac{\pi\lambda_{1}\alpha_{\max}}{\mathcal{A}_{1}\alpha_{1}}\left(\frac{2\pi\lambda_{1}}{\alpha_{1}}{ B}\left(\frac{2}{\alpha_{1}},1-\frac{2}{\alpha_{1}}\right)+\frac{2\pi\lambda_{2}}{\alpha_{2}}\left(\frac{P_{2}}{P_{1}}\right)^{\frac{2}{\alpha_{2}}}{ B}\left(\frac{2}{\alpha_{2}},1-\frac{2}{\alpha_{2}}\right)\right)^{-\frac{\alpha_{\max}}{\alpha_{1}}}\Gamma\left(\frac{\alpha_{\max}}{\alpha_{1}}\right)\label{eq:c1_arb_lb}\\
&\xi_{2}=\frac{\pi\lambda_{2}\alpha_{\max}}{\mathcal{A}_{2}\alpha_{2}}\left(\frac{2\pi\lambda_{1}}{\alpha_{1}}\left(\frac{P_{1}}{P_{2}}\right)^{\frac{2}{\alpha_{1}}}{ B}\left(\frac{2}{\alpha_{1}},1-\frac{2}{\alpha_{1}}\right)+\frac{2\pi\lambda_{2}}{\alpha_{2}}{ B}\left(\frac{2}{\alpha_{2}},1-\frac{2}{\alpha_{2}}\right)\right)^{-\frac{\alpha_{\max}}{\alpha_{2}}}\Gamma\left(\frac{\alpha_{\max}}{\alpha_{2}}\right)\label{eq:c2_arb_lb}
\end{align}}\normalsize
\normalsize \hrulefill
\end{figure*}
\end{theorem}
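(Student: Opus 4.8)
The plan is to work directly from the exact integral representations of $\mathcal{S}_1$ and $\mathcal{S}_2$ in \emph{Theorem~\ref{thm:overall_CP}} (equations (\ref{eq:CP1})--(\ref{eq:LTdiff_1O2})) and extract their behaviour as $\beta\to\infty$. Consider the macro-user case first. I would substitute $y=\beta^{-1/\alpha_1}v$ in (\ref{eq:CP1}); since $f_{Y_1}(y)\le\frac{2\pi\lambda_1}{\mathcal{A}_1}y$ with $f_{Y_1}(y)/(\frac{2\pi\lambda_1}{\mathcal{A}_1}y)\to 1$ as $y\downarrow 0$, this extracts the factor $\beta^{-2/\alpha_1}$ from $f_{Y_1}(y)\,\mathrm{d}y$ and renders the argument $s$ of every $\mathcal{\tilde{L}}^{(n)}$ an $O(1)$ quantity: for the macro-interference terms $s^{2/\alpha_1}\to v^2$, whereas for the pico-interference term $s^{2/\alpha_2}\to(P_2/P_1)^{2/\alpha_2}v^{2\alpha_1/\alpha_2}$, because $\beta^{2/\alpha_2}(\beta^{-1/\alpha_1})^{2\alpha_1/\alpha_2}=1$. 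The key step is then to read off the pointwise limits of the $\mathcal{\tilde{L}}^{(n)}$ factors: in $\mathcal{\tilde{L}}^{(n_1)}_{I_{1,1C}}$ the arguments of both complementary incomplete beta functions tend to $0$ and their difference is $O(\beta^{-2/\alpha_1})$ (the annulus of potential-IN macro-BSs that do not select $u_0$ collapses to a point), so this factor tends to $1$ if $n_1=0$ and to $0$ otherwise --- in particular all dependence on $p_{\bar{c}}(U,T_1,T_2)$ disappears in the limit; in $\mathcal{\tilde{L}}^{(n_2)}_{I_{1,1O}}$ and $\mathcal{\tilde{L}}^{(n_3)}_{I_{1,2}}$ the beta-function arguments also vanish, so $B'(a,b,z)\to B(a,b)$, and the limits equal $\exp(-c_1v^2)$, respectively $\exp(-c_2v^{2\alpha_1/\alpha_2})$, times the finite polynomial in $v$ read off from (\ref{eq:LTdiff_1O2}), with $c_1=\frac{2\pi\lambda_1}{\alpha_1}B(\frac{2}{\alpha_1},1-\frac{2}{\alpha_1})$ and $c_2=\frac{2\pi\lambda_2}{\alpha_2}(P_2/P_1)^{2/\alpha_2}B(\frac{2}{\alpha_2},1-\frac{2}{\alpha_2})$. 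Finally I would invoke dominated convergence --- a dominating function being a polynomial in $v$ times $\exp(-\frac{c_1}{2}v^2-\frac{c_2}{2}v^{2\alpha_1/\alpha_2})$, valid for all $\beta$ large since the exponent coefficients converge to positive limits --- to pass the limit through the integral and the finite multinomial and partition sums, obtaining $\mathcal{S}_1\siml\mathcal{\tilde{S}}_1$ with $\mathcal{\tilde{S}}_1=\beta^{-2/\alpha_1}\int_0^\infty(\text{limiting integrand})\,\mathrm{d}v$. The pico-user case is the mirror image under the substitution $y=\beta^{-1/\alpha_2}v$, giving $\mathcal{S}_2\siml\mathcal{\tilde{S}}_2$ of order $\beta^{-2/\alpha_2}$.

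What prevents a closed form is precisely $\alpha_1\neq\alpha_2$: the limiting integrand of $\mathcal{\tilde{S}}_j$ contains the factor $\exp(-c_1v^2-c_2v^{2\alpha_j/\alpha_{3-j}})$, whose two powers of $v$ are incompatible. For the \emph{upper bound} I would use $\exp(-c_2v^{2\alpha_j/\alpha_{3-j}})\le 1$ to drop the cross-tier exponential in $\mathcal{\tilde{S}}_j$; the surviving integrals are Gaussian moments $\int_0^\infty v^{2k+1}e^{-c_1v^2}\,\mathrm{d}v=\frac{1}{2}c_1^{-(k+1)}\Gamma(k+1)$ (for real $k>-1$), and pushing the combinatorial sums of (\ref{eq:CP1}) through reproduces exactly $\eta_1(U,T_1,T_2)$ in (\ref{eq:c1_arb}) (and $\eta_2$ in (\ref{eq:c2_arb}) for a pico-user), the disappearance of $p_{\bar{c}}(U,T_1,T_2)$ coming from the fact that only the $n_1=0$ term of $\mathcal{\tilde{L}}^{(n_1)}_{I_{1,1C}}$ survives. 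For the \emph{lower bound} I would keep only the $n=0$ term of the series in (\ref{eq:CP1}) (every $\mathcal{\tilde{L}}^{(n)}$ is nonnegative, being ${\rm E}[(sI)^{n}e^{-sI}]$), bound each interference Laplace transform from below by replacing $B'(a,b,z)$ with $B(a,b)$, so that $\mathcal{\tilde{S}}_j\ge\frac{2\pi\lambda_j}{\mathcal{A}_j}\int_0^\infty y\exp\!\big(-(1+o(1))(c_1\beta^{2/\alpha_j}y^2+c_2\beta^{2/\alpha_{3-j}}y^{2\alpha_j/\alpha_{3-j}})\big)\mathrm{d}y$, the $o(1)$ absorbing the residual-IN and density exponents which are $O(y^2)$ with $\beta$-independent coefficients. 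I then coarsen using $\beta^{2/\alpha_j},\beta^{2/\alpha_{3-j}}\le\beta^{2/\alpha_{\min}}$ together with $c_1y^2+c_2y^{2\alpha_j/\alpha_{3-j}}\le(c_1+c_2)y^{2\alpha_j/\alpha_{\max}}$ on $(0,1]$ (the part $y>1$ being exponentially small), and rescale $y=\beta^{-\alpha_{\max}/(\alpha_{\min}\alpha_j)}w$; this both produces the scaling $\beta^{-\frac{2}{\alpha_j}\frac{\alpha_{\max}}{\alpha_{\min}}}$ and leaves $\int_0^\infty w\,e^{-(c_1+c_2)w^{2\alpha_j/\alpha_{\max}}}\,\mathrm{d}w=\frac{\alpha_{\max}}{2\alpha_j}(c_1+c_2)^{-\alpha_{\max}/\alpha_j}\Gamma(\alpha_{\max}/\alpha_j)$ in the limit, i.e.\ exactly $\xi_j$ of (\ref{eq:c1_arb_lb})--(\ref{eq:c2_arb_lb}).

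For the overall coverage probability I would combine $\mathcal{S}=\mathcal{A}_1\mathcal{S}_1+\mathcal{A}_2\mathcal{S}_2$ with parts~1) and~2): since $\beta^{-2/\alpha}$ decays the more slowly the larger $\alpha$ is, the tier with path-loss exponent $\alpha_{\max}$ dominates the sum, so $\mathcal{S}\siml\mathcal{\tilde{S}}:=\mathcal{A}_1\mathcal{\tilde{S}}_1+\mathcal{A}_2\mathcal{\tilde{S}}_2$, which is of order $\beta^{-2/\alpha_{\max}}$; for $\beta$ large the subdominant tier and the factors $\mathcal{A}_j<1$ are absorbed, so the bounds for the dominating tier yield $c^{\rm lb}\beta^{-2/\alpha_{\min}}<\mathcal{\tilde{S}}<c^{\rm ub}\beta^{-2/\alpha_{\max}}$ with $c^{\rm ub},c^{\rm lb}$ selected by the sign of $\alpha_1-\alpha_2$ as stated. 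The step I expect to be the main obstacle is the rigorous justification of the limit interchange: exhibiting a single dominating function valid uniformly over all large $\beta$, and verifying that the finitely many $\mathcal{\tilde{L}}^{(n_1)}_{I_{1,1C}}$ terms with $n_1\ge1$ genuinely vanish (so that $p_{\bar{c}}(U,T_1,T_2)$ drops out of the asymptotics); once the limits are in hand, the bounding manipulations with beta and Gamma functions are routine.
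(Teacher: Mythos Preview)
Your proposal is correct and follows essentially the same route as the paper's proof (Appendix~F): the same upper-bound idea of dropping the cross-tier exponential and evaluating the remaining Gamma-type integral, the same lower-bound idea of restricting to $y\in(0,1]$, keeping only the $n=0$ term, and replacing both powers $y^{2\alpha_j/\alpha_1},y^{2\alpha_j/\alpha_2}$ by the single power $y^{2\alpha_j/\alpha_{\max}}$, and the same observation that only $n_1=0$ survives so that $p_{\bar c}(U,T_1,T_2)$ disappears from the asymptotics. The only organisational difference is that you first rescale $y=\beta^{-1/\alpha_j}v$ and invoke dominated convergence before bounding, whereas the paper bounds $g_j f_{Y_j}$ first, integrates the bound in closed form, and then reads off the $\beta\to\infty$ behaviour via the series expansions $B'(a,b,z)=B(a,b)-z^{a}/a+o(z^{a})$; both orderings use the same estimates and arrive at the same $\eta_j$ and $\xi_j$.
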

\begin{proof}
See Appendix \ref{proof:CP_highbeta_uneql}.
\end{proof}

\begin{figure}[t]
\centering
\subfigure[$\alpha_{1}=4$, $\alpha_{2}=3.5$]{
\includegraphics[height=0.28\columnwidth,width=0.38\columnwidth]
{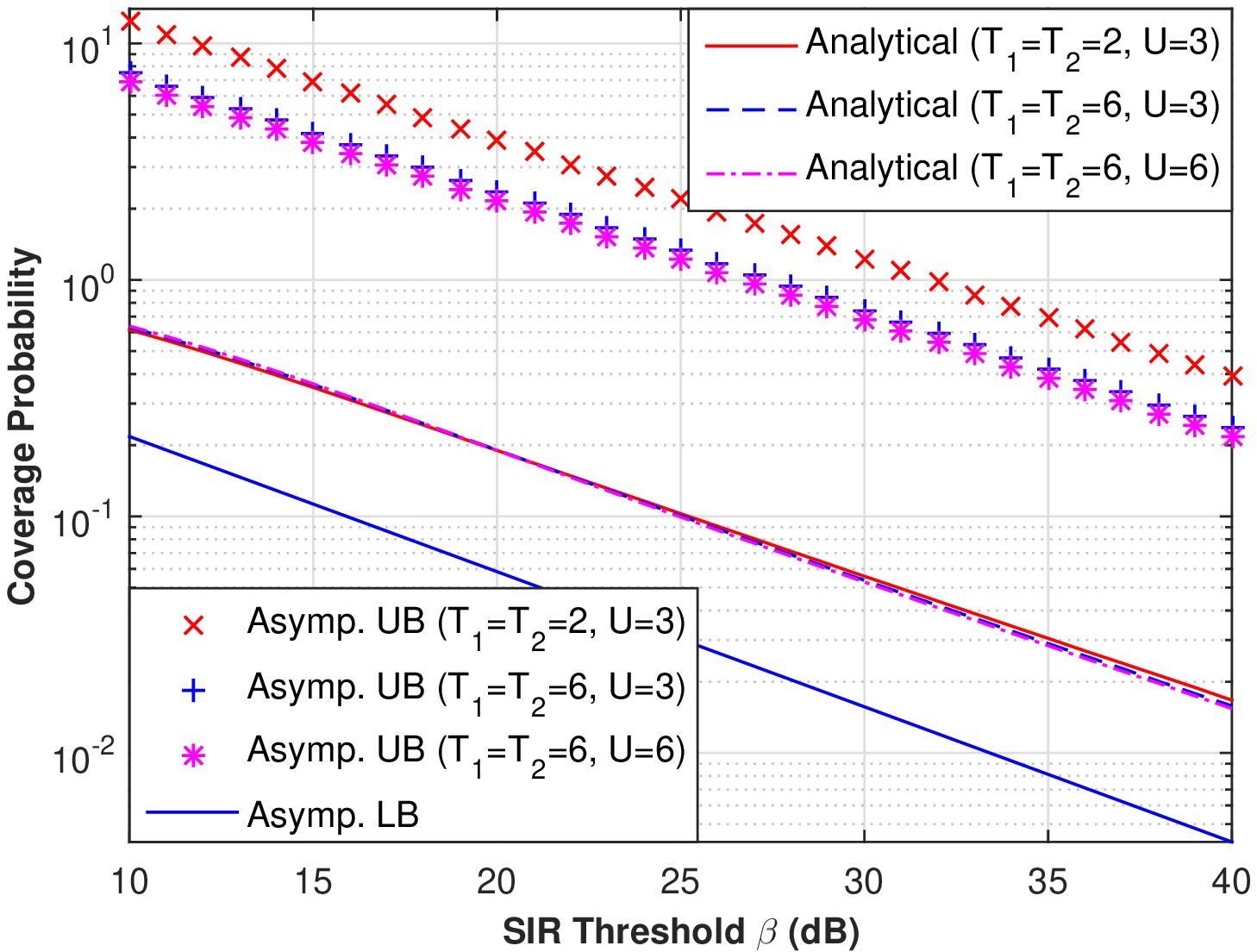}
\label{fig:highSIR_n}
}
\subfigure[$\alpha_{1}=\alpha_{2}=4$]{
\includegraphics[height=0.28\columnwidth,width=0.38\columnwidth]
{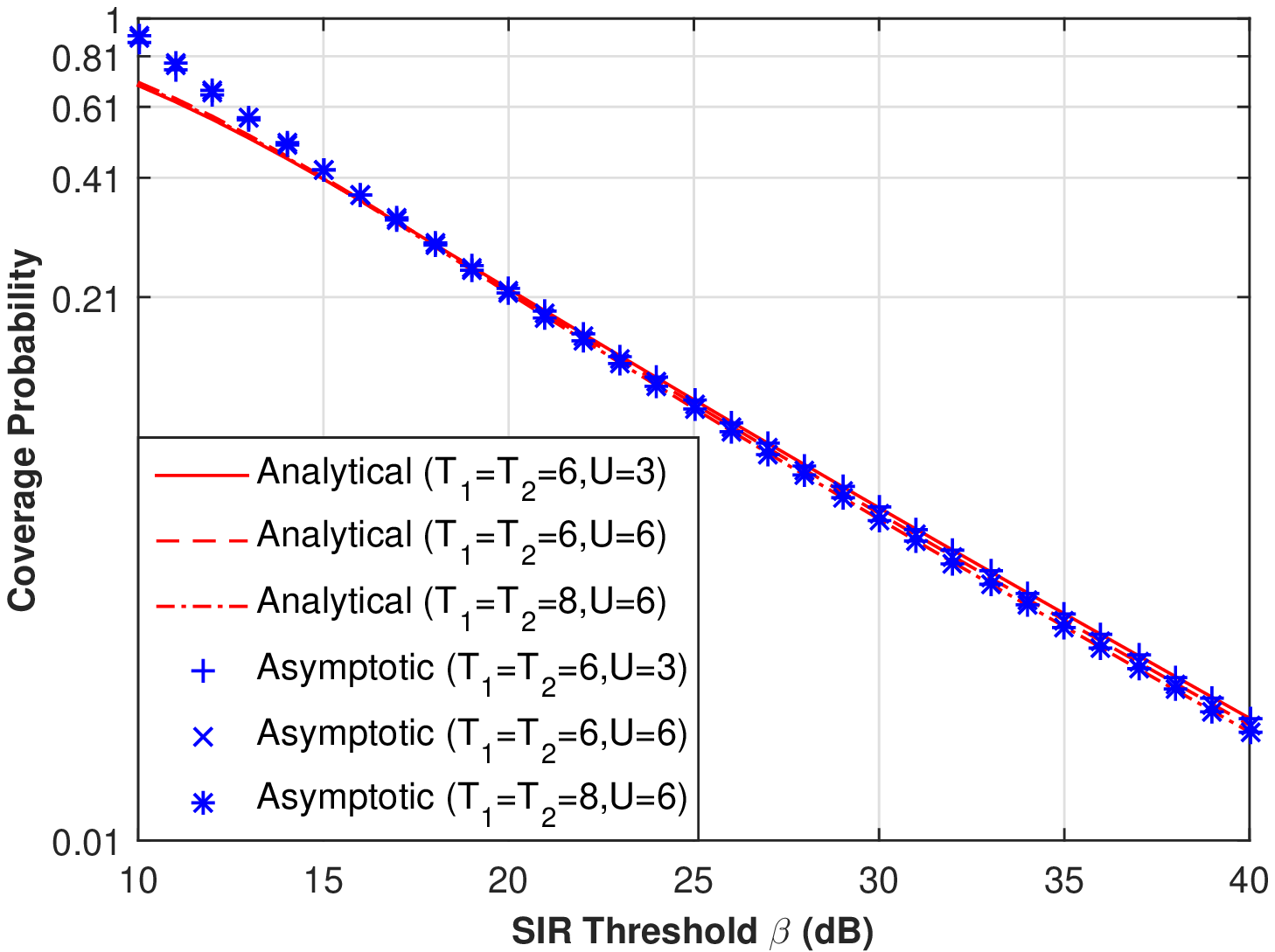}
\label{fig:highSIR_e}
}
\caption{\small{Coverage probability versus SIR threshold in the high SIR threshold regime. $N_{1}=10$, $N_{2}=8$, $\frac{P_{1}}{P_{2}}=15$ dB,  $\lambda_{1}=0.0005$ nodes/m$^{2}$, and $\lambda_{2}=0.001$ nodes/m$^{2}$.}}\label{fig:highSIR}
\vspace{-5mm}
\end{figure}

When $\alpha_1= \alpha_2$, we    derive the asymptotic coverage probability, which is given below.

\begin{theorem}[Asymptotic Coverage Probability When $\alpha_{1}=\alpha_{2}$]\label{them:CP_highbeta_eql}
Under design parameters $U$, $T_{1}$ and $T_{2}$, when $\alpha_{1}=\alpha_{2}=\alpha$ and $\beta\to\infty$, we have:
\begin{figure*}[!t]
\small{\begin{align}
&c_{1}\left(U,T_{1},T_{2}\right)=\frac{\pi\lambda_{1}}{\mathcal{A}_{1}}\sum_{u=0}^{U}{\rm Pr}\left(u_{{\rm IN},0}=u\right)\sum_{n=0}^{N_{1}-u-1}\frac{1}{n!}\sum_{n_{2}=0}^{n}\binom{n}{n_{2}}\sum_{(p_{a})_{a=1}^{n_{2}}\in\mathcal{M}_{n_{2}}}\sum_{(q_{a})_{a=1}^{n-n_{2}}\in\mathcal{M}_{n-n_{2}}}\frac{n_{2}!}{\prod_{a=1}^{n_{2}}p_{a}!}\notag\\
&\hspace{10mm}\times \prod_{a=1}^{n_{2}}\left(\frac{2\pi}{\alpha}\lambda_{1}{ B}\left(1+\frac{2}{\alpha},a-\frac{2}{\alpha}\right)\right)^{p_{a}}\frac{\left(n-n_{2}\right)!}{\prod_{a=1}^{n-n_{2}}q_{a}!}\prod_{a=1}^{n-n_{2}}\left(\frac{2\pi}{\alpha}\lambda_{2}\left(\frac{P_{2}}{P_{1}}\right)^{\frac{2}{\alpha}}{ B}\left(1+\frac{2}{\alpha},a-\frac{2}{\alpha}\right)\right)^{q_{a}}\notag\\
&\hspace{10mm}\times \left(\frac{2\pi}{\alpha}\lambda_{1}{ B}\left(\frac{2}{\alpha},1-\frac{2}{\alpha}\right)+\frac{2\pi}{\alpha}\lambda_{2}\left(\frac{P_{2}}{P_{1}}\right)^{\frac{2}{\alpha}}{ B}\left(\frac{2}{\alpha},1-\frac{2}{\alpha}\right)\right)^{-\frac{\alpha}{\alpha}\sum_{a=1}^{n_{2}}p_{a}-\frac{\alpha}{\alpha}\sum_{a=1}^{n-n_{2}}q_{a}-\frac{\alpha}{\alpha}}\notag\\
&\hspace{10mm}\times\Gamma\left(\sum_{a=1}^{n_{2}}p_{a}+\sum_{a=1}^{n-n_{2}}q_{a}+1\right)\;\label{eq:c1}\\
&c_{2}\left(T_{1},T_{2}\right)=\frac{\pi\lambda_{2}}{\mathcal{A}_{2}}\sum_{n=0}^{N_{2}-1}\frac{1}{n!}\sum_{n_{2}=0}^{n}\binom{n}{n_{2}}\sum_{(p_{a})_{a=1}^{n_{2}}\in\mathcal{M}_{n_{2}}}\sum_{(q_{a})_{a=1}^{n-n_{2}}\in\mathcal{M}_{n-n_{2}}}\frac{n_{2}!}{\prod_{a=1}^{n_{2}}p_{a}!}\frac{\left(n-n_{2}\right)!}{\prod_{a=1}^{n-n_{2}}q_{a}!}\notag\\
&\hspace{10mm}\times \prod_{a=1}^{n_{2}}\left(\frac{2\pi}{\alpha}\lambda_{1}\left(\frac{P_{1}}{P_{2}}\right)^{\frac{2}{\alpha}}{ B}\left(1+\frac{2}{\alpha},a-\frac{2}{\alpha}\right)\right)^{p_{a}}\prod_{a=1}^{n-n_{2}}\left(\frac{2\pi}{\alpha}\lambda_{2}{ B}\left(1+\frac{2}{\alpha},a-\frac{2}{\alpha}\right)\right)^{q_{a}}\notag\\
&\hspace{10mm}\times \left(\frac{2\pi}{\alpha}\lambda_{1}\left(\frac{P_{1}}{P_{2}}\right)^{\frac{2}{\alpha}}{ B}\left(\frac{2}{\alpha},1-\frac{2}{\alpha}\right)+\frac{2\pi}{\alpha}\lambda_{2}{ B}\left(\frac{2}{\alpha},1-\frac{2}{\alpha}\right)\right)^{-\frac{\alpha}{\alpha}\sum_{a=1}^{n_{2}}p_{a}-\frac{\alpha}{\alpha}\sum_{a=1}^{n-n_{2}}q_{a}-\frac{\alpha}{\alpha}}\notag\\
&\hspace{10mm}\times\Gamma\left(\sum_{a=1}^{n_{2}}p_{a}+\sum_{a=1}^{n-n_{2}}q_{a}+1\right)\;\label{eq:c2}
\end{align}}\normalsize
\normalsize \hrulefill
\end{figure*}
1) coverage probability of a macro-user: $\mathcal{S}_{1}\left(\beta,U,T_{1},T_{2}\right)\siml c_{1}\left(U,T_{1},T_{2}\right)\beta^{-\frac{2}{\alpha}}$,
where $c_{1}\left(U,T_{1},T_{2}\right)$ is given in (\ref{eq:c1});
2) coverage probability of a pico-user: $\mathcal{S}_{2}\left(\beta,U,T_{1},T_{2}\right)\siml c_{2}\left(T_{1},T_{2}\right)\beta^{-\frac{2}{\alpha}}$,
where $c_{2}\left(T_{1},T_{2}\right)$ is given in (\ref{eq:c2});
3) overall coverage probability: $\mathcal{S}(\beta,T_{1},T_{2})\siml\left(\mathcal{A}_{1}c_{1}\left(U,T_{1},T_{2}\right)+\mathcal{A}_{2}c_{2}\left(T_{1},T_{2}\right)\right)\beta^{-\frac{2}{\alpha}}$.
\end{theorem}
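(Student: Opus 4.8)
The plan is to obtain the asymptotics directly from the exact integral formulas (\ref{eq:CP1})--(\ref{eq:CP2}) of \emph{Theorem \ref{thm:overall_CP}}, specialized to $\alpha_{1}=\alpha_{2}=\alpha$, via the change of variables that isolates the dominant scaling. For part~1, in (\ref{eq:CP1}) I would substitute $t=\beta y^{\alpha}$, so that $y=(t/\beta)^{1/\alpha}$ and $\mathrm{d}y=\tfrac{1}{\alpha}\beta^{-1/\alpha}t^{1/\alpha-1}\mathrm{d}t$. Because $\alpha_{1}=\alpha_{2}$, the $s$-arguments of the three Laplace-transform-derivative factors then become fixed multiples of $t$ (namely $t$ for $I_{1,1C}$ and $I_{1,1O}$, and $\tfrac{P_{2}}{P_{1}}t$ for $I_{1,2}$), while all associated radii are proportional to $(t/\beta)^{1/\alpha}\to0$; moreover $f_{Y_{1}}(y)\,\mathrm{d}y=\tfrac{2\pi\lambda_{1}}{\mathcal{A}_{1}\alpha}\beta^{-2/\alpha}t^{2/\alpha-1}\exp\!\big(-\pi(\lambda_{1}+\lambda_{2}(P_{2}/P_{1})^{2/\alpha})(t/\beta)^{2/\alpha}\big)\mathrm{d}t$, which exhibits the $\beta^{-2/\alpha}$ factor explicitly. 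After dividing by $\beta^{-2/\alpha}$, it remains to pass $\beta\to\infty$ inside the $t$-integral (the sums over $u$, $n$, and $(n_{a})\in\mathcal{N}_{n}$ are finite) and to identify the pointwise limit of the integrand.

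The pointwise limit rests on three observations. First, since $\alpha>2$ gives $\tfrac{2}{\alpha}\in(0,1)$ and since $B'(a,b,z)\to B(a,b)$ as $z\to0^{+}$ whenever $a>0$, the factors $\tilde{\mathcal{L}}^{(n_{2})}_{I_{1,1O}}$ and $\tilde{\mathcal{L}}^{(n_{3})}_{I_{1,2}}$ --- whose arguments $\tfrac{1}{1+sr^{-\alpha}}$ equal $\tfrac{1}{1+\beta/T_{1}}$ and $\tfrac{1}{1+\beta}$ respectively and hence tend to $0$ --- converge to their ``full-plane'' analogues in which the complementary incomplete beta functions are replaced by complete beta functions $B(\cdot,\cdot)$. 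Second, and crucially, in the residual-interference factor $\tilde{\mathcal{L}}^{(n_{1})}_{I_{1,1C}}$ from (\ref{eq:LTdiff_I1in}) the two relevant arguments are $\tfrac{1}{1+\beta}$ and $\tfrac{1}{1+\beta/T_{1}}$, which \emph{both} tend to $0$, so every difference of $B'$-terms appearing there tends to $0$; consequently this factor tends to $1$ when $n_{1}=0$ and to $0$ when $n_{1}\ge1$. This collapses the three-index sum over $\mathcal{N}_{n}$ to its slice $n_{1}=0$, $n=n_{2}+n_{3}$ (with $\binom{n}{0,n_{2},n_{3}}=\binom{n}{n_{2}}$), which is exactly why $c_{1}$ and $c_{2}$ carry no explicit dependence on $p_{\bar{c}}$, $T_{1}$, $T_{2}$ beyond ${\rm Pr}(u_{{\rm IN},0}=u)$. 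Third, the limiting integrand is then a finite sum of monomials of the form $C\,t^{\frac{2}{\alpha}(\sum p_{a}+\sum q_{a})}e^{-At^{2/\alpha}}$, where $A=\tfrac{2\pi}{\alpha}\big(\lambda_{1}+\lambda_{2}(P_{2}/P_{1})^{2/\alpha}\big)B(\tfrac{2}{\alpha},1-\tfrac{2}{\alpha})$ comes from the product of the two limiting exponentials (the $f_{Y_{1}}$ exponential contributing only a factor $\to1$); the substitution $v=t^{2/\alpha}$ turns $\int_{0}^{\infty}t^{\frac{2}{\alpha}m}e^{-At^{2/\alpha}}t^{2/\alpha-1}\mathrm{d}t$ into $\tfrac{\alpha}{2}A^{-(m+1)}\Gamma(m+1)$, and together with the prefactor $\tfrac{2\pi\lambda_{1}}{\mathcal{A}_{1}\alpha}$ this reassembles precisely into (\ref{eq:c1}); here the hypothesis $\alpha_{1}=\alpha_{2}$ is essential, since for $\alpha_{1}\ne\alpha_{2}$ the exponent would be $a_{1}t^{2/\alpha_{1}}+a_{2}t^{2/\alpha_{2}}$ with two different powers of $t$, which is why \emph{Theorem \ref{them:CP_highbeta_bound}} only gives bounds. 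Part~2 is entirely parallel: starting from (\ref{eq:CP2}) with $t=\beta y^{\alpha}$, the residual-interference factor $\tilde{\mathcal{L}}^{(n_{1})}_{I_{2,1C}}$ again collapses to $n_{1}=0$ (its radii $(P_{1}/P_{2})^{1/\alpha}y$ and $(P_{1}T_{2}/P_{2})^{1/\alpha}y$ tend to $0$ while the $s$-argument stays a fixed multiple of $t$), the pico serving BS retains its full $M_{2}=N_{2}$ degrees of freedom, and (\ref{eq:c2}) is obtained the same way and does not depend on $U$. Part~3 then follows immediately from $\mathcal{S}=\mathcal{A}_{1}\mathcal{S}_{1}+\mathcal{A}_{2}\mathcal{S}_{2}$.

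The hard part will be justifying the interchange of $\lim_{\beta\to\infty}$ with the $t$-integral, which I would handle by dominated convergence. Fix $\beta_{0}$ large enough that $B'(\tfrac{2}{\alpha},1-\tfrac{2}{\alpha},\tfrac{1}{1+\beta/T_{1}})\ge\tfrac{1}{2}B(\tfrac{2}{\alpha},1-\tfrac{2}{\alpha})$, together with the analogous bound for the $\tfrac{1}{1+\beta}$ argument of the pico factor, holds for all $\beta\ge\beta_{0}$. Then, using $B'(a,b,\cdot)\le B(a,b)$, $0\le p_{\bar{c}}\le1$, the fact that the exponential prefactor of each Laplace factor is $\le1$ (for $\tilde{\mathcal{L}}_{I_{1,1C}}$ because the $B'$-difference in its exponent is nonnegative when $T_{1}\ge1$, the inner radius of the annulus being $\le$ the outer one), and $\exp(-\pi(\cdots)(t/\beta)^{2/\alpha})\le1$, every integrand in the family $\{\beta\ge\beta_{0}\}$ is bounded by a single polynomial in $t^{2/\alpha}$ (of degree controlled by $N_{1}$, resp.\ $N_{2}$) times $e^{-\frac{A}{2}t^{2/\alpha}}t^{2/\alpha-1}$, which is integrable on $(0,\infty)$ since $\tfrac{2}{\alpha}-1\in(-1,0)$. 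The remaining work --- tracking the multinomial coefficients $\binom{n}{n_{1},n_{2},n_{3}}$, the Bell-polynomial weights $\tfrac{n!}{\prod m_{a}!}$, and the powers of $\tfrac{2\pi\lambda_{j}}{\alpha}$ and $(P_{2}/P_{1})^{2/\alpha}$ through the substitution so that they recombine into (\ref{eq:c1})--(\ref{eq:c2}), with the Gamma functions there being just the integrals $\int_{0}^{\infty}v^{m}e^{-Av}\mathrm{d}v$ --- is routine bookkeeping.
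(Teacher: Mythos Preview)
Your argument is correct and arrives at the same answer, but the route differs from the paper's in one structural respect: the order in which you integrate and take the $\beta\to\infty$ limit. The paper does not rescale by $t=\beta y^{\alpha}$; instead, using that $\alpha_{1}=\alpha_{2}=\alpha$ makes every $y$-dependence in the integrand of (\ref{eq:CPj_Sj_pdfYj}) a power of $y^{2}$ times a single Gaussian-type factor $\exp\big(-(c_{1}(\beta)\beta^{2/\alpha}+c_{2}(\beta)\beta^{2/\alpha}+a_{1}+a_{2})y^{2}\big)$, it evaluates the $y$-integral \emph{exactly} via $\int_{0}^{\infty}y^{2m+1}e^{-by^{2}}\,\mathrm{d}y=\tfrac{1}{2}b^{-m-1}\Gamma(m+1)$, obtaining a finite closed-form sum in $\beta$ (equation (\ref{eq:highSIR_equal_alpha})). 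Only then does it read off the dominant power of $\beta$ term by term using the expansions $B'(a,b,z)=B(a,b)-z^{a}/a+o(z^{a})$ as $z\to0$ (your same ``full-plane'' limit), which again forces $n_{1}=0$ for exactly the reason you give --- the $B'$-difference in the $I_{j,1C}$ factor is $O(\beta^{-1-2/\alpha})$, so any $m_{a}\ge1$ costs an extra $\beta^{-(1+2/\alpha)}$. What the paper's order buys is that no dominated-convergence justification is needed: the $y$-integral is already a Gamma function, and the asymptotics are algebraic manipulations on a finite expression. What your order buys is conceptual transparency --- the $\beta^{-2/\alpha}$ emerges immediately from the Jacobian of the rescaling rather than from cancellation in the exponent $(c_{1}(\beta)+c_{2}(\beta))\beta^{2/\alpha}\sim A$, and your dominated-convergence envelope is carefully set up. Either way the bookkeeping to land on (\ref{eq:c1})--(\ref{eq:c2}) is identical.
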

\begin{proof}
See Appendix \ref{proof:largebeta_eqlalpha}.
\end{proof}


From \emph{Theorem \ref{them:CP_highbeta_bound}} and \emph{Theorem \ref{them:CP_highbeta_eql}}, we clearly see that when $\alpha_1\neq \alpha_2$, the order gains of the lower and upper bounds on   the asymptotic coverage probability do not depend on $U$, $T_{1}$ and $T_{2}$; when $\alpha_{1}=\alpha_{2}$, the order gain of  the asymptotic coverage probability does not depend on $U$, $T_{1}$ and $T_{2}$. Hence, for arbitrary $\alpha_1$ and $\alpha_2$, the design parameters $U$, $T_{1}$ and $T_{2}$ do not affect the order gain of the asymptotic coverage probability in both scenarios. In other words, the IN scheme does not provide order-wise performance improvement compared to the simple beamforming scheme  without interference management when $\beta\to\infty$.
In addition, $T_{1}$ and $T_{2}$ affect the coefficient of  the upper bound on   the asymptotic coverage probability when $\alpha_1\neq \alpha_2$  and the coefficient of the asymptotic coverage probability when $\alpha_1= \alpha_2$. $U$ affects the coefficient of the upper bound on the asymptotic coverage probability when $\alpha_1\neq \alpha_2$  and the coefficient of the asymptotic coverage probability  when $\alpha_1= \alpha_2$,  through affecting the upper bound on  the asymptotic coverage probability of a macro-user when $\alpha_1\neq \alpha_2$  and the asymptotic coverage probability of a macro-user  when $\alpha_1= \alpha_2$, respectively.

%
%

  Fig.~\ref{fig:highSIR} plots the coverage probability versus the SIR threshold in the high SIR threshold regime for  $\alpha_1\neq\alpha_2$ and $\alpha_1=\alpha_2$, respectively.  We see from  Fig.~\ref{fig:highSIR_n} that when   $\alpha_1\neq\alpha_2$, the ``Analytical" curves, which are plotted using   \emph{Theorem \ref{thm:overall_CP}}, are bounded by the corresponding ``Asymptotic" upper bound curves and lower bound curve, which are plotted using \emph{Theorem~\ref{them:CP_highbeta_bound}}.  Note that there is only one ``Asymptotic'' lower bound curve, as  the asymptotic lower bound is independent of $U$ and $(T_1,T_2)$.  In addition, from  Fig.~\ref{fig:highSIR_n}, we clearly see that the coverage probability curves with different  $U$ or $(T_1,T_2)$ have slightly different  slopes (indicating  different order gains), and  there is a small shift between any two coverage probability curves with different $U$ or $(T_{1},T_2)$ (indicating different coefficients).
On the other hand,
we see from  Fig.~\ref{fig:highSIR_e}  that when $\alpha_1=\alpha_2$, the ``Analytical" curves, which are plotted using $\mathcal{S}\left(\beta,U,T_{1},T_{2}\right)$ in \emph{Theorem \ref{thm:overall_CP}}, are reasonably close to the ``Asymptotic" curves, which are plotted using \emph{Theorem \ref{them:CP_highbeta_eql}}.    In addition, from  Fig.~\ref{fig:highSIR_e}, we clearly see that the coverage probability curves with different   $U$ or $(T_1,T_2)$ have the same slope (indicating the same order gain), and  there is a shift between any two coverage probability curves with  different $U$ or $(T_{1},T_2)$ (indicating different coefficients).
Therefore, Fig.~\ref{fig:highSIR}  verifies \emph{Theorem \ref{them:CP_highbeta_bound}} and \emph{Theorem \ref{them:CP_highbeta_eql}}, and shows that the asymptotic coverage probability in  the high SIR threshold regime provides a reasonable approximation for the coverage probability when the SIR threshold is above 13 dB.

\subsection{Asymptotic Coverage Probability Optimization}

In this part, we characterize the optimal maximum IN DoF $U^{*}(\beta,T_{1},T_{2})$ which maximizes the upper bound on  the asymptotic coverage probability given in \emph{Theorem \ref{them:CP_highbeta_bound}}  when $\alpha_1\neq\alpha_2$ and  the asymptotic coverage probability given in \emph{Theorem \ref{them:CP_highbeta_eql}} when $\alpha_1=\alpha_2$, for given thresholds $T_1$ and $T_2$, i.e.,
\small{\begin{align}\label{eq:optU_asymOP-high}
U^{*}(\beta,T_{1},T_{2})\define&
\begin{cases}
\arg\max_{U\in\{0,1,\ldots,N_{1}-1\}}c^{\rm ub}\left(U,T_{1},T_{2}\right)\beta^{-\frac{2}{\alpha_{\max}}}, & \alpha_1\neq\alpha_2\\
\arg\max_{U\in\{0,1,\ldots,N_{1}-1\}}\left(\mathcal A_1 c_{1}\left(U,T_{1},T_{2}\right)+ \mathcal A_2 c_{2}\left(T_{1},T_{2}\right)\right)\beta^{-\frac{2}{\alpha}}, &\alpha_1=\alpha_2
\end{cases}\nonumber\\
=&\begin{cases}
\arg\max_{U\in\{0,1,\ldots,N_{1}-1\}}c^{\rm ub}\left(U,T_{1},T_{2}\right), & \alpha_1\neq\alpha_2\\
\arg\max_{U\in\{0,1,\ldots,N_{1}-1\}}c_{1}\left(U,T_{1},T_{2}\right), &\alpha_1=\alpha_2
\end{cases}.
\end{align}}\normalsize
Note that $U$ does not affect the lower bound on  the asymptotic coverage probability given in \emph{Theorem \ref{them:CP_highbeta_bound}}. 
\begin{lemma}[Optimality Property of $U^{*}(\beta,T_{1},T_{2})$]\label{lem:opt_U-high}
There exists $\underline{\beta}<\infty$ such that for all  $\beta>\underline{\beta}$, we have
$U^{*}(\beta, T_{1},T_{2})=0$ for arbitrary $\alpha_1$ and $\alpha_2$.
\end{lemma}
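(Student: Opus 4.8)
The plan is to show that the objective functions in \eqref{eq:optU_asymOP-high} — namely $c^{\rm ub}(U,T_1,T_2)$ when $\alpha_1\neq\alpha_2$ and $c_1(U,T_1,T_2)$ when $\alpha_1=\alpha_2$ — are (eventually, for large $\beta$) maximized at $U=0$. The key observation is that $\beta$ does not actually appear inside $c^{\rm ub}$ or $c_1$; the dependence on $\beta$ has already been factored out into the common power $\beta^{-2/\alpha_{\max}}$ (resp.\ $\beta^{-2/\alpha}$), which is independent of $U$. Hence the optimization over $U$ is, strictly speaking, a $\beta$-independent finite optimization, and the role of $\underline{\beta}$ is only to guarantee that we are genuinely in the asymptotic regime where Theorems~\ref{them:CP_highbeta_bound} and \ref{them:CP_highbeta_eql} apply; once $\beta$ is large enough that these asymptotic characterizations are the relevant ones, the maximizer is fixed. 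So the real content is: \emph{show $c_1(U,T_1,T_2)$ (and $c^{\rm ub}$) is decreasing in $U$, with maximum at $U=0$.}

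First I would treat the case $\alpha_1=\alpha_2=\alpha$, where only $c_1(U,T_1,T_2)$ in \eqref{eq:c1} depends on $U$ (recall $c_2$ does not). I would isolate the $U$-dependence: it enters only through the outer sum $\sum_{u=0}^{U}{\rm Pr}(u_{{\rm IN},0}=u)\sum_{n=0}^{N_1-u-1}(\cdots)$ and through $p_{\bar c}(U,T_1,T_2)$ inside the bracketed products. The cleanest route is to compare $c_1(U+1,\cdot)$ with $c_1(U,\cdot)$ term by term. Using \emph{Lemma~\ref{lem:pmf_uIN0}}, incrementing $U$ redistributes probability mass: ${\rm Pr}(u_{{\rm IN},0}=U)$ splits into ${\rm Pr}(K_0=U)$ plus the new tail $\sum_{k=U+1}^\infty{\rm Pr}(K_0=k)$. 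The point is that the inner sum $\sum_{n=0}^{N_1-u-1}(\cdots)$ is a sum of \emph{nonnegative} terms (all the Gamma functions, Beta functions, densities and combinatorial coefficients in \eqref{eq:c1} are positive), so a mass unit sitting at $u_{{\rm IN},0}=U$ contributes strictly \emph{less} than the same mass unit would at any $u_{{\rm IN},0}=u'<U$, since the index set $\{0,\dots,N_1-u'-1\}$ is strictly larger and adds only nonnegative terms. Thus concentrating more mass at higher $u$ (which is exactly what increasing $U$ does) can only decrease $c_1$. I would make this precise by writing $c_1(U,\cdot)=\sum_{u=0}^{U}{\rm Pr}(u_{{\rm IN},0}=u)\,A_u$ with $A_u:=\sum_{n=0}^{N_1-u-1}(\cdots)\ge 0$ nonincreasing in $u$, and checking that the substitution $U\mapsto U+1$ replaces $\sum_{u\le U}{\rm Pr}(u_{{\rm IN},0}=u)A_u$ by $\sum_{u\le U+1}{\rm Pr}(u_{{\rm IN},0}^{(U+1)}=u)A_u$, a mean-preserving-type move toward larger $u$ applied to a nonincreasing sequence, hence a decrease. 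A subtlety here is that $p_{\bar c}(U,T_1,T_2)$ also varies with $U$, entering the factors $(p_{\bar c}(U,T_1,T_2)(1-(1/T_j)^{a-2/\alpha_1}))^{m_a}$; but from \emph{Lemma~\ref{lem:INprob}}, $p_c$ is nondecreasing in $U$ so $p_{\bar c}$ is nonincreasing in $U$, and since these factors appear multiplicatively with nonnegative exponents, they can only reinforce the decrease. Hence $c_1$ is nonincreasing in $U$, so $U^*=0$.

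Next I would handle $\alpha_1\neq\alpha_2$, where by the definition of $c^{\rm ub}$ in Theorem~\ref{them:CP_highbeta_bound}, $c^{\rm ub}(U,T_1,T_2)$ equals $\eta_1(U,T_1,T_2)$ if $\alpha_1>\alpha_2$ and $\eta_2$ if $\alpha_1<\alpha_2$. If $\alpha_1<\alpha_2$, then $c^{\rm ub}=\eta_2$ does not depend on $U$ at all, so every $U$ — in particular $U=0$ — is a maximizer and we may take $U^*=0$ by the tie-breaking convention. If $\alpha_1>\alpha_2$, then $c^{\rm ub}=\eta_1(U,T_1,T_2)$ in \eqref{eq:c1_arb}, whose $U$-dependence has exactly the same structure as $c_1$ in \eqref{eq:c1}: the outer $\sum_{u=0}^{U}{\rm Pr}(u_{{\rm IN},0}=u)\sum_{n=0}^{N_1-u-1}(\cdots)$, with all inner terms nonnegative, and no $p_{\bar c}$ factor at all. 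The identical monotonicity argument then gives $\eta_1(U,T_1,T_2)$ nonincreasing in $U$, hence $U^*=0$. In all three subcases the maximizer is $U=0$ for $\beta$ large enough that the asymptotic regime is in force, which furnishes the claimed $\underline{\beta}<\infty$.

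The main obstacle I anticipate is not the monotonicity inequality itself but carefully formalizing the "mass moves to larger $u$" step together with the simultaneous change of $p_{\bar c}(U,T_1,T_2)$ in the $\alpha_1=\alpha_2$ case: one must verify that the two effects do not fight each other, i.e., that redistributing probability toward larger $u_{{\rm IN},0}$ \emph{and} shrinking $p_{\bar c}$ both push $c_1$ in the same (downward) direction. Since every building block in \eqref{eq:c1} — Gamma values, Beta values, $\lambda_j$, power ratios, multinomial coefficients, and the exponents $m_a,p_a,q_a$ — is nonnegative, the cleanest presentation is to factor $c_1(U,\cdot)$ as an expectation $\mathbb{E}[A_{u_{{\rm IN},0}}\cdot g(p_{\bar c}(U,\cdot))]$ of a product of two nonincreasing-in-$U$ quantities and invoke that both the distribution of $u_{{\rm IN},0}$ shifts stochastically upward and $g(p_{\bar c})$ decreases pointwise with $U$. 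I would also double-check the boundary bookkeeping when $U\ge N_1$ (vacuous by the constraint $U\le N_1-1$) and when $N_1-u-1<0$ (the inner sum is empty, contributing $0$), which is consistent with $A_u$ being nonincreasing down to zero.
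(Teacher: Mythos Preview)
Your proposal is correct and follows essentially the same route as the paper: write $c_1(U,T_1,T_2)$ as $\sum_{u=0}^{U}{\rm Pr}(u_{{\rm IN},0}=u)\,f(u)$ with $f(u)$ nonincreasing (since the inner sum $\sum_{n=0}^{N_1-u-1}(\cdots)$ is a sum of nonnegative terms over a shrinking index set), then use \emph{Lemma~\ref{lem:pmf_uIN0}} to see that incrementing $U$ moves mass $\sum_{k\ge U+1}{\rm Pr}(K_0=k)$ from $u=U$ to $u=U+1$, yielding $c_1(U+1)-c_1(U)=\big(f(U+1)-f(U)\big)\sum_{k\ge U+1}{\rm Pr}(K_0=k)<0$. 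The paper does exactly this computation for $\alpha_1=\alpha_2$ and waves at the $\alpha_1\neq\alpha_2$ case as ``similar''; you are in fact more explicit there, correctly noting that $c^{\rm ub}=\eta_2$ is $U$-independent when $\alpha_1<\alpha_2$, and that $\eta_1(U,T_1,T_2)$ has the identical structure when $\alpha_1>\alpha_2$.

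One small simplification: the ``subtlety'' you flag about $p_{\bar c}(U,T_1,T_2)$ is a non-issue. The expression you quote, $\big(p_{\bar c}(U,T_1,T_2)(1-(1/T_j)^{a-2/\alpha_1})\big)^{m_a}$, belongs to $b_j$ in \eqref{eq:coeff_smallbeta} (the low-SIR coefficient), not to $c_1$ in \eqref{eq:c1}. In the high-SIR asymptotic (Theorem~\ref{them:CP_highbeta_eql}), the leading order is extracted at $n_1=0$, which kills all the $m_a$ products and hence all $p_{\bar c}$ factors; $c_1$ depends on $U$ \emph{only} through the outer sum and the p.m.f.\ of $u_{{\rm IN},0}$. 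So you can drop the paragraph about the two effects ``not fighting each other'' and the proof becomes exactly the paper's.
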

\begin{proof}
See Appendix \ref{proof:lem_optU-high}.
\end{proof}

\begin{figure}[t]
\centering
\subfigure[$\alpha_1\neq \alpha_2$]{
\includegraphics[height=0.28\columnwidth,width=0.38\columnwidth]
{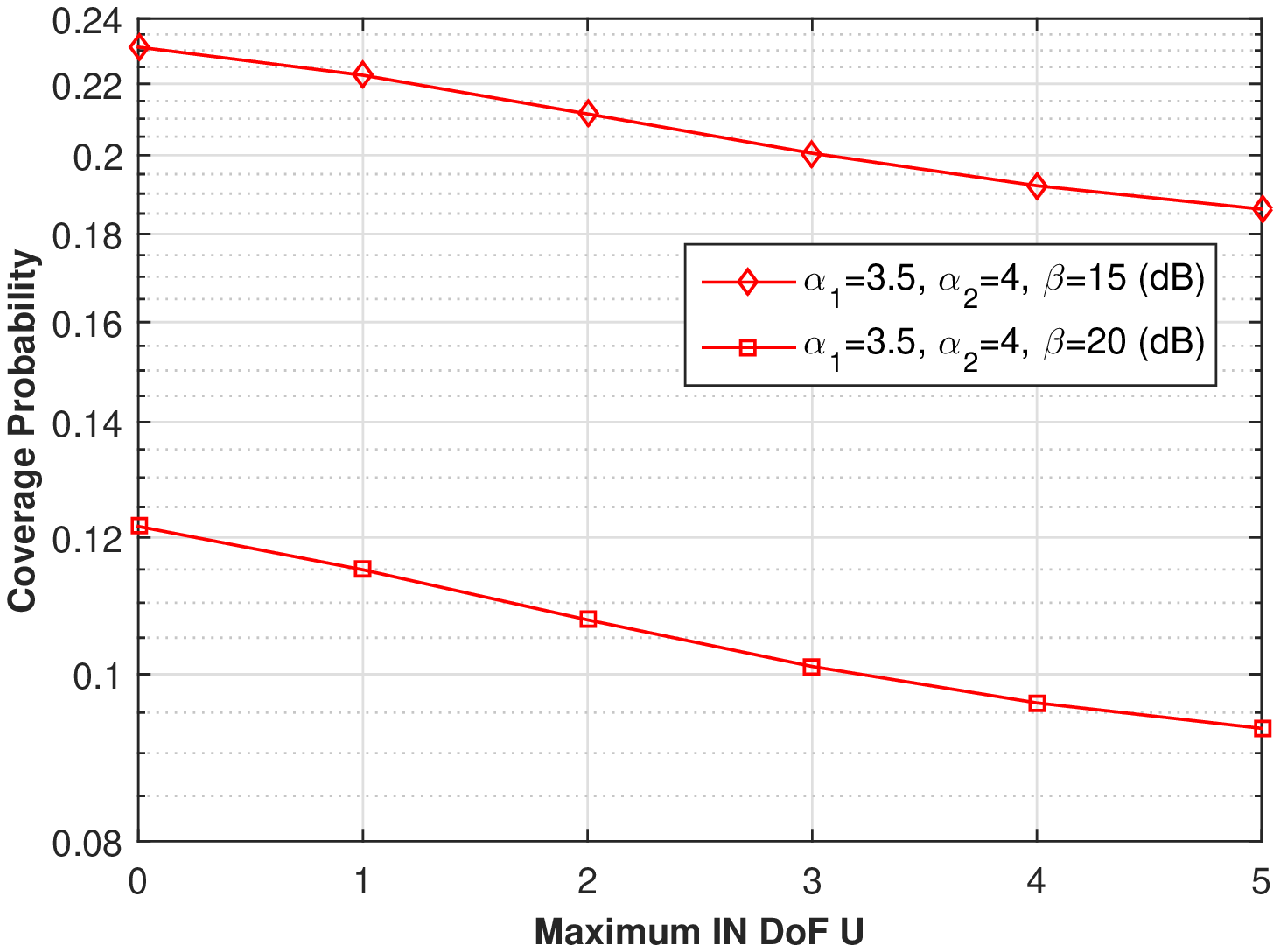}
\label{fig:highSIR_n_opt_U}
}
\subfigure[$\alpha_1=\alpha_2$]{
\includegraphics[height=0.28\columnwidth,width=0.38\columnwidth]
{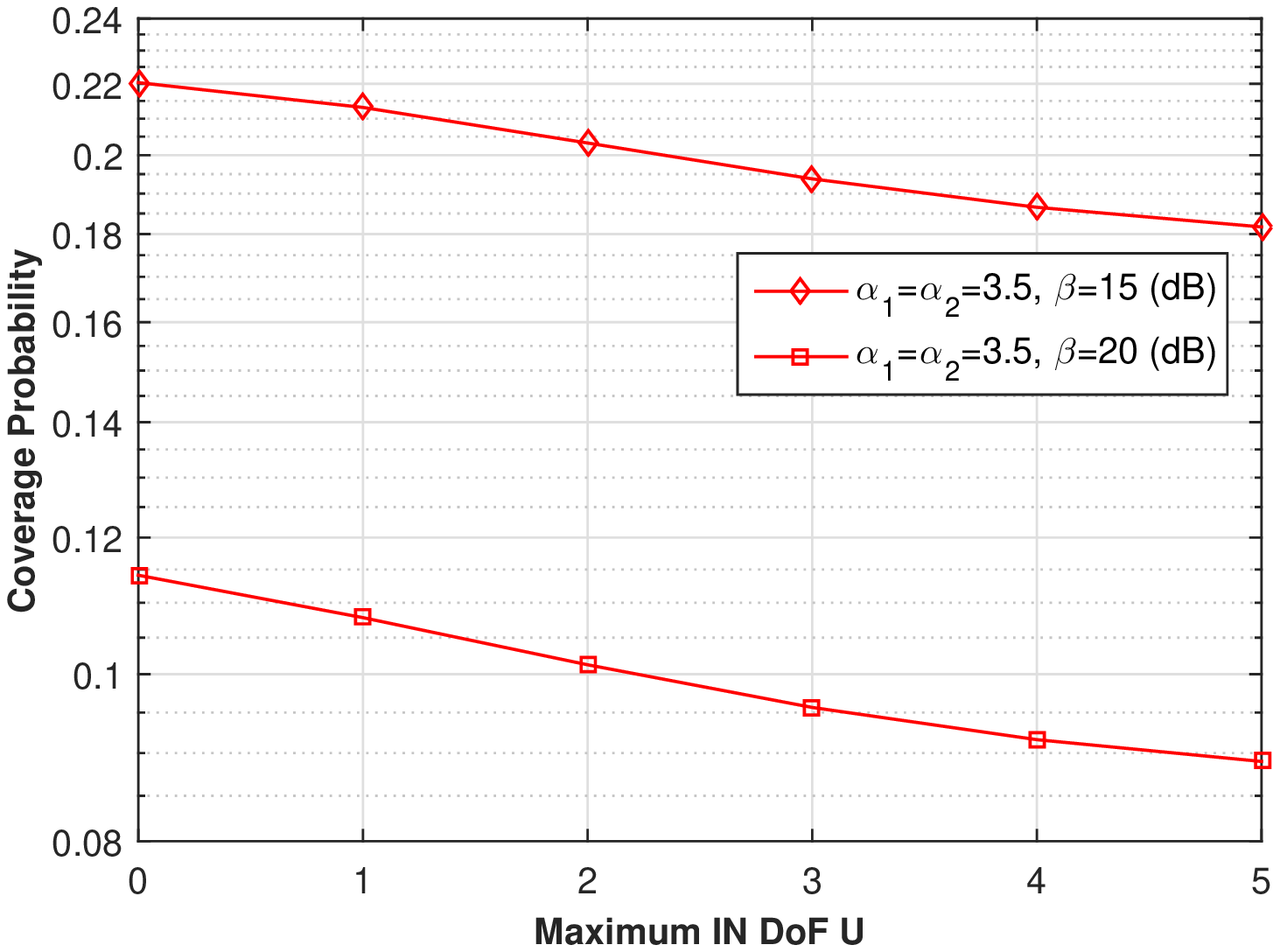}
\label{fig:highSIR_e_opt_U}
}
\caption{\small{Coverage probability versus maximum IN DoF in the high SIR threshold regime.   $N_{1}=6$, $N_{2}=4$, $T_1=T_2=4$ $\frac{P_{1}}{P_{2}}=15$ dB, $\lambda_{1}=0.0005$ nodes/m$^{2}$, and $\lambda_{2}=0.001$ nodes/m$^{2}$.}}\label{fig:highSIR_opt_U}
\vspace{-5mm}
\end{figure}

\begin{figure}[t] \centering
\includegraphics[width=0.4\columnwidth]{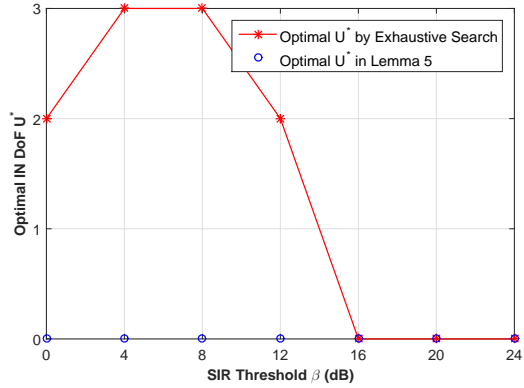}
\caption{\small{Optimal maximum IN DoF  versus SIR threshold  in the high SIR threshold regime. $N_{1}=6$, $N_{2}=4$, $T_1=T_2=2$, $\frac{P_{1}}{P_{2}}=15$ dB, $\alpha_{1}=\alpha_{2}=4$, $\lambda_{1}=0.0005$ nodes/m$^{2}$, and $\lambda_{2}=0.001$ nodes/m$^{2}$.}}\label{application_high_SIR}
\vspace{-5mm}
\end{figure}

\emph{Lemma \ref{lem:opt_U-high}} indicates that performing IN will not improve the asymptotic coverage probability in the high SIR threshold regime.  The reason is that in the high SIR threshold regime, the overall coverage probability is mainly contributed by cell center users, which have much better performance than cell edge users.
 Using all $N_1 $ DoF at each macro-BS to boost the desired signal to its scheduled user can effectively improve the coverage probability of a cell center macro-user, and hence improve the overall coverage probability.

 Fig.~\ref{fig:highSIR_opt_U} plots the coverage probability versus the maximum IN DoF in the high SIR threshold regime. From Fig.~\ref{fig:highSIR_opt_U}, we can see that $U^{*}(\beta,T_{1},T_{2})=0$. This verifies \emph{Lemma \ref{lem:opt_U-high}}. In addition, we can observe that the coverage probability decreases with the  maximum IN DoF. Fig.~\ref{application_high_SIR} shows that the asymptotically optimal solution  in \emph{Lemma \ref{lem:opt_U-high}}  for the high SIR threshold regime is optimal when the SIR threshold is above 16 dB, as it is the same as the optimal solution optimizing the coverage probability  in \emph{Theorem \ref{thm:overall_CP}}  for the general SIR threshold regime. Therefore, the asymptotically optimal solution  in \emph{Lemma \ref{lem:opt_U-high}} provides good guidance on choosing effective maximum IN DoF when the SIR threshold is relatively high.

\section{Numerical Experiments}

In this section, we compare  the proposed user-centric IN scheme with two baseline schemes. One is a simple beamforming scheme (without interference management), which can be treated as a special case of our IN scheme by setting $U=0$ and/or $T_1=T_2=1$.  The other is  a modified version of the existing  ABS scheme in 3GPP-LTE, referred to as the user-centric ABS scheme. The user-centric ABS scheme has three design parameters, i.e., a resource partition  parameter $\eta$ and  two thresholds $T_1$ and $T_2$, where $T_j$ ($j=1,2$) is the threshold for the $j$-th tier. We define a potential ABS macro-BS of a scheduled user in  a similar way to a potential IN macro-BS of a scheduled user  in the user-centric IN scheme. In each slot, each scheduled user sends ABS requests to all of its potential ABS macro-BSs. We define the potential ABS users of a macro-BS in  a similar way to the potential IN users of a  macro-BS in the user-centric IN scheme. $1-\eta$ fraction of (time or frequency) resource is allocated to all the potential ABS macro-BSs to serve their scheduled users, while $\eta$ fraction of resource is allocated to the remaining BSs to serve their own scheduled users. Then, for given $T_1$ and $T_2$, we choose the optimal  $\eta$ to maximize the coverage probability of the user-centric IN scheme.  Under this user-centric ABS scheme, each scheduled potential ABS pico-user or macro-user whose serving macro-BS is not a potential ABS macro-BS  can avoid the interference from all its potential ABS macro-BSs via resource partition in ABS.

Note that the benefit of  the proposed user-centric IN scheme compared to the simple beamforming scheme is that it can optimally allocate  DoF in boosting  desired signals  and managing interference. Thus, the performance of the proposed user-centric IN scheme is always better than that of the simple beamforming scheme. One benefit of the proposed user-centric IN scheme compared to the user-centric ABS
is that it does not have (time or frequency) resource sacrifice. On the other hand,  one loss of the proposed user-centric IN scheme compared to the user-centric  ABS is due to the
DoF  reduction  for boosting desired signals to macro-users.

Fig.~\ref{fig:compare_N1} illustrates the coverage probability versus the number of antennas at each macro-BS $N_1$. From Fig.~\ref{fig:compare_N1}, we can observe that the proposed user-centric IN scheme and the user-centric ABS outperform the simple beamforming scheme, demonstrating the importance of interference management in the parameter region considered in this figure. In addition, the proposed user-centric IN scheme outperforms the user-centric ABS when $N_1$ is relatively large. The reason is as follows. When $N_1$ is relatively large, for serving macro-users, the loss of the user-centric ABS  caused by (time or frequency) resource sacrifice  (due to resource partition) is large, while the loss of the proposed user-centric IN scheme caused by   DoF reduction  (due to  performing  IN) is small. 
Fig.~\ref{fig:compare_alpha1} illustrates  the coverage probability  versus the path loss exponent in the macro-cell tier $\alpha_1$. From Fig.~\ref{fig:compare_alpha1}, we can observe that
the proposed user-centric IN scheme outperforms the user-centric ABS when $\alpha_1$ is relatively large. The reason is as follows. When $\alpha_1$ is large, 
 the loss of the proposed user-centric IN scheme  due to the
DoF  reduction  for boosting desired signals to macro-users is small.\footnote{The observation that the proposed scheme outperforms ABS when $N_1$ or $\alpha_1$ is relatively large is similar to the observation made in \cite{wu14}. The reason can be found in Footnote~\ref{ft:relation}.}  Fig.~\ref{fig:CPvsP_ratio} illustrates the coverage probability versus the power ratio $P_1/P_2$. From Fig.~\ref{fig:CPvsP_ratio}, we can observe that the proposed user-centric IN scheme outperforms the user-centric ABS when $P_1/P_2$ is relatively large. This is because when $P_1/P_2$ is relatively large,   for serving macro-users, the loss of the user-centric ABS  caused by (time or frequency) resource sacrifice  (due to resource partition) is large.



\begin{figure}[t]
\centering
\subfigure[$N_2=2$ and $N_2=4$]{
\includegraphics[height=0.28\columnwidth,width=0.38\columnwidth]
{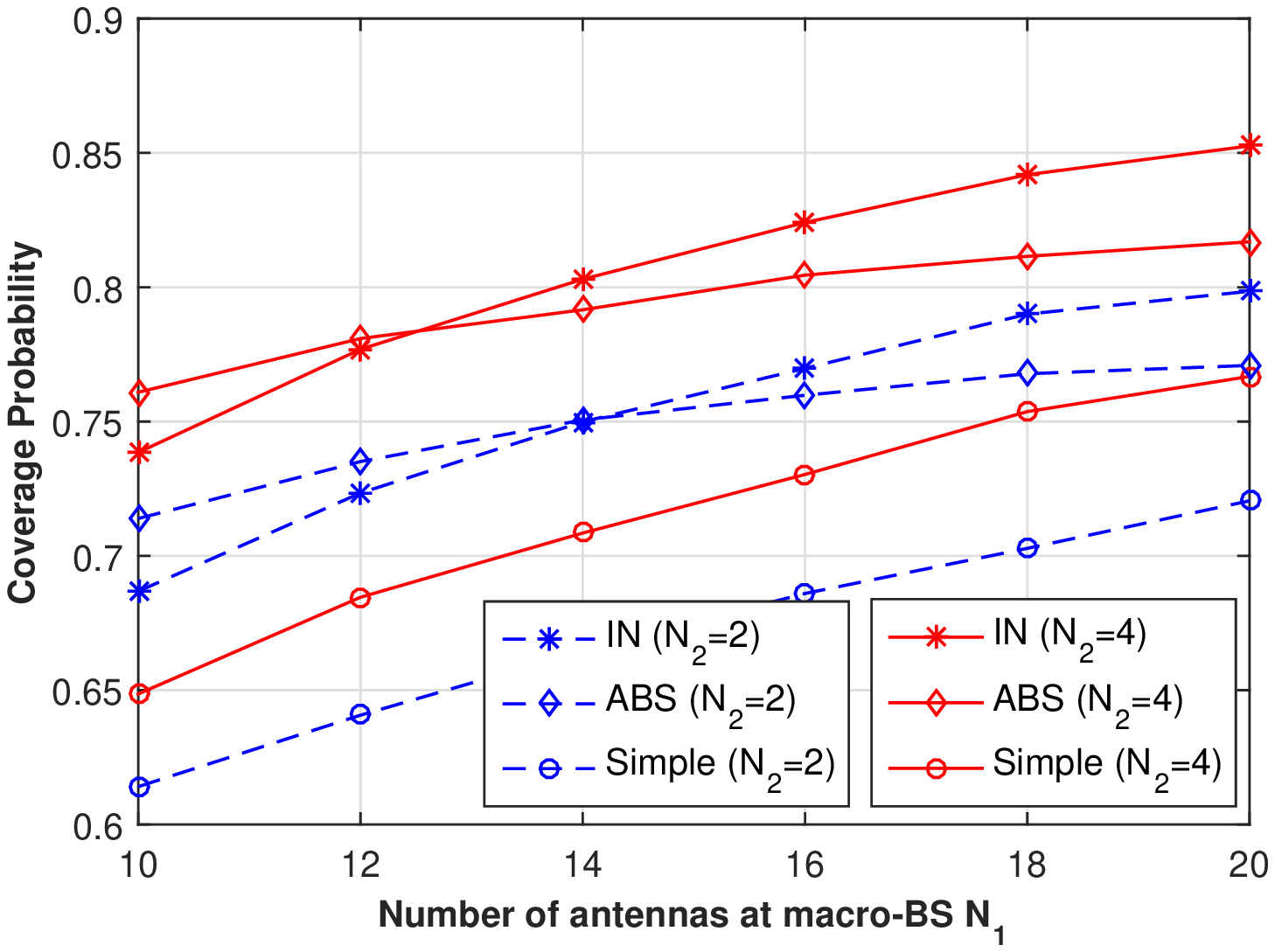}
\label{fig:compare_N1_24}
}
\subfigure[$N_2=6$ and $N_2=8$]{
\includegraphics[height=0.28\columnwidth,width=0.38\columnwidth]
{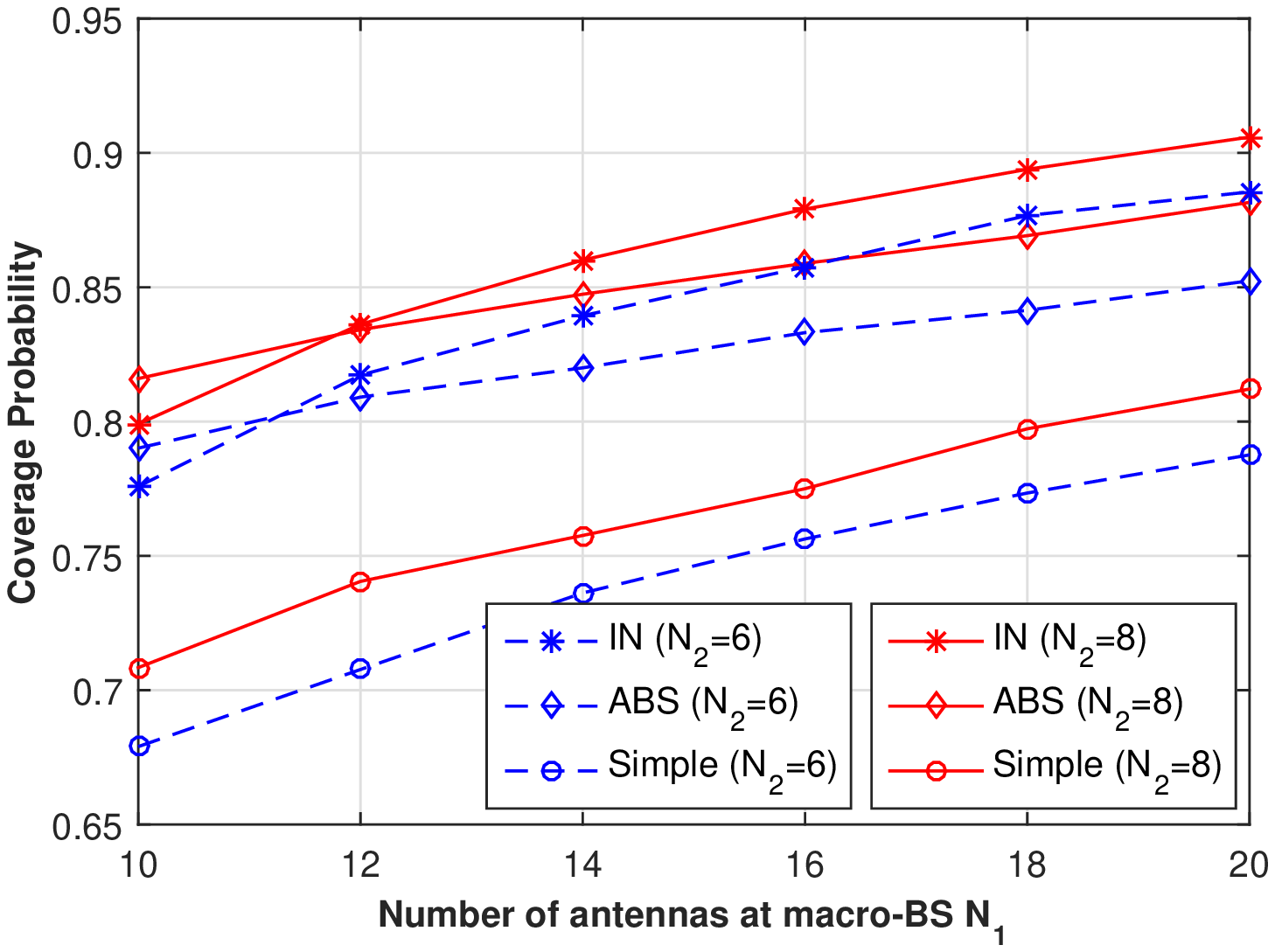}
\label{fig:compare_N1_68}
}
\caption{\small{Coverage probability versus number of antennas at each macro-BS. $\alpha_1=\alpha_2=4.5$, $T_1=T_2=6$ $\frac{P_{1}}{P_{2}}=15$ dB, $\lambda_{1}=0.0005$ nodes/m$^{2}$, and $\lambda_{2}=0.001$ nodes/m$^{2}$.}}\label{fig:compare_N1}
\vspace{-5mm}
\end{figure}

\begin{figure}[t]
\centering
\subfigure[$N_2=2$ and $N_2=4$]{
\includegraphics[height=0.28\columnwidth,width=0.38\columnwidth]
{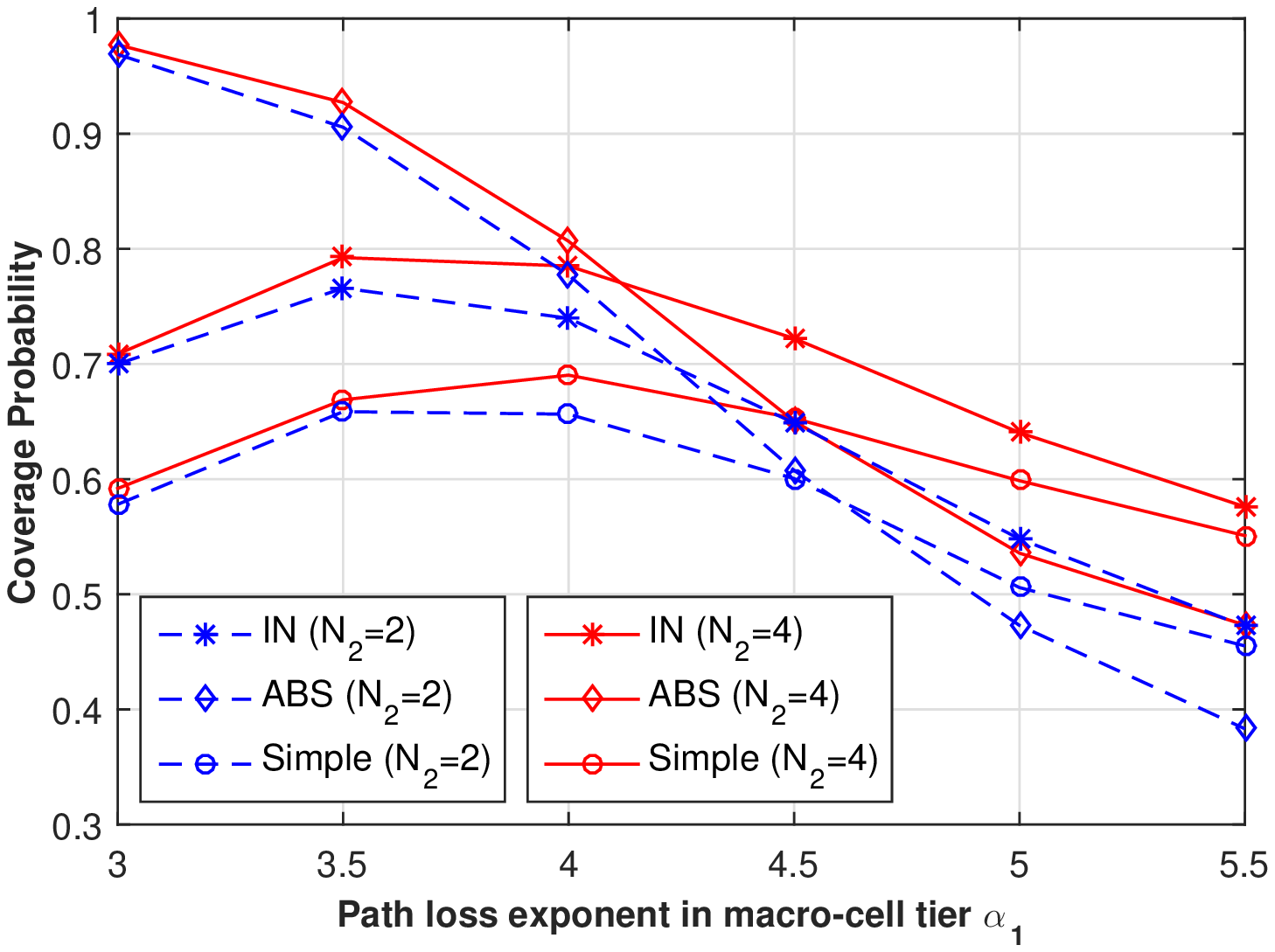}
\label{fig:compare_alpha1_24}
}
\subfigure[$N_2=6$ and $N_2=8$]{
\includegraphics[height=0.28\columnwidth,width=0.38\columnwidth]
{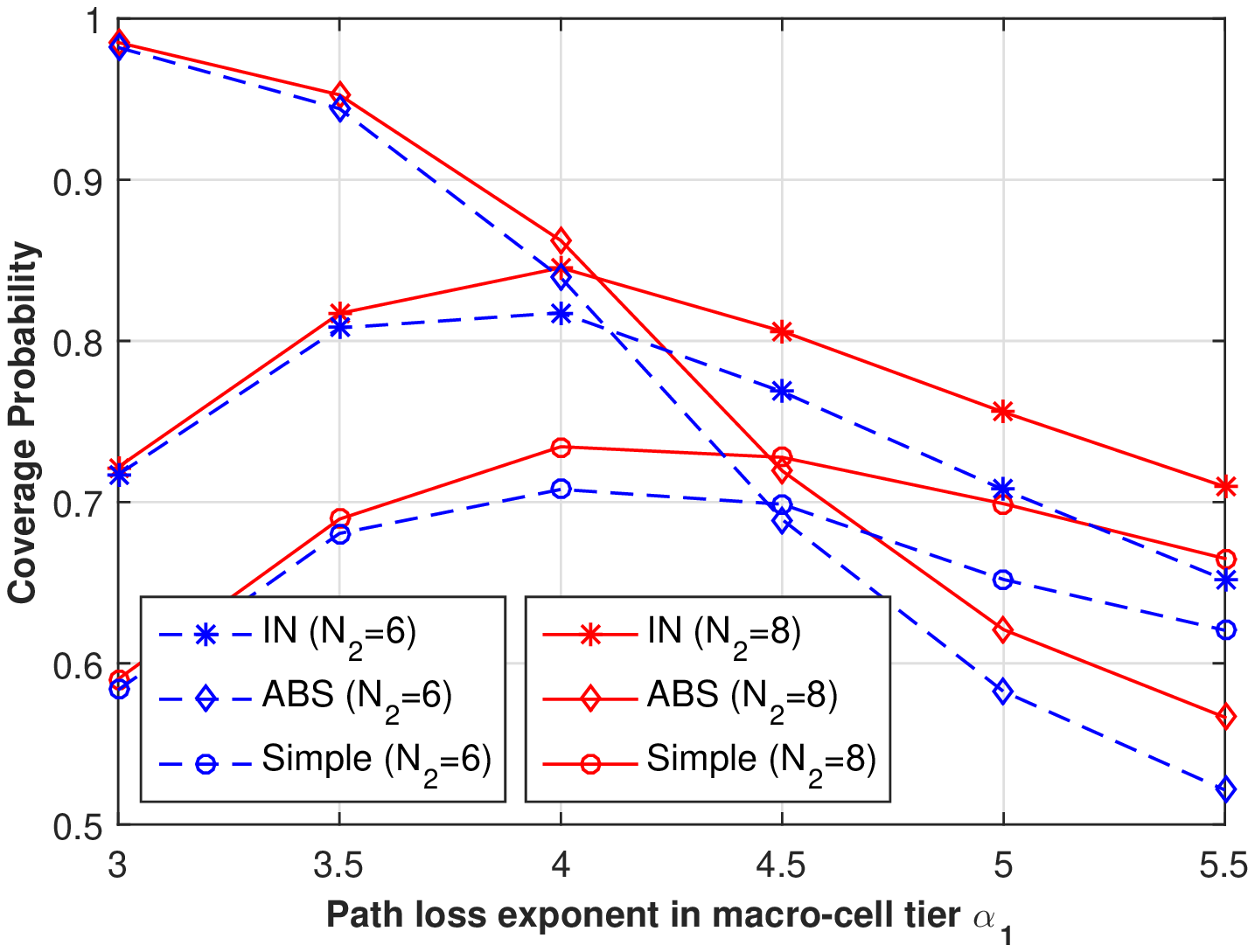}
\label{fig:compare_alpha1_68}
}
\caption{\small{Coverage probability  versus path loss exponent in macro-cell tier.  $N_{1}=16$, $\alpha_2=4$, $T_1=T_2=6$, $\frac{P_{1}}{P_{2}}=15$ dB, $\lambda_{1}=0.0005$ nodes/m$^{2}$, and $\lambda_{2}=0.001$ nodes/m$^{2}$.}}\label{fig:compare_alpha1}
\vspace{-5mm}
\end{figure}



\begin{figure}[t] \centering
\includegraphics[width=0.4\columnwidth]{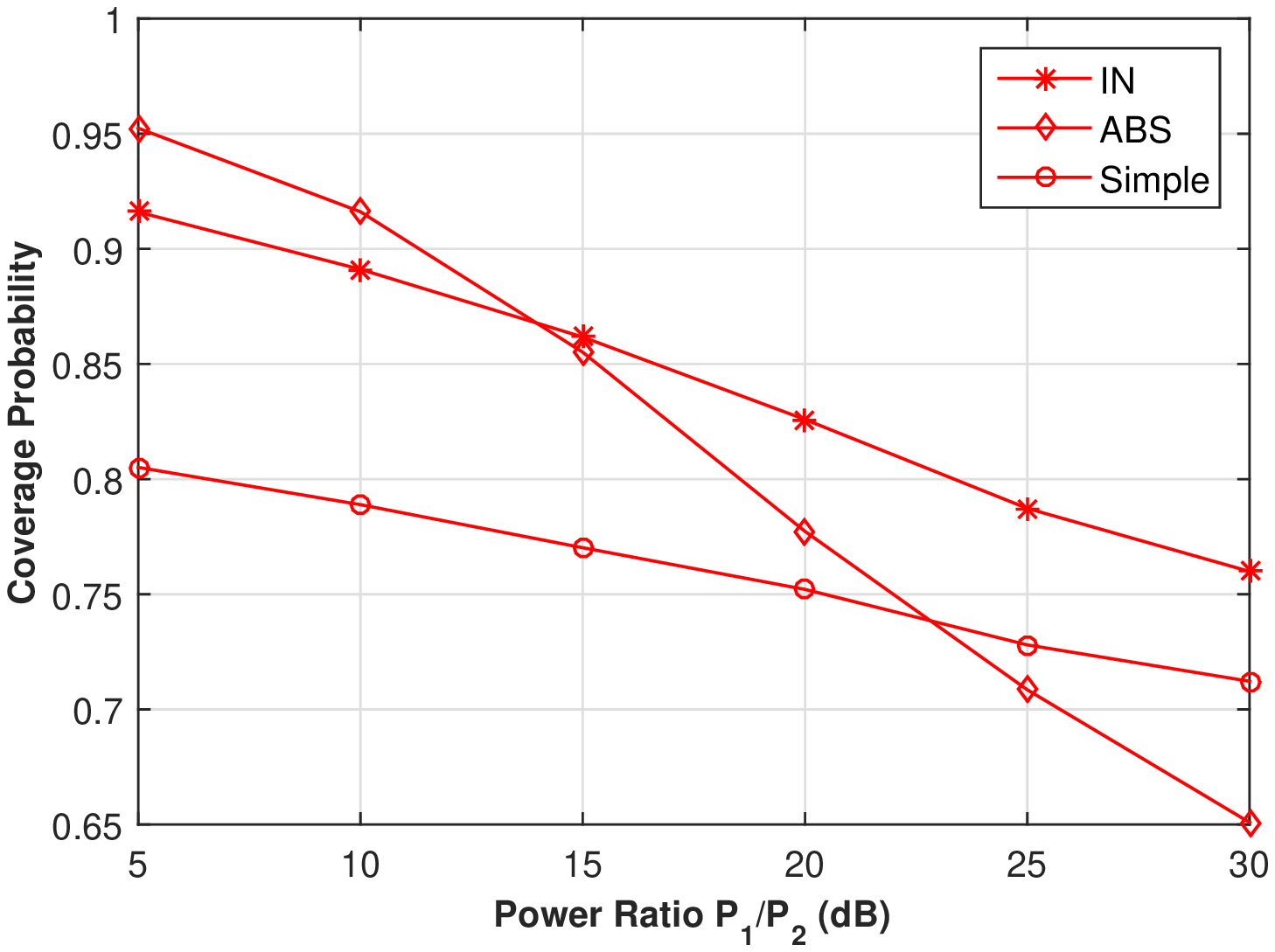}
\caption{\small{Coverage probability  versus  power ratio.  $N_{1}=16$, $N_2=8$, $T_1=T_2=6$, $\alpha_1=\alpha_2=4.5$, $\lambda_{1}=0.0005$ nodes/m$^{2}$, $\lambda_2=0.002$ nodes/m$^{2}$ and $\beta=10$ dB.}}\label{fig:CPvsP_ratio}
\vspace{-5mm}
\end{figure}

\section{Conclusions}
In this paper, we proposed a user-centric IN scheme in downlink two-tier multi-antenna HetNets.  Using tools from stochastic geometry, we first obtained a tractable expression of the coverage probability.
Then, we analyzed the asymptotic  coverage/outage probability  in the  low and high SIR threshold  regimes.
The  analytical results  indicate that the maximum IN DoF and the IN thresholds affect the asymptotic coverage/outage probability in dramatically different ways.
Moreover, we characterized the optimal maximum IN DoF which optimizes the  coverage/outage  probability. The optimization results reveal that the IN scheme can linearly improve the  outage probability  in the low SIR threshold regime, but cannot improve the  coverage probability in the high SIR threshold regime. Finally, numerical results showed that  the user-centric IN scheme can achieve good gains  in coverage/outage probability  over existing schemes.

\appendix

\subsection{Proof of Lemma \ref{lem:num_req_pmf}}\label{proof:pmf_Lbar}
According to Slivnyak's theorem \cite{haenggi09}, we focus on a macro-BS located at the origin, referred to as macro-BS $0$. Note that both scheduled macro and pico users may send IN requests to macro-BS $0$. We first characterize the probability that a scheduled macro-user sends an IN request to macro-BS $0$. Denote $R_{1i}$ as the distance between macro-BS $0$ and a randomly selected (according to the uniform distribution) scheduled macro-user, referred to  as scheduled macro-user $i$.
Assume that the scheduled macro-users form a homogeneous PPP with density $\lambda_{1}$. Conditioned on $R_{1i}=r$, scheduled macro-user $i$ sends an IN request to macro-BS $0$ with probability 
\small{\begin{align}\label{eq:prob_macro_send}
p_{1i,R_{1i}}(r,T_{1})&={\rm Pr}\left(T_{1}^{-\frac{1}{\alpha_{1}}}r<Y_{1}<r\right)=\int_{T_{1}^{-\frac{1}{\alpha_{1}}}r}^{r}f_{Y_{1}}(y){\rm d}y
\end{align}}
\normalsize{
where $f_{Y_{1}}(y)$ is the  p.d.f. of $Y_{1}$ given by (\ref{eq:pdfY1}) \cite[Lemma 4]{SinghTWC13}. Then, the scheduled macro-user density at distance $r$ away from macro-BS $0$ is $p_{1i,R_{1i}}(r,T_{1})\lambda_{1}$. This indicates that the scheduled macro-users at distance $r$ away from macro-BS $0$ which send IN requests to macro-BS $0$ form an inhomogeneous PPP with density $p_{1i,R_{1i}}(r,T_{1})\lambda_{1}$.}
Next, we characterize the probability that a scheduled pico-user sends an IN request to macro-BS $0$. Denote $R_{2i}$ as the distance between macro-BS $0$ and a randomly selected (according to the uniform distribution) scheduled pico-user, referred to as scheduled pico-user $i$. Similarly, we assume that the scheduled pico-users form a homogeneous PPP with density $\lambda_{2}$, and it is independent of the PPP formed by the scheduled macro-users. Then, we can show that the scheduled pico-users at distance $r$ away from macro-BS $0$ which send IN requests to macro-BS $0$ form an inhomogeneous PPP with density $p_{2i,R_{2i}}(r,T_{2})\lambda_{2}$, where
\small{\begin{align}\label{eq:prob_pico_send}
p_{2i,R_{2i}}(r,T_{2})&={\rm Pr}\left(\left(\frac{P_{2}}{P_{1}T_{2}}\right)^{\frac{1}{\alpha_{2}}}r^{\frac{\alpha_{1}}{\alpha_{2}}}<Y_{2}<\left(\frac{P_{2}}{P_{1}}\right)^{\frac{1}{\alpha_{2}}}r^{\frac{\alpha_{1}}{\alpha_{2}}}\right)=\int_{\left(\frac{P_{2}}{P_{1}T_{2}}\right)^{\frac{1}{\alpha_{2}}}r^{\frac{\alpha_{1}}{\alpha_{2}}}}^{\left(\frac{P_{2}}{P_{1}}\right)^{\frac{1}{\alpha_{2}}}r^{\frac{\alpha_{1}}{\alpha_{2}}}}f_{Y_{2}}(y){\rm d}y.
\end{align}}
\normalsize{Note that $f_{Y_{2}}(y)$ is the p.d.f. of $Y_{2}$ given by (\ref{eq:pdfY2}) \cite[Lemma 4]{SinghTWC13}}. 

By the superposition property of PPPs \cite{haenggi09}, the scheduled macro-users and the scheduled pico-users at distance $r$ away from macro-BS $0$ which send IN requests to macro-BS $0$, i.e., the potential IN users of macro-BS $0$, still form an inhomogeneous PPP with density $p_{1i,R_{1i}}(r,T_{1})\lambda_{1}+p_{2i,R_{2i}}(r,T_{2})\lambda_{2}$. Therefore, the number of the potential IN users of macro-BS $0$ is Poisson distributed with  parameter (mean) $\bar{L}(T_1,T_2)=2\pi\int_{0}^{\infty}r \left(p_{1i,R_{1i}}(r,T_{1})\lambda_{1}+p_{2i,R_{2i}}(r,T_{2})\lambda_{2}\right){\rm d}r=\bar{L}_{1}(T_1)+\bar{L}_{2}(T_2)$.


\subsection{Proof of Lemma \ref{lem:INprob}}\label{proof:lem_INprob}

Let   $p_{c}\left(U,T_{1},T_{2}, K\right)$ denote the probability that an arbitrary potential IN macro-BS of $u_0$  selects $u_0$  for IN when it has  $K$ potential IN users besides  $u_0$. If  $K+1\leq U$,  $p_{c}\left(U,T_{1},T_{2}, K\right)=1$;  if  $K+1>U$, $p_{c}\left(U,T_{1},T_{2}, K\right)=\frac{U}{K+1}$ , as the selection is according to the uniform distribution. Thus, for given K, we have $p_{c}\left(U,T_{1},T_{2}, K\right)=\min\left\{\frac{U}{K+1},1\right\}$. Averaging over $K$, we have $p_{c}\left(U,T_{1},T_{2}\right)={\rm E}\left[\min\left\{\frac{U}{K+1},1\right\}\right]$.
As shown in \cite{li14IC}, each scheduled user will send the IN request based on its own distances to each of its potential IN macro-BSs and its serving BS, which are independent of the other scheduled users. Thus, given that $u_0$ has sent the request to the potential IN macro-BS, $K$ follows the same distribution as  $K_0$. 
Therefore, we have
\small{\begin{align}\label{eq:INprob_cal}
&p_{c}\left(U,T_{1},T_{2}\right)={\rm E}\left[\min\left\{\frac{U}{K_{0}+1},1\right\}\right]=\sum_{k=0}^{U-1}{\rm Pr}\left(K_{0}=k\right)+\sum_{k=U}^{\infty}\frac{U}{k+1}{\rm Pr}\left(K_{0}=k\right)\;.
\end{align}}\normalsize{
Substituting (\ref{eq:K_pmf}) into (\ref{eq:INprob_cal}), we have the final result.}

\subsection{Proof of Theorem \ref{thm:overall_CP}}\label{proof:thm1}
Let $R_{j,1C}$ and $R_{j,1O}$ denote the minimum and maximum possible distances between $u_{0}\in\mathcal{U}_{j}$ and its nearest and furthest macro-interferers (among $u_{0}$'s potential IN macro-BSs which do not select $u_{0}$ for IN), respectively. Let $R_{j,2}$ denote the minimum possible distance between $u_{0}\in \mathcal{U}_{j}$ and its nearest pico-interferer. The relationships between $R_{j,1C}$, $R_{j,1O}$, $R_{j,2}$, and $Y_{j}$, respectively, are shown in Table \ref{tab:para_B2larger}. Based on (\ref{eq:SIRj0}) and conditioned on $Y_{j}=y$, we have
\small{\begin{align}
&{\rm Pr}\left({\rm SIR}_{j,0}>\beta|u_{0}\in\mathcal{U}_{j},Y_{j}=y\right)\notag\\
=&{\rm Pr}\left(\left|\mathbf{h}_{j,00}^{\dagger}\mathbf{f}_{j,0}\right|^{2}>\beta y^{\alpha_{j}}\left(\frac{P_{1}}{P_{j}}I_{j,1C}+\frac{P_{1}}{P_{j}}I_{j,1O}+\frac{P_{2}}{P_{j}}I_{j,2}\right)\right)\notag\\
\eqla& {\rm E}_{I_{j,1C},I_{j,1O},I_{j,2}}\bigg[\exp\left(-\beta y^{\alpha_{j}}\left(\frac{P_{1}}{P_{j}}I_{j,1C}+\frac{P_{1}}{P_{j}}I_{j,1O}+\frac{P_{2}}{P_{j}}I_{j,2}\right)\right)\sum_{n=0}^{M_{j}-1}\frac{\left(\beta y^{\alpha_{j}}\right)^{n}}{n!}\big(\frac{P_{1}}{P_{j}}I_{j,1C}+\frac{P_{1}}{P_{j}}I_{j,1O} +\frac{P_{2}}{P_{j}}I_{j,2}\big)^{n}\bigg] \notag\\
 \eqlb & {\rm E}_{I_{j,1C},I_{j,1O},I_{j,2}}\bigg[\exp\left(-\beta y^{\alpha_{j}}\left(\frac{P_{1}}{P_{j}}I_{j,1C}+\frac{P_{1}}{P_{j}}I_{j,1O}+\frac{P_{2}}{P_{j}}I_{j,2}\right)\right)\sum_{n=0}^{M_{j}-1}\frac{\left(\beta y^{\alpha_{j}}\right)^{n}}{n!}\sum_{\left(n_{a}\right)_{a=1}^{3}\in\mathcal{N}_{n}}\binom{n}{n_{1},n_{2},n_{3}}\notag\\
&\hspace{2.5cm} \times\left(\frac{P_{1}}{P_{j}}I_{j,1C}\right)^{n_{1}}\left(\frac{P_{1}}{P_{j}}I_{j,1O}\right)^{n_{2}}\left(\frac{P_{2}}{P_{j}}I_{j,2}\right)^{n_{3}}\bigg]\notag\\
=&\sum_{n=0}^{M_{j}-1}\frac{\left(\beta y^{\alpha_{j}}\right)^{n}}{n!}\sum_{\left(n_{a}\right)_{a=1}^{3}\in\mathcal{N}_{n}}\binom{n}{n_{1},n_{2},n_{3}}\left(\frac{P_{1}}{P_{j}}\right)^{n_{1}+n_{2}}\left(\frac{P_{2}}{P_{j}}\right)^{n_{3}} {\rm E}_{I_{j,1C}}\left[I_{j,1C}^{n_{1}}\exp\left(-\beta y^{\alpha_{j}}\frac{P_{1}}{P_{j}}I_{j,1C}\right)\right]\notag\\
&\hspace{0mm}\times{\rm E}_{I_{j,1O}}\left[I_{j,1O}^{n_{2}}\exp\left(-\beta y^{\alpha_{j}}\frac{P_{1}}{P_{j}}I_{j,1O}\right)\right] {\rm E}_{I_{j,2}}\left[I_{j,2}^{n_{3}}\exp\left(-\beta y^{\alpha_{j}}\frac{P_{2}}{P_{j}}I_{j,2}\right)\right]\notag\\
=&\sum_{n=0}^{M_{j}-1}\frac{\left(-\beta y^{\alpha_{j}}\right)^{n}}{n!}\sum_{\left(n_{a}\right)_{a=1}^{3}\in\mathcal{N}_{n}}\binom{n}{n_{1},n_{2},n_{3}}\left(\frac{P_{1}}{P_{j}}\right)^{n_{1}+n_{2}}\left(\frac{P_{2}}{P_{j}}\right)^{n_{3}} \mathcal{L}^{(n_{1})}_{I_{j,1C}}\left(s\right)|_{s=\beta y^{\alpha_{j}}\frac{P_{1}}{P_{j}}} \mathcal{L}^{(n_{2})}_{I_{j,1O}}\left(s\right)|_{s=\beta y^{\alpha_{j}}\frac{P_{1}}{P_{j}}}\notag\\
&\times  \mathcal{L}_{I_{j,2}}^{(n_{3})}(s)|_{s=\beta y^{\alpha_{j}}\frac{P_{2}}{P_{j}}}
\end{align}}\normalsize{
where  (a) is due to $\left|\mathbf{h}_{j,00}^{\dagger}\mathbf{f}_{j,0}\right|^{2}\dis{\rm Gamma}\left(M_{j},1\right)$, (b) is due to Multinomial Theorem, and  $\mathcal{L}^{(n)}_{I}(s,r)\triangleq {\rm E}_I[(-I)^n\exp(-sI)]$  denotes the $n$th-order derivative of the Laplace transform  of random variable $I$, i.e., $\mathcal{L}_{I}(s)\triangleq {\rm E}_I[\exp(-sI)]$.}

\begin{table}[t]
\caption{Parameter values}\label{tab:para_B2larger}
\begin{center}
\vspace{-6mm}
\begin{scriptsize}
\begin{tabular}{|c|c|c|c|c|c|}
\hline
$j$&$R_{j,1C}$&$R_{j,1O}$&$R_{j,2}$\\
\hline
$1$ &$Y_{1}$&$T_{1}^{\frac{1}{\alpha_{1}}}Y_{1}$ &$\left(\frac{P_{2}}{P_{1}}\right)^{\frac{1}{\alpha_{2}}}Y_{1}^{\frac{\alpha_{1}}{\alpha_{2}}}$\\
\hline
$2$&$\left(\frac{P_{1}}{P_{2}}\right)^{\frac{1}{\alpha_{1}}}Y_{2}^{\frac{\alpha_{2}}{\alpha_{1}}}$ & $\left(\frac{P_{1}}{P_{2}}T_{2}\right)^{\frac{1}{\alpha_{1}}}Y_{2}^{\frac{\alpha_{2}}{\alpha_{1}}}$&$Y_{2}$\\
\hline
\end{tabular}
\end{scriptsize}
\end{center}
\vspace{-6mm}
\end{table}

Now, we calculate  $\mathcal{L}_{I}(s)$ and $\mathcal{L}^{(n)}_{I}(s)$, respectively.  First, $\mathcal{L}_{I_{j,1C}}(s)$ can be calculated as follows:
\small{\begin{align}\label{eq:LT_IC}
&\mathcal{L}_{I_{j,1C}}(s)={\rm E}_{\Phi_{j,1C},\{\mathbf{g}_{1,\ell}\}}\left[\exp\left(-s\sum_{\ell\in\Phi_{j,1C}}D_{1,\ell0}^{-\alpha_{1}}\mathbf{g}_{1,\ell}\right)\right]
\eqlc{\rm E}_{\Phi_{j,1C}}\left[\prod_{\ell\in\Phi_{j,1C}}{\rm E}_{\{\mathbf{g}_{1,\ell}\}}\left[\exp\left(-sD_{1,\ell0}^{-\alpha_{1}}\mathbf{g}_{1,\ell}\right)\right]\right]\notag
\end{align}
\begin{align}
&\eqld {\rm E}_{\Phi_{j,1C}}\left[\prod_{\ell\in\Phi_{j,1C}}\frac{1}{1+sD_{1,\ell0}^{-\alpha_{1}}}\right]\eqle\exp\left(-2\pi p_{\bar{c}}\left(U,T_{1},T_{2}\right)\lambda_{1}\int_{r_{j,1C}}^{r_{j,1O}}\left(1-\frac{1}{1+\frac{s}{r^{\alpha_{1}}}}\right)r{\rm d}r\right) \triangleq \mathcal{L}_{I_{j,1C}}(U,s,r_{j,1C},r_{j,1O}) 
\end{align}}\normalsize{
where $\mathbf{g}_{1,\ell} \triangleq \left|\mathbf{h}_{1,\ell0}^{\dagger}\mathbf{f}_{1,\ell}\right|^{2}$, (c) is obtained by noting that $\mathbf{g}_{1,\ell}$ ($\ell\in\Phi_{j,1C}$) are mutually independent, (d) is due to $\left|\mathbf{h}_{1,\ell0}^{\dagger}\mathbf{f}_{1,\ell}\right|^{2}\dis{\rm Gamma}(1,1)$ (i.e., ${\rm Exp}(1)$),  and (e) is obtained by using the probability generating functional of a PPP \cite{haenggi09}. Further, by first letting $s^{-\frac{1}{\alpha_{1}}}r=t$  (i.e., $\frac{s}{r^{\alpha_1}}=t^{-\alpha_1}$) and then $\frac{1}{1+t^{-\alpha_{1}}}=w$, we  have $\int_{r_{j,1C}}^{r_{j,1O}}\left(1-\frac{1}{1+\frac{s}{r^{\alpha_{1}}}}\right)r{\rm d}r=\frac{s^{\frac{2}{\alpha_1}}}{\alpha_1}\int_{1/(1+sr_{j,1C}^{-\alpha_1})}^{1/(1+sr_{j,1O}^{-\alpha_1})}w^{\frac{2}{\alpha_1}-1}(1-w)^{-\frac{2}{\alpha_1}}{\rm d}w$. By the definition of $B'(a,b,z)$, we can obtain}
\small{\begin{align}
&\mathcal{L}_{I_{j,1C}}(U,s,r_{j,1C},r_{j,1O})\notag\\
&=\exp\Bigg(-\left(B^{'}\left(\frac{2}{\alpha_{1}},1-\frac{2}{\alpha_{1}},\frac{1}{1+sr_{j,1C}^{-\alpha_{1}}}\right)-B^{'}\left(\frac{2}{\alpha_{1}},1-\frac{2}{\alpha_{1}},\frac{1}{1+sr_{j,1O}^{-\alpha_{1}}}\right)\right)\frac{2\pi}{\alpha_{1}}p_{\bar{c}}\left(U,T_{1},T_{2}\right)\lambda_{1}s^{\frac{2}{\alpha_{1}}}\Bigg)\label{eq:LT_1in}.
\end{align}}

Next, based on (\ref{eq:LT_IC}) and utilizing Fa${\rm \grave{a}}$ di Bruno's formula \cite{johnson02}, $\mathcal{L}_{I_{j,1C}}^{(n_{1})}(s)$ can be calculated as follows:
\small{\begin{align}
\mathcal{L}_{I_{j,1C}}^{(n_{1})}(s)=& \mathcal{L}_{I_{j,1C}}(s)\sum_{(m_{a})_{a=1}^{n_1}\in\mathcal{M}_{n_{1}}}\frac{n_{1}!}{\prod_{a=1}^{n_{1}}m_{a}!}\prod_{a=1}^{n_{1}}\left(\frac{2\pi p_{\bar{c}}\left(U,T_{1},T_{2}\right)\lambda_{1}}{a!}\int_{r_{j,1C}}^{r_{j,1O}}\frac{{\rm d}^{a}}{{\rm d}s^{a}}\left(\frac{1}{1+\frac{s}{r^{\alpha_{1}}}}\right)r{\rm d}r\right)^{m_{a}}\notag\\
 \eqlf &(-1)^{n_{1}}\mathcal{L}_{I_{j,1C}}(U,s,r_{j,1C},r_{j,1O})\sum_{(m_{a})_{a=1}^{n_1}\in\mathcal{M}_{n_{1}}}\frac{n_{1}!}{\prod_{a=1}^{n_{1}}m_{a}!}\prod_{a=1}^{n_{1}}\left(2\pi p_{\bar{c}}\left(U,T_{1},T_{2}\right)\lambda_{1}\int_{r_{j,1C}}^{r_{j,1O}}\frac{r^{1-a\alpha_{1}}}{\left(1+\frac{s}{r^{\alpha_{1}}}\right)^{a+1}}{\rm d}r\right)^{m_{a}}\notag\\
&\hspace{-3mm} \triangleq \mathcal{L}_{I_{j,1C}}^{(n_{1})}(U,s,r_{j,1C},r_{j,1O}) 
\end{align}}\normalsize{
 where (f) is due to $\frac{{\rm d}^{a}}{{\rm d}s^{a}}\left(\frac{1}{1+\frac{s}{r^{\alpha_{1}}}}\right)=  \frac{(-1)^a (a!)r^{-a\alpha_1}}{\left(1+\frac{s}{r^{\alpha_1}}\right)^{a+1}}$ and $\prod_{a=1}^{n_1}(-1)^{am_a}=(-1)^{\sum_{a=1}^{n_1}am_a}=(-1)^{n_1}$.    Similarly, by first letting $s^{-\frac{1}{\alpha_{1}}}r=t$  and then $\frac{1}{1+t^{-\alpha_{1}}}=w$, we have $\int_{r_{j,1C}}^{r_{j,1O}}\frac{r^{1-a\alpha_{1}}}{\left(1+\frac{s}{r^{\alpha_{1}}}\right)^{a+1}}{\rm d}r=\frac{s^{\frac{2}{\alpha_1}-a}}{\alpha_1}\int_{1/(1+sr_{j,1C}^{-\alpha_1})}^{1/(1+sr_{j,1O}^{-\alpha_1})}$ $w^{\frac{2}{\alpha_1}}(1-w)^{-\frac{2}{\alpha_1}+a-1}{\rm d}w$. Thus, we can calculate $\mathcal{L}_{I_{j,1C}}^{(n_{1})}(U,s,r_{j,1C},r_{j,1O})$.  Let $\tilde{\mathcal{L}}_{I_{j,1C}}^{(n_{1})}(s)\triangleq \mathcal{L}_{I_{j,1C}}^{(n_{1})}(s)/(-\frac{1}{s})^{n_{1}}$ $=\mathcal{L}_{I_{j,1C}}^{(n_{1})}(U,s,r_{j,1C},r_{j,1O})/(-\frac{1}{s})^{n_{1}}\triangleq \tilde{\mathcal{L}}_{I_{j,1C}}^{(n_{1})}(U,s,r_{j,1C},r_{j,1O})$.  Similarly, we can calculate  $\mathcal{L}_{I_{j,1O}}\left(s\right)$, $\mathcal{L}^{(n_{2})}_{I_{j,1O}}\left(s\right)$, $\mathcal{L}_{I_{j,2}}(s)$ and $\mathcal{L}_{I_{j,2}}^{(n_{3})}(s)$. Finally, removing the conditions on $Y_{j}=y$ and after some algebraic manipulations, we can obtain the final result.}

\subsection{Proof of Theorem \ref{them:CPj_lowbeta}}\label{proof:thm2}
Conditioned on $Y_{j}=y$, we have
\small{\begin{align}\label{eq:OPj_condi}
&1-{\rm Pr}\left({\rm SIR}_{j,0}>\beta|u_{0}\in\mathcal{U}_{j},Y_{j}=y\right)\notag\\
=&{\rm Pr}\left(\left|\mathbf{h}_{j,00}^{\dagger}\mathbf{f}_{j,0}\right|^{2}\le\beta y^{\alpha_{j}}\left(\frac{P_{1}}{P_{j}}I_{j,1C}+\frac{P_{1}}{P_{j}}I_{j,1O}+\frac{P_{2}}{P_{j}}I_{j,2}\right)\right)\notag\\
\eqla&\exp\left(-\beta y^{\alpha_{j}}\left(\frac{P_{1}}{P_{j}}I_{j,1C}+\frac{P_{1}}{P_{j}}I_{j,1O}+\frac{P_{2}}{P_{j}}I_{j,2}\right)\right)\sum_{n=M_{j}}^{\infty}\frac{\left(\beta y^{\alpha_{j}}\right)^{n}}{n!}\left(\frac{P_{1}}{P_{j}}I_{j,1C}+\frac{P_{1}}{P_{j}}I_{j,1O}+\frac{P_{2}}{P_{j}}I_{j,2}\right)^{n}\notag\\
\eqlb&\sum_{n=M_{j}}^{\infty}\frac{\left(-\beta y^{\alpha_{j}}\right)^{n}}{n!}\sum_{\left(n_{a}\right)_{a=1}^{3}\in\mathcal{N}_{n}}\binom{n}{n_{1},n_{2},n_{3}}\left(\frac{P_{1}}{P_{j}}\right)^{n_{1}+n_{2}}\left(\frac{P_2}{P_j}\right)^{n_3}  \mathcal{L}^{(n_{1})}_{I_{j,1C}}\left(s\right)|_{s=\beta y^{\alpha_{j}}\frac{P_{1}}{P_{j}}} \mathcal{L}^{(n_{2})}_{I_{j,1O}}\left(s\right)|_{s=\beta y^{\alpha_{j}}\frac{P_{1}}{P_{j}}}\notag\\
&\times  \mathcal{L}_{I_{j,2}}^{(n_{3})}(s)|_{s=\beta y^{\alpha_{j}}\frac{P_{2}}{P_{j}}} \nonumber
\end{align}
\begin{align}
 \eqlc & \sum_{n=M_{j}}^{\infty}\underbrace{\frac{1}{n!}\sum_{(n_{a})_{a=1}^{3}\in\mathcal{N}_{n}}\binom{n}{n_{1},n_{2},n_{3}}\mathcal{\tilde{L}}^{(n_{1})}_{I_{j,1C}}\left(s\right)|_{s=\beta y^{\alpha_{j}}\frac{P_{1}}{P_{j}}}\mathcal{\tilde{L}}^{(n_{2})}_{I_{j,1O}}\left(s\right)|_{s=\beta y^{\alpha_{h}}\frac{P_{1}}{P_{j}}}\mathcal{\tilde{L}}^{(n_{3})}_{I_{j,2}}\left(s\right)|_{s=\beta y^{\alpha_{j}}\frac{P_{2}}{P_{j}}}}_{\triangleq \mathcal{T}_{j,Y_{j}}\left(n,y,U,T_{1},T_{2},\beta\right)} 
\end{align}}\normalsize{
 where (a) is due to $\left|\mathbf{h}_{j,00}^{\dagger}\mathbf{f}_{j,0}\right|^{2}\dis{\rm Gamma}\left(M_{j},1\right)$, (b) is due to Multinomial Theorem, and (c) is due to similar  calculations in Appendix \ref{proof:thm1}. Removing the condition on $Y_{j}=y$, we have}
\small{\begin{align}\label{eq:OP_j_sumtoinf}
&1-\mathcal{S}_{j}\left(U,T_{1},T_{2},\beta\right)=\int_{0}^{\infty}\sum_{n=M_{j}}^{\infty}\mathcal{T}_{j,Y_{j}}\left(n,y,U,T_{1},T_{2},\beta\right)f_{Y_{j}}(y){\rm d}y.
\end{align}}
\normalsize{Now, we calculate $\lim_{\beta\to0}\int_{0}^{\infty}\sum_{n=M_{j}}^{\infty}\mathcal{T}_{j,Y_{j}}\left(n,y,U,T_{1},T_{2},\beta\right)f_{Y_{j}}(y){\rm d}y$, i.e., the asymptotic outage probability when $\beta\to0$.}
We note that $B^{'}(a,b,z)=\frac{(1-z)^{b}}{b}+o\left((1-z)^{b}\right)$ as $z\to1$.
Then, we have
\small{\begin{align}
B^{'}\left(\frac{2}{\alpha},1-\frac{2}{\alpha},\frac{1}{1+c\beta}\right)&= \frac{\left(c\beta\right)^{1-\frac{2}{\alpha}}}{1-\frac{2}{\alpha}}+o\left(\beta^{1-\frac{2}{\alpha}}\right)\;,\label{eq:betainc1}\\
B^{'}\left(1+\frac{2}{\alpha},a-\frac{2}{\alpha},\frac{1}{1+c\beta}\right)&= \frac{(c\beta)^{a-\frac{2}{\alpha}}}{a-\frac{2}{\alpha}}+o\left(\beta^{a-\frac{2}{\alpha}}\right)\;,\label{eq:betainc2}
\end{align}} \normalsize{where $c\in\mathbb{R}^{+}$. Based on these two asymptotic expressions,
we can obtain\footnote{$f(x)=o\left(g(x)\right)$ means $\lim_{x\to0}\frac{f(x)}{g(x)}=0$.}}
\small{\begin{align}
&\mathcal{\tilde{L}}^{(n_{1})}_{I_{j,1C}} \left(s\right)=\beta^{n_{1}}\sum_{(m_{a})_{a=1}^{n_{1}}\in\mathcal{M}_{n_{1}}}\frac{n_{1}!}{\prod_{a=1}^{n_{1}}m_{a}!}\prod_{a=1}^{n_{1}}\Bigg(\frac{\frac{2\pi}{\alpha_{1}}p_{\bar{c}}\left(U,T_{1},T_{2}\right)\lambda_{1}}{a-\frac{2}{\alpha_{1}}}\left(1-\left(\frac{1}{T_{j}}\right)^{a-\frac{2}{\alpha_{1}}}\right) \left(\frac{P_{1}y^{\alpha_{j}}}{P_{j}}\right)^{\frac{2}{\alpha_{1}}} \Bigg)^{m_{a}}+o\left(\beta^{n_{1}}\right)\;,\label{eq:LT_1C_lowbeta}\\
&\mathcal{\tilde{L}}^{(n_{2})}_{I_{j,1O}} \left(s\right)=\beta^{n_{2}}\sum_{(p_{a})_{a=1}^{n_{2}}\in\mathcal{M}_{n_{2}}}\frac{n_{2}!}{\prod_{a=1}^{n_{2}}p_{a}!} \prod_{a=1}^{n_{2}}\left(\frac{\frac{2\pi}{\alpha_{1}}\lambda_{1}}{a-\frac{2}{\alpha_{1}}}\left(\frac{P_{1}}{P_{j}}\right)^{\frac{2}{\alpha_{1}}}y^{\frac{2\alpha_{j}}{\alpha_{1}}}\left(\frac{1}{T_{j}}\right)^{a-\frac{2}{\alpha_{1}}}\right)^{p_{a}}+o\left(\beta^{n_{2}}\right)\;,\label{eq:LT_1O_lowbeta}\\
&\mathcal{\tilde{L}}^{(n_{3})}_{I_{j,2}} \left(s\right)=\beta^{n_{3}}\sum_{(q_{a})_{a=1}^{n_{3}}\in\mathcal{M}_{n_{3}}}\frac{n_{3}!}{\prod_{a=1}^{n_{2}}q_{a}!} \prod_{a=1}^{n_{3}}\left(\frac{\frac{2\pi}{\alpha_{2}}\lambda_{2}}{a-\frac{2}{\alpha_{2}}}\left(\frac{P_{2}}{P_{j}}\right)^{\frac{2}{\alpha_{2}}}y^{\frac{2\alpha_{j}}{\alpha_{2}}}\right)^{q_{a}}+o\left(\beta^{n_{3}}\right)\;.\label{eq:LT_2_lowbeta}
\end{align}}\normalsize{
Moreover, utilizing dominated convergence theorem, we can show that}
\small{
\begin{align}\nonumber
&\lim_{\beta\to0}\int_{0}^{\infty}\sum_{n=M_{j}}^{\infty}\mathcal{T}_{j,Y_{j}}\left(n,y,U,T_{1},T_{2},\beta\right)f_{Y_{j}}(y){\rm d}y=\int_{0}^{\infty}\sum_{n=M_{j}}^{\infty}\lim_{\beta\to0}\mathcal{T}_{j,Y_{j}}\left(n,y,U,T_{1},T_{2},\beta\right)f_{Y_{j}}(y){\rm d}y\;.
\end{align}}\normalsize{
Hence, substituting (\ref{eq:LT_1C_lowbeta}), (\ref{eq:LT_1O_lowbeta}) and (\ref{eq:LT_2_lowbeta}) into (\ref{eq:OP_j_sumtoinf}), and after some algebraic manipulations, we obtain Results 1), 2) and 3) in \emph{Theorem \ref{them:CPj_lowbeta}}. To complete the proof, we now show that $b_{2}\left(U,T_{1},T_{2}\right)$ decreases with $U$. This can be proved by noting that i) $b_{2}\left(U,T_{1},T_{2}\right)$ is an increasing function of $p_{\bar{c}}\left(U,T_{1},T_{2}\right)$, and ii) $p_{\bar{c}}\left(U,T_{1},T_{2}\right)$ decreases with $U$ (which can be easily shown using (\ref{eq:INprob_cal})).}

\subsection{Proof of Lemma \ref{lem:opt_U}}\label{proof:lem_optU}

First, we characterize the maximum order gain. When $U\in\{0,1,\ldots,N_{1}-N_{2}\}$, we have $N_{1}-U\ge N_{2}$, implying $\min\{N_1-U,N_2\}=N_2$. When $U\in\{ N_{1}-N_{2}+1,\ldots,N_{1}-1\}$, we have $N_{1}-U< N_{2}$, implying $\min\{N_1-U,N_2\}=N_{1}-U<N_{2}$. 
Thus, we can show that the maximum order gain is $\max_{U\in \{0,1,\cdots, N_1-1\}}\min\{N_1-U,N_2\}=N_2$,  achieved at any $U\in\{0,1,\ldots,N_{1}-N_{2}\}$.    Next, we compare the coefficients of $\beta^{N_{2}}$ achieved at  different $U\in\{0,1,\ldots,N_{1}-N_{2}\}$. We consider two cases. i) When $U<N_{1}-N_{2}$, as $b_{2}\left(U,T_{1},T_{2}\right)$ decreases with $U$, the coefficients satisfy $ \mathcal A_2 b_{2}\left(N_{1}-N_{2}-1,T_{1},T_{2}\right)<\mathcal A_2 b_{2}\left(N_{1}-N_{2}-2,T_{1},T_{2}\right)<\ldots< \mathcal A_2 b_{2}\left(0,T_{1},T_{2}\right)$. ii) When $U=N_{1}-N_{2}$, the coefficient of $\beta^{N_{2}}$ is $ \mathcal A_1 b_{1}\left(N_{1}-N_{2},T_{1},T_{2}\right)+\mathcal A_2 b_{2}\left(N_{1}-N_{2},T_{1},T_{2}\right)$.
Therefore, we can complete the proof.

\subsection{Proof of Theorem \ref{them:CP_highbeta_bound}}\label{proof:CP_highbeta_uneql}

\subsubsection{Upper Bound}

Let $\mathcal{S}_{j,Y_{j}}\left(y,\beta,U,T_{1},T_{2}\right)\define {\rm Pr}\left({\rm SIR}_{j,0}>\beta|u_{0}\in\mathcal{U}_{j},Y_{j}=y\right)$ denote the conditional SIR coverage probability. Then,  from Theorem \ref{thm:overall_CP} (Appendix C), $\mathcal{S}_{j}\left(\beta,U,T_{1},T_{2}\right)$ can be written as
\small{\begin{align}\label{eq:CPj_Sj_pdfYj}
\mathcal{S}_{j}\left(\beta,U,T_{1},T_{2}\right)=&\int_{0}^{\infty}\mathcal{S}_{j,Y_{j}}\left(y,\beta,U,T_{1},T_{2}\right)f_{Y_{j}}(y){\rm d}y
\end{align}}
\normalsize{
where $\mathcal{S}_{j,Y_{j}}\left(y,\beta,U,T_{1},T_{2}\right)=\mu_{j}\left(\beta,U,T_{1},T_{2}\right)g_{j}\left(y,\beta,U,T_{1},T_{2}\right)$ and $f_{Y_{j}}(y)$ is the p.d.f. of $Y_{j}$ given in \emph{Lemma \ref{lem:num_req_pmf}}. Here, $g_{j}\left(y,\beta,U,T_{1},T_{2}\right)=\exp\left(-c_{1}(\beta)\beta^{\frac{2}{\alpha_{1}}}y^{\frac{2\alpha_{j}}{\alpha_{1}}}- c_{2}(\beta)\beta^{\frac{2}{\alpha_{2}}}y^{\frac{2\alpha_{j}}{\alpha_{2}}}\right)$ with $c_{1}(\beta)=\left(p_{\bar{c}}\left(U,T_{1},T_{2}\right)\left(B^{'}\left(\frac{2}{\alpha_{1}},1-\frac{2}{\alpha_{1}},\frac{1}{1+\beta}\right)-B^{'}\left(\frac{2}{\alpha_{1}},1-\frac{2}{\alpha_{1}},\frac{1}{1+\frac{\beta}{T_{j}}}\right)\right)+B^{'}\left(\frac{2}{\alpha_{1}},1-\frac{2}{\alpha_{1}},\frac{1}{1+\frac{\beta}{T_{j}}}\right)\right)$\\$\times\frac{2\pi\lambda_{1}}{\alpha_{1}}\left(\frac{P_{1}}{P_{j}}\right)^{\frac{2}{\alpha_{1}}}$ and $ c_{2}(\beta)=\frac{2\pi\lambda_{2}}{\alpha_{2}}\left(\frac{P_{2}}{P_{j}}\right)^{\frac{2}{\alpha_{2}}}B^{'}\left(\frac{2}{\alpha_{2}},1-\frac{2}{\alpha_{2}},\frac{1}{1+\beta}\right)$, and}
\small{\begin{align}\label{eq:CPj_orig}
\mu_{j}\left(\beta,U,T_{1},T_{2}\right)&=\sum_{n=0}^{M_{j}-1}\sum_{(n_{a})_{a=1}^{3}\in\mathcal{N}_{n}}\sum_{(m_{a})_{a=1}^{n_{1}}\in\mathcal{M}_{n_{1}}}\sum_{(p_{a})_{a=1}^{n_{2}}\in\mathcal{M}_{n_{2}}}\sum_{(q_{a})_{a=1}^{n_{3}}\in\mathcal{M}_{n_{3}}}c(\beta) y^{\frac{2\alpha_{j}}{\alpha_{1}}\left(\sum_{a=1}^{n_{1}}m_{a}+\sum_{a=1}^{n_{2}}p_{a}\right)+\frac{2\alpha_{j}}{\alpha_{2}}\sum_{a=1}^{n_{3}}q_{a}}\;,
\end{align}}\normalsize
with
\small{\begin{align}
c(\beta)=&\frac{1}{n!}\binom{n}{n_{1},n_{2},n_{3}}\frac{n_{1}!}{\prod_{a=1}^{n_{1}}m_{a}!}\frac{n_{2}!}{\prod_{a=1}^{n_{2}}p_{a}!}\frac{n_{3}!}{\prod_{a=1}^{n_{3}}q_{a}!}\prod_{a=1}^{n_{3}}\left(\frac{2\pi\lambda_{2}}{\alpha_{2}}\left(\frac{\beta P_{2}}{P_{j}}\right)^{\frac{2}{\alpha_{2}}}B^{'}\left(1+\frac{2}{\alpha_{2}},a-\frac{2}{\alpha_{2}},\frac{1}{1+\beta}\right)\right)^{q_{a}}\notag\\
&\hspace{0cm}\times\prod_{a=1}^{n_{1}}\left(\frac{2\pi\lambda_{1}}{\alpha_{1}}p_{\bar{c}}\left(U,T_{1},T_{2}\right)\left(\frac{\beta P_{1}}{P_{j}}\right)^{\frac{2}{\alpha_{1}}}\right)^{m_{a}}\prod_{a=1}^{n_{2}}\left(\frac{2\pi\lambda_{1}}{\alpha_{1}}\left(\frac{\beta P_{1}}{P_{j}}\right)^{\frac{2}{\alpha_{1}}}B^{'}\left(1+\frac{2}{\alpha_{1}},a-\frac{2}{\alpha_{1}},\frac{1}{1+\frac{\beta}{T_{j}}}\right)\right)^{p_{a}}\notag\\
&\hspace{0cm}\times\prod_{a=1}^{n_{1}}\left(B^{'}\left(1+\frac{2}{\alpha_{1}},a-\frac{2}{\alpha_{1}},\frac{1}{1+\beta}\right)-B^{'}\left(1+\frac{2}{\alpha_{1}},a-\frac{2}{\alpha_{1}},\frac{1}{1+\frac{\beta}{T_{j}}}\right)\right)^{m_{a}}.\nonumber
\end{align}}\normalsize
Let $\tilde{f}_{Y_{j}}(y)=\frac{2\pi\lambda_{j}}{\mathcal{A}_{j}}y\exp\left(-\pi \lambda_j y^{2}\right)$. Let $\mathcal{\tilde{S}}_{j,Y_{j}}\left(y,\beta,U,T_{1},T_{2}\right)\triangleq\mu_{j}\left(\beta,U,T_{1},T_{2}\right)\tilde{g}_{j}\left(y,\beta,U,T_{1},T_{2}\right)$ with $\tilde{g}_{j}\left(y,\beta,U,T_{1},T_{2}\right)=\exp\left(-c_{j}(\beta)\beta^{\frac{2}{\alpha_{j}}}y^{2}\right)$. Then, we have
\small{\begin{align}\label{eq:CPj_ub_1infty}
&\int_{0}^{\infty}\mathcal{S}_{j,Y_{j}}\left(y,\beta,U,T_{1},T_{2}\right)f_{Y_{j}}(y){\rm d}y
\stackrel{(a)}{<}\int_{0}^{\infty}\mathcal{\tilde{S}}_{j,Y_{j}}\left(y,\beta,U,T_{1},T_{2}\right)\tilde{f}_{Y_{j}}(y){\rm d}y\notag
\end{align}
\begin{align}
&\eqlb \frac{\pi\lambda_{j}}{\mathcal{A}_{j}}\sum_{n=0}^{M_{j}-1}\sum_{(n_{a})_{a=1}^{3}\in\mathcal{N}_{n}}\sum_{(m_{a})_{a=1}^{n_{1}}\in\mathcal{M}_{n_{1}}}\sum_{(p_{a})_{a=1}^{n_{2}}\in\mathcal{M}_{n_{2}}}\sum_{(q_{a})_{a=1}^{n_{3}}\in\mathcal{M}_{n_{3}}}
\Gamma\left(\frac{\alpha_{j}}{\alpha_{1}}\left(\sum_{a=1}^{n_{1}}m_{a}+\sum_{a=1}^{n_{2}}p_{a}\right)+\frac{\alpha_{j}}{\alpha_{2}}\sum_{a=1}^{n_{3}}q_{a}+1\right)c(\beta)\notag\\
&\hspace{5mm}\times\left(c_{j}(\beta)\beta^{\frac{2}{\alpha_{j}}}+\pi\lambda_j \right)^{-\frac{\alpha_{j}}{\alpha_{1}}\left(\sum_{a=1}^{n_{1}}m_{a}+\sum_{a=1}^{n_{2}}p_{a}\right)-\frac{\alpha_{j}}{\alpha_{2}}\sum_{a=1}^{n_{3}}q_{a}-1}\;,
\end{align}}
\normalsize{where (a) is due to $g_j\left(y,\beta,U,T_{1},T_{2}\right)<\tilde g_j\left(y,\beta,U,T_{1},T_{2}\right)$ and $f_{Y_{j}}(y)<\tilde f_{Y_{j}}(y)$,  and (b) is due to $\int_0^{\infty}u^a\exp(-bu){\rm d}u=b^{-a-1}\Gamma(a+1)$. To calculate the order of $\beta$ for \eqref{eq:CPj_ub_1infty} as $\beta\to \infty$, we first calculate the orders of $\beta$ for $c(\beta)$ and $c_j(\beta)$ as $\beta \to \infty$. We note that $B'(a,b,z)=B(a,b)-\frac{z^a}{a}+o(z^a)$, as $z\to 0$.  Then, we have
\small{\begin{align}
&B'\left(1+\frac{2}{\alpha_1},a-\frac{2}{\alpha_1},\frac{1}{1+\beta}\right)=B\left(1+\frac{2}{\alpha_1},a-\frac{2}{\alpha_1}\right)-\frac{1}{1+\frac{2}{\alpha_1}}\left(\frac{1}{\beta}\right)^{1+\frac{2}{\alpha_1}}+o\left(\left(\frac{1}{\beta}\right)^{1+\frac{2}{\alpha_1}}\right),\label{eq:highSIR_appro_a_1}\\
&B'\left(1+\frac{2}{\alpha_1},a-\frac{2}{\alpha_1},\frac{1}{1+\beta/T_j}\right)=B\left(1+\frac{2}{\alpha_1},a-\frac{2}{\alpha_1}\right)-\frac{1}{1+\frac{2}{\alpha_1}}\left(\frac{T_j}{\beta}\right)^{1+\frac{2}{\alpha_1}}+o\left(\left(\frac{1}{\beta}\right)^{1+\frac{2}{\alpha_1}}\right),\label{eq:highSIR_appro_a_t}\\
&B'\left(\frac{2}{\alpha_1},1-\frac{2}{\alpha_1},\frac{1}{1+\beta}\right)=B\left(\frac{2}{\alpha_1},1-\frac{2}{\alpha_1}\right)-\frac{\alpha_1}{2}\left(\frac{1}{\beta}\right)^{\frac{2}{\alpha_1}}+o\left(\left(\frac{1}{\beta}\right)^{\frac{2}{\alpha_1}}\right),\label{eq:highSIR_appro_1_1}\\
&B'\left(\frac{2}{\alpha_1},1-\frac{2}{\alpha_1},\frac{1}{1+\beta/T_j}\right)=B\left(\frac{2}{\alpha_1},1-\frac{2}{\alpha_1}\right)-\frac{\alpha_1}{2}\left(\frac{T_j}{\beta}\right)^{\frac{2}{\alpha_1}}+o\left(\left(\frac{1}{\beta}\right)^{\frac{2}{\alpha_1}}\right).\label{eq:highSIR_appro_1_t}
\end{align}}
\normalsize 
By \eqref{eq:highSIR_appro_a_1}-\eqref{eq:highSIR_appro_1_t}, as $\beta\to \infty$, we have
\small{\begin{align}
c(\beta)&=\frac{1}{n!}\binom{n}{n_{1},n_{2},n_{3}}\frac{n_{1}!}{\prod_{a=1}^{n_{1}}m_{a}!}\frac{n_{2}!}{\prod_{a=1}^{n_{2}}p_{a}!}\frac{n_{3}!}{\prod_{a=1}^{n_{3}}q_{a}!}\notag\\
&\hspace{0cm}\times\prod_{a=1}^{n_{1}}\left(\frac{2\pi\lambda_{1}}{\alpha_{1}}p_{\bar{c}}\left(U,T_{1},T_{2}\right)\left(\frac{ P_{1}}{P_{j}}\right)^{\frac{2}{\alpha_{1}}}\beta^{\frac{2}{\alpha_1}}\left(\frac{T_j^{1+\frac{2}{\alpha_1}}-1}{1+\frac{2}{\alpha_1}}\beta^{-1-\frac{2}{\alpha_1}}+o\left(\left(\frac{1}{\beta}\right)^{1+\frac{2}{\alpha_1}}\right)\right)\right)^{m_{a}}\notag\\
&\hspace{0cm}\times\prod_{a=1}^{n_{2}}\left(\frac{2\pi\lambda_{1}}{\alpha_{1}}\left(\frac{ P_{1}}{P_{j}}\right)^{\frac{2}{\alpha_{1}}}\beta^{\frac{2}{\alpha_1}}\left(B\left(1+\frac{2}{\alpha_{1}},a-\frac{2}{\alpha_{1}}\right)-\frac{T_j^{1+\frac{2}{\alpha_1}}}{1+\frac{2}{\alpha_1}}\beta^{-1-\frac{2}{\alpha_1}}+o\left(\left(\frac{1}{\beta}\right)^{1+\frac{2}{\alpha_1}}\right)\right)\right)^{p_{a}}\notag\\
&\hspace{0cm}\times\prod_{a=1}^{n_{3}}\left(\frac{2\pi\lambda_{2}}{\alpha_{2}}\left(\frac{ P_{2}}{P_{j}}\right)^{\frac{2}{\alpha_{2}}}\beta^{\frac{2}{\alpha_2}}\left(B\left(1+\frac{2}{\alpha_{2}},a-\frac{2}{\alpha_{2}}\right)-\frac{1}{1+\frac{2}{\alpha_2}}\beta^{-1-\frac{2}{\alpha_2}}+o\left(\left(\frac{1}{\beta}\right)^{1+\frac{2}{\alpha_2}}\right)\right)\right)^{q_{a}},\label{eq:order_c}
\end{align}
\begin{align}
c_{1}(\beta)&=\frac{2\pi\lambda_{1}}{\alpha_{1}}\left(\frac{P_{1}}{P_{j}}\right)^{\frac{2}{\alpha_{1}}}\left(\frac{\alpha_1}{2}p_{\bar{c}}\left(U,T_1,T_2\right)\left(T_j^{\frac{2}{\alpha_1}}-1\right)\beta^{-\frac{2}{\alpha_1}}
+B\left(\frac{2}{\alpha_{1}},1-\frac{2}{\alpha_{1}}\right)-\frac{\alpha_1}{2}\left(\frac{T_j}{\beta}\right)^{\frac{2}{\alpha_1}}+o\left(\left(\frac{1}{\beta}\right)^{\frac{2}{\alpha_1}}\right)\right),\label{eq:order_c_1}\\
c_{2}(\beta)&=\frac{2\pi\lambda_{2}}{\alpha_{2}}\left(\frac{P_{2}}{P_{j}}\right)^{\frac{2}{\alpha_{2}}}\left(B\left(\frac{2}{\alpha_2},1-\frac{2}{\alpha_2}\right)-\frac{\alpha_2}{2}\left(\frac{1}{\beta}\right)^{\frac{2}{\alpha_2}}+o\left(\left(\frac{1}{\beta}\right)^{\frac{2}{\alpha_2}}\right)\right).\label{eq:order_c_2}
\end{align}} 
\normalsize
We can obtain the order of $\beta$  for each term corresponding to a choice for $n$, $(n_a)_{a=1}^3\in\mathcal N_n$ and $(m_a)_{a=1}^{n_1}\in\mathcal M_{n_1}$ in  \eqref{eq:CPj_ub_1infty} as $\beta\to\infty$: $\beta^{-\left(1+\frac{2}{\alpha_{1}}\right)\sum_{a=1}^{n_{1}}m_{a}-\frac{2}{\alpha_{j}}}$, which can be maximized when $n_{1}=0$. Hence, we obtain the order of the upper bound: $\beta^{-\frac{2}{\alpha_{j}}}$. Moreover,  based on \eqref{eq:CPj_ub_1infty}, \eqref{eq:order_c}-\eqref{eq:order_c_2} and after some algebraic manipulation, we obtain the expressions of $\eta_1(U,T_{1},T_{2})$ and $\eta_2$.}

\subsubsection{Lower Bound}
First, we note that $\mathcal{S}_{j}\left(\beta,U,T_{1},T_{2}\right)$ can be rewritten as
\small{\begin{align}\label{eq:CPj_Sj_pdfYj_lb}
&\mathcal{S}_{j}\left(\beta,U,T_{1},T_{2}\right)=\int_{0}^{1}\mathcal{S}_{j,Y_{j}}\left(y,\beta,U,T_{1},T_{2}\right)f_{Y_{j}}(y){\rm d}y+\int_{1}^{\infty}\mathcal{S}_{j,Y_{j}}\left(y,\beta,U,T_{1},T_{2}\right)f_{Y_{j}}(y){\rm d}y\notag\\
&\int_{0}^{1}\mathcal{S}_{j,Y_{j}}\left(y,\beta,U,T_{1},T_{2}\right)f_{Y_{j}}(y){\rm d}y
 \stackrel{(c)}{>}\int_{0}^{1}\hat{\mathcal{S}}_{j,Y_{j}}\left(y,\beta,U,T_{1},T_{2}\right)\hat{f}_{Y_{j}}(y){\rm d}y\notag\\
 \eqld &\frac{\pi\lambda_{j}}{\mathcal{A}_{j}}\frac{\alpha_{\max}}{\alpha_{j}}\sum_{n=0}^{M_{j}-1}\sum_{(n_{a})_{a=1}^{3}\in\mathcal{N}_{n}}\sum_{(m_{a})_{a=1}^{n_{1}}\in\mathcal{M}_{n_{1}}}\sum_{(p_{a})_{a=1}^{n_{2}}\in\mathcal{M}_{n_{2}}}\sum_{(q_{a})_{a=1}^{n_{3}}\in\mathcal{M}_{n_{3}}}c(\beta)\notag\\
&\times\left(a_{1}+a_{2}+ c_{1}(\beta)\beta^{\frac{2}{\alpha_{1}}}+ c_{2}(\beta) \beta^{\frac{2}{\alpha_{2}}}\right)^{-\frac{\alpha_{\max}}{\alpha_{1}}\left(\sum_{a=1}^{n_{1}}m_{a}+\sum_{a=1}^{n_{2}}p_{a}\right)-\frac{\alpha_{\max}}{\alpha_{2}}\sum_{a=1}^{n_{3}}q_{a}-\frac{\alpha_{\max}}{\alpha_{j}}}\notag\\
&\times \gamma\left(\frac{\alpha_{\max}}{\alpha_{1}}\left(\sum_{a=1}^{n_{1}}m_{a}+\sum_{a=1}^{n_{2}}p_{a}\right)+\frac{\alpha_{\max}}{\alpha_{2}}\sum_{a=1}^{n_{3}}q_{a}+\frac{\alpha_{\max}}{\alpha_{j}},a_{1}+a_{2}+ c_{1}(\beta)\beta^{\frac{2}{\alpha_{1}}}+c_{2}(\beta) \beta^{\frac{2}{\alpha_{2}}}\right)
\end{align}}
\normalsize{where $\hat{\mathcal{S}}_{j,Y_{j}}\left(y,\beta,U,T_{1},T_{2}\right)=\mu_{j}(\beta,U,T_{1},T_{2})\hat{g}_{j}(y,\beta,U,T_{1},T_{2})$ with $\hat{g}_{j}(y,\beta,U,T_{1},T_{2})$\\$=\exp\left(-c_{1}\beta^{\frac{2}{\alpha_{1}}}y^{\frac{2\alpha_{j}}{\alpha_{\max}}}-c_{2}\beta^{\frac{2}{\alpha_{2}}}y^{\frac{2\alpha_{j}}{\alpha_{\max}}}\right)$, $\hat{f}_{Y_{j}}(y)=\frac{2\pi\lambda_{j}}{\mathcal{A}_{j}}y\exp\left(-a_{1}y^{\frac{2\alpha_{j}}{\alpha_{\max}}}-a_{2}y^{\frac{2\alpha_{j}}{\alpha_{\max}}}\right)$,  (c) is due to $g_j\left(y,\beta,U,T_{1},T_{2}\right)\geq\hat{g}_j\left(y,\beta,U,T_{1},T_{2}\right)$ and $f_{Y_{j}}(y)\geq\hat f_{Y_{j}}(y)$ when $y\in [0,1]$,  and (d) is due to $\int_{0}^{1}u^a\exp^{-bu}{\rm d}u=b^{-a-1}\gamma(a+1,b)$.  Here, $\gamma(a,z)\triangleq \int_0^z u^{a-1}e^{-u}{\rm d}u$ denotes the lower incomplete gamma function. Similar to the method in calculating the order of the upper bound, when $\beta\to\infty$, we can obtain the order of  $\beta$ for each term corresponding to  a choice for $n$, $(n_a)_{a=1}^3\in\mathcal N_n$, $(m_a)_{a=1}^{n_1}\in\mathcal M_{n_1}$, $(p_a)_{a=1}^{n_2}\in\mathcal M_{n_2}$ and $(q_a)_{a=1}^{n_3}\in\mathcal M_{n_3}$ in \eqref{eq:CPj_Sj_pdfYj_lb} as $$\beta^{-\sum_{a=1}^{n_{1}}m_{a}+\frac{2}{\alpha_{1}}\sum_{a=1}^{n_{2}}p_{a}+\frac{2}{\alpha_{2}}\sum_{a=1}^{n_{3}}q_{a}-\frac{2\alpha_{\max}}{\alpha_{\min}\alpha_{1}}\left(\sum_{a=1}^{n_{1}}m_{a}+\sum_{a=1}^{n_{2}}p_{a}\right)-\frac{2\alpha_{\max}}{\alpha_{\min}\alpha_{2}}\sum_{a=1}^{n_{3}}q_{a}-\frac{2}{\alpha_{\min}}\frac{\alpha_{\max}}{\alpha_{j}}},$$ which can be maximized when $n_{1}=n_{2}=n_{3}=0$, i.e., $n=0$. Hence, we obtain the order of the lower bound as $\beta^{-\frac{2}{\alpha_{j}}\frac{\alpha_{\max}}{\alpha_{\min}}}$. Moreover, based on (\ref{eq:betainc1}) and after some algebraic manipulation, we obtain the expression of $\xi_{j}$.}

\subsection{Proof of Theorem \ref{them:CP_highbeta_eql}}\label{proof:largebeta_eqlalpha}

When $\alpha_{1}=\alpha_{2}=\alpha$, from \eqref{eq:CPj_Sj_pdfYj}, we have 
\small{\begin{align}
&{\rm Pr}\left({\rm SIR}_{j,0}>\beta\right)\notag\\
&=\frac{2\pi\lambda_j}{\mathcal A_j}\sum_{n=0}^{M_{j}-1}\sum_{(n_{a})_{a=1}^{3}\in\mathcal{N}_{n}}\sum_{(m_{a})_{a=1}^{n_{1}}\in\mathcal{M}_{n_{1}}}\sum_{(p_{a})_{a=1}^{n_{2}}\in\mathcal{M}_{n_{2}}}\sum_{(q_{a})_{a=1}^{n_{3}}\in\mathcal{M}_{n_{3}}}c(\beta)\notag\\
&\hspace{5mm}\times\int_0^{\infty}y^{2\left(\sum_{a=1}^{n_1}m_a+\sum_{a=1}^{n_2}p_a+\sum_{a=1}^{n_3}q_a\right)+1}
\exp\left(-\left(c_1\beta^{\frac{2}{\alpha}}+c_2\beta^{\frac{2}{\alpha}}+a_1+a_2\right)y^2\right){\rm d}y\notag\\
&=\frac{\pi\lambda_j}{\mathcal A_j}\sum_{n=0}^{M_{j}-1}\sum_{(n_{a})_{a=1}^{3}\in\mathcal{N}_{n}}\sum_{(m_{a})_{a=1}^{n_{1}}\in\mathcal{M}_{n_{1}}}\sum_{(p_{a})_{a=1}^{n_{2}}\in\mathcal{M}_{n_{2}}}\sum_{(q_{a})_{a=1}^{n_{3}}\in\mathcal{M}_{n_{3}}}\Gamma\left(\sum_{a=1}^{n_1}m_a+\sum_{a=1}^{n_2}p_a+\sum_{a=1}^{n_3}q_a+1\right)c(\beta)\notag\\
&\hspace{5mm}\times\left(c_1(\beta)\beta^{\frac{2}{\alpha}}+c_2(\beta)\beta^{\frac{2}{\alpha}}+a_1+a_2\right)^{-\sum_{a=1}^{n_1}m_a-\sum_{a=1}^{n_2}p_a-\sum_{a=1}^{n_3}q_a-1}\label{eq:highSIR_equal_alpha}
\end{align}}\normalsize
 When $\beta\to\infty$, based on \eqref{eq:order_c}-\eqref{eq:order_c_2}, we can obtain the order of $\beta$ for each term corresponding to  a choice for $n$, $(n_a)_{a=1}^3\in\mathcal N_n$ and $(m_a)_{a=1}^{n_1}\in\mathcal M_{n_1}$ in \eqref{eq:highSIR_equal_alpha} as: $\beta^{-\left(1+\frac{2}{\alpha}\right)\sum_{a=1}^{n_{1}}m_{a}-\frac{2}{\alpha}}$, which can be maximized when $n_{1}=0$. Hence, the order is $\beta^{-\frac{2}{\alpha}}$. Moreover, after some algebraic manipulation, we can obtain the expressions of $c_1\left(U,T_1,T_2\right)$ and $c_2\left(T_1,T_2\right)$.

\subsection{Proof of Lemma \ref{lem:opt_U}}\label{proof:lem_optU-high}
We solve the optimization problem for $\alpha_1=\alpha_2$. When $\alpha_1\neq\alpha_2$, the optimization problem can be solved in a similar way and is omitted due to page limit. First, we rewrite  $c_{1}\left(U,T_{1},T_{2}\right)$ in \eqref{eq:c1} as $c_{1}\left(U,T_{1},T_{2}\right)=\frac{\pi\lambda_{1}}{\mathcal{A}_{1}}\sum_{u=0}^{U}{\rm Pr}\left(u_{{\rm IN},0}=u\right)f(u)$, where $f(u)$ denotes the expression after ${\rm Pr}\left(u_{{\rm IN},0}=u\right)$ in \eqref{eq:c1}. It can be easily verified  that $f(u)$ is a decreasing function of $u$. By {\em Lemma \ref{lem:pmf_uIN0}}, we have $c_{1}\left(U,T_{1},T_{2}\right)=\frac{\pi\lambda_{1}}{\mathcal{A}_{1}}\left(\sum_{u=0}^{U}{\rm Pr}\left(K_0=u\right)f(u)+\sum_{k=U+1}^{\infty}{\rm Pr}\left(K_0=k\right)f(U)\right)$.
Thus, we have $c_{1}\left(U+1,T_{1},T_{2}\right)-c_{1}\left(U,T_{1},T_{2}\right)=\frac{\pi\lambda_{1}}{\mathcal{A}_{1}}\left(f(U+1)-f(U)\right)\sum_{k=U+1}^{\infty}{\rm Pr}\left(K_0=k\right)<0$.
Therefore, we can show $U^{*}(\beta,T_{1},T_{2})=0$.


\begin{scriptsize}

\end{scriptsize}

\end{document}